\documentclass[11pt]{amsart}
\usepackage[a4paper,margin=2cm]{geometry}

\usepackage{graphicx}
\usepackage[T1]{fontenc}
\usepackage{lmodern}
\usepackage[breaklinks]{hyperref}

\usepackage[ruled]{algorithm}
\usepackage[noend]{algpseudocode}
\usepackage{enumerate,hyperref,verbatim}

\DeclareMathOperator{\PC}{h}
\DeclareMathOperator{\BC}{h}

\newcommand{\CPref}[1]{(\hyperref[CP#1]{CP#1})}
\newcommand{\Mref}[1]{(\hyperref[M#1]{M#1})}

\newcommand{\bound}{\ensuremath{N}}

\newcommand{\presentletters}{\textnormal{\textsl{Letters}}}
\newcommand{\pairs}{\textnormal{\textsl{Pairs}}}
\newcommand{\crossing}{\textnormal{\textsl{Crossing}}}
\newcommand{\ncrossing}{\textnormal{\textsl{Non-Crossing}}}

\newcommand{\algradix}{\algofont{RadixSort}}
\newcommand{\algpop}{\algofont{Pop}}
\newcommand{\algtestsimple}{\algofont{TestSimpleSolution}}
\newcommand{\algtest}{\algofont{TestSolution}}
\newcommand{\algprefsuff}{\algofont{CutPrefSuff}}
\newcommand{\algpreproc}{\algofont{PreProc}}

\newcommand{\first}[1][{[X]}]{\ensuremath{\textnormal{\textsl{first}}#1}}
\newcommand{\last}[1][{[X]}]{\ensuremath{\textnormal{\textsl{last}}#1}}

\newcommand{\algpair}{\algofont{PairCompNCr}}
\newcommand{\algpairc}{\algofont{PairComp}}

\newcommand{\algsonevar}{\algofont{OneVarWordEq}}
\newcommand{\algsolveeq}{\algofont{WordEqSat}}

\newcommand{\algblocks}{\algofont{BlockCompNCr}}
\newcommand{\algblocksc}{\algofont{BlockComp}}

\newcommand{\algofont}[1]{\textnormal{\textsc \selectfont\sffamily #1}}

\newcommand{\sol}[1]{\ensuremath{S(#1)}}
\newcommand{\solution}{\ensuremath{S}}

\newcommand{\letters}{\ensuremath{\Gamma}}

\newcommand{\NPclass}{{\sf NP}}
\newcommand{\Pclass}{{\sf P}}
\newcommand{\PSPACE}{{\sf PSPACE}}

\newcommand{\NEXPTIMEclass}{{\sf NEXPTIME}}

\newcommand{\poly}{{\sf {poly}}}

\newtheorem{theorem}{Theorem}
\newtheorem{lemma}{Lemma}

\theoremstyle{remark}

\theoremstyle{definition}

\providecommand{\Ocomp}{\mathcal{O}}

\newcommand{\twodots}{\mathinner{\ldotp\ldotp}}

\title{One-variable word equations in linear time}

\author[A. Je\.z]{Artur Je\.z}
\address{
Max Planck Institute f\"ur informatik,
Saarbr\"ucken, Germany\\
on leave from:
Institute of Computer Science, 
University of Wroc\l{}aw,
Wroc\l{}aw, Poland
}

\thanks{This work was supported by Alexander von Humboldt Foundation.}

\keywords{Word equations, string unification, one variable equations}

\subjclass{
F.2.2 [Analysis of Algorithms and Problem Complexity]:
Nonnumerical Algorithms and Problems,
F.4.3 [Mathematical Logic and Formal Languages]: Formal
Languages}

\begin{document}

\begin{abstract}
In this paper we consider word equations with one variable (and arbitrary many occurrences of it).
A recent technique of recompression, which is applicable to general word equations,
is shown to be suitable also in this case.
While in general case the recompression is non-deterministic it determinises in case of one variable and the obtained running time
is $\Ocomp(n + \#_X \log n)$, where $\#_X$ is the number of occurrences of the variable in the equation.
This matches the previously-best algorithm due to D\k{a}browski and Plandowski.
Then, using a couple of heuristics as well as more detailed time analysis, the running time is lowered to $\Ocomp(n)$ in the RAM model.
Unfortunately, no new properties of solutions are shown.
\end{abstract}

\maketitle

\section{Introduction}

\subsection{Word equations}
The problem of satisfiability of word equations was considered as one of the most intriguing in computer science
and its study was initiated by Markow already in the '50.
The first algorithm for it was given by Makanin~\cite{makanin}, despite earlier conjectures that the problem is undecidable.
The proposed solution was very complicated in terms of proof-length, algorithm and computational complexity.
It was improved several times, however, no essentially different approach was proposed for over two decades.

An alternative algorithm was proposed by Plandowski and Rytter~\cite{PlandowskiICALP},
who shoved that each minimal solution of a word equation is exponentially compressible,
in the sense that for a word equation of size $n$ and minimal solution of size $N$
the LZ77 (a popular practical standard of compression) representation of the minimal solution
is polynomial in $n$ and $\log N$.
Hence a simple non-deterministic algorithm that guesses a compressed representation of a solution
and verifies the guess has running time polynomial in $n$ and $\log N$.
However, at that time the only bound on $N$ followed from Makanin's work (with further improvements)
and it was triply exponential in $n$.

Soon after Plandowski showed, using novel factorisations, that $N$ is at most doubly exponential~\cite{PlandowskiSTOC},
showing that satisfiability of word equations is in \NEXPTIMEclass.
Exploiting the interplay between factorisations and compression he improved the algorithm so that it worked in \PSPACE~\cite{PlandowskiFOCS}.

Producing a description of all solutions of a word equation, even when a procedure for verification of its satisfiability is known,
proved to be also a non-trivial task. Still, it is also possible to do this in \PSPACE~\cite{PlandowskiSTOC2},
though insight and non-trivial modifications to the earlier procedure are needed.

On the other hand, it is only known that the satisfiability of word equations is \NPclass-hard.

\subsubsection{Two variables}
Since in general the problem is outside \Pclass, it was investigated, whether some subclass is feasible,
with a restriction on the number of variables being a natural candidate.
It was shown by Charatonik and Pacholski~\cite{CharatonikPacholski} that indeed, when only two variables are allowed
(though with arbitrarily many occurrences), the satisfiability can be verified in deterministic polynomial time.
The degree of the polynomial was very high, though.
This was improved over the years and the best known algorithm is by D\k{a}browski and Plandowski~\cite{twovarnew}
and it runs in $\Ocomp(n^5)$ and returns a description of all solutions.

\subsubsection{One variable}
Clearly, the case of equations with only one variable is in \Pclass.
Constructing a cubic algorithm is almost trivial, small improvements are needed to guarantee a quadratic running time.
First non-trivial bound was given by Obono, Goralcik and Maksimenko,
who devised an $\Ocomp(n \log n)$ algorithm~\cite{onevarfirst}.
This was improved by D\k{a}browksi and Plandowski~\cite{onevarold} to $\Ocomp(n + \#_X \log n)$,
where $\#_X$ is the number of occurrences of the variable in the equation.
Furthermore they showed that there are at most $\Ocomp(\log n)$ distinct solutions and at most one infinite family of solutions.
Intuitively, the $\Ocomp(\#_X \log n)$ summand in the running time comes from the time needed to find and test these $\Ocomp(\log n)$ solutions.

This work was not completely model-independent, as it assumed that the alphabet $\letters$ is finite or that
it can be identified with numbers.
A more general solution was presented by Laine and Plandowski~\cite{onevarnew},
who improved the bound on the number of solutions to $\Ocomp(\log \#_X)$ 
(plus the infinite family) and gave an $\Ocomp(n \log \#_X)$ algorithm that runs in a pointer machine model
(i.e.\ letters can be only compared and no arithmetical operations on them are allowed);
roughly one candidate for the solution is found and tested in linear time.

\subsection{Recompression}
Recently, the author proposed a new technique of \emph{recompression} based on previous techniques of Mehlhorn et.~al\cite{MehlhornSU97}
(for dynamic text equality testing), Lohrey and Mathissen~\cite{LohreySLP} (for fully compressed membership problem for NFAs)
and Sakamoto~\cite{SLPaproxSakamoto} (for construction of the smallest grammar for the input text).
This method was successfully applied to various problems related to grammar-compressed strings~\cite{fullyNFA,FCPM,grammar}.
Unexpectedly, this approach was also applicable to word equations,
in which case alternative proofs of many known algorithmic results were obtained using a unified approach~\cite{wordequations}.

The technique is based on iterative application of two replacement schemes performed on the text $t$:
\begin{description}
	\item[pair compression of $ab$]
	For two different letters $a$, $b$ such that substring $ab$ occurs in $t$
	replace each of $ab$ in $t$ by a fresh letter $c$.
	\item[$a$'s block compression]
	For each maximal block $a^\ell$, where $a$ is a letter and $\ell >1$,
	that occurs in $t$, replace all $a^\ell$s in $t$ by a fresh letter $a_\ell$.
\end{description}

In one phase, pair compression (block compression) is applied to all pairs (blocks, respectively)
that occurred at the beginning of this phase.
Ideally, each letter is then compressed and so the length of $t$ halves,
in a worst-case scenario during one phase $t$ is still shortened by a constant factor.

The surprising property is that such a schema can be efficiently applied to grammar-compressed data~\cite{fullyNFA,FCPM}
or to text given in an implicit way, i.e.\ as a solution of a word equation~\cite{wordequations}.
In order to do so, local changes of the variables (or nonterminals) are needed: $X$ is replaced with $a^\ell X$ (or $Xa^\ell$),
where $a^\ell$ is prefix (suffix, respectively) of the substitution for $X$.
In this way the solution that substitutes $a^\ell w$ (or $wa^\ell$, respectively) for $X$ is implicitly replaced with one that substitutes $w$.

\subsubsection{Recompression and one-variable equations}
Clearly, as the recompression approach works for general word equations, it can be applied also to restricted subclasses.
However, while in case of word equations it heavily relies on the nondeterminism,
when restricted to instances with one variable it can be easily determinised;
Section~\ref{sec:prelim} recalls the main notions of word equations and recompression.
Furthermore, a fairly natural implementation has $\Ocomp(n + \#_X \log n)$ running time,
so the same as the D\k{a}browski and Plandowski algorithm~\cite{onevarold};
this is presented in Section~\ref{sec:main algorithm}.
Furthermore adding a few heuristics, data structures as well as applying a more sophisticated analysis yields a linear running time,
this is described in Section~\ref{sec:faster}.

\subsection{Outline of the algorithm}
In this paper we present an algorithm for one-variable equation based on the recompression.
It also provides a compact description of all solutions of such an equation.
Intuitively: when pair compression is applied, say $ab$ is replaced by $c$
(assuming it \emph{can} be applied)
then there is a one-to-one correspondence of the solutions before and after the compression,
this correspondence is simply an exchange of all $ab$s by $c$s and vice-versa.
The same applies to the block compression.
On the other hand, the modification of $X$ can lead to loss of solutions
(note that for technical reasons we do note consider the solution $\sol X = \epsilon$):
when $X$ is to be replaced with $a^\ell X$ the new equation has corresponding solutions
for \solution{} \emph{other than} $\sol X = a^\ell$.
So before the replacement, it is tested whether $\sol X = a^\ell$ is a solution
and if so, it is reported.
The test itself is simple: both sides of the equation are read and their values under substitution
$\sol X = a^\ell$ are created on the fly and compared symbol by symbol,
until a mismatch is found or both strings end.

It is easy to implement the recompression so that one phase takes linear time.
Then the cost can be distributed to explicit words between the variables, each of them is charged proportionally to its length.
Consider such a string $w$, 
if it is long enough, its length decreases by a constant factor in one phase, see Lemma~\ref{lem:reducing length}.
Thus, the cost of compressing this fragment and testing a solution can be charged to the lost length.
However, this is not true when $w$ is short and the $\#_X \log n$ summand in the running time comes from
bounding the running time for such `short' strings.

In Section~\ref{sec:faster} it is shown that using a couple of heuristics as well as more involved analysis
the running time can be lowered to $\Ocomp(n)$.
The mentioned heuristics are as follows:
\begin{itemize}
	\item The problematic `short' words between the variables need to be substrings of the `long' words,
	this allows smaller storage size and consequently faster compression.
	\item When we compare $X w_1 X w_2 \ldots w_m X$ from one side of the equation with its copy
	(i.e.\ another occurrence $X w_1 X w_2 \ldots w_m X$) on the other side,
	we make such a comparison in $\Ocomp(1)$ time (using suffix arrays).
	\item $(\sol{X}u)^m$ and $(\sol{X}u')^{m'}$ (perhaps offsetted) are compared 
	in $\Ocomp(|u| + |u'|)$ time instead of naive $\Ocomp(m \cdot |u| + m' \cdot |u'|)$,
	using simple facts from combinatorics on words.
\end{itemize}
Furthermore a more insightful analysis shows that problematic `short' words in the equation
can be used to invalidate several candidate solutions fast,
even before a mismatch in the equation is found during the testing.
This allows a tighter estimation of the time spent on testing the solutions.

\subsubsection*{A note on the computational model}
In order to perform the recompression efficiently, some algorithm for grouping pairs is needed.
When we can identify the symbols in $\letters$ with consecutive numbers,
the grouping can be done using \algradix{} in linear time.
Thus, all (efficient) applications of recompression technique make such an assumption.
On the other hand, the second of the mentioned heuristics craves checking string equality in constant time,
to this end a suffix array~\cite{suffixarrays} plus a structure for answering \emph{longest common prefix query}
(lcp) is employed~\cite{lcpsuffixarrays} on which we use range minimum queries~\cite{rmq}.
The last structure needs the flexibility of the RAM model to run in $\Ocomp(1)$ time per query.

\section{Preliminaries}
\label{sec:prelim}

\subsection{One-variable equations}
\label{sec:onevar}
A \emph{word equation} with one variable over the alphabet $\letters$ and variable $X$
is `$\mathcal A = \mathcal B$', where $\mathcal A, \mathcal B \in (\letters \cup \{ X\})^*$.
During the run of algorithm \algsolveeq{} we introduce new letters into \letters,
but no new variable is introduced.
In this paper we shall consider only equations with one variable.

Without loss of generality in a word equation $\mathcal A = \mathcal B$ one of $\mathcal A$ and $\mathcal B$ begin
with a variable and the other with a letter:
\begin{itemize}
	\item if they both begins with the same symbol (be it letter or nonterminal),
	we can remove this symbol from them, without affecting the set of solutions;
	\item if they begin with different letters, this equation clearly has
	no solution.
\end{itemize}
The same applies to the last symbols of $U$ and $V$.
Thus, in the following we assume that the equation is of the form
\begin{equation}
\label{eq:univariate}
A_0 X A_1 \dots A_{n_{\mathcal A}-1} X A_{n_{\mathcal A}} = 
X B_1 \dots B_{n_{\mathcal B}-1} X B_{n_{\mathcal B}} \enspace ,
\end{equation}
where $A_i, B_i \in \letters^*$ (we call them \emph{words} or \emph{explicit words})
and $n_{\mathcal A}$ ($n_{\mathcal B}$) denote the number of $X$ occurrences in $\mathcal A$
($\mathcal B$, respectively).
Note that exactly one of $A_{n_{\mathcal A}}$, $B_{n_{\mathcal B}}$ is empty and $A_0$ is non-empty.
If this condition is violated for any reason, we greedily repair it by cutting identical letters (or variables) from both sides of the equation.
We say that $A_0$ is the \emph{first} word of the equation and the non-empty of $A_{n_{\mathcal A}}$ and $B_{n_{\mathcal B}}$
is the \emph{last word}.
We additionally assume that none of words $A_i, B_j$ is empty. We later (after Lemma~\ref{lem:pop preserves solutions})
justify why this is indeed without loss of generality.

A \emph{substitution} \solution{} assigns a string to $X$, we expand it to $(X \cup \letters)^*$
with an obvious meaning. A \emph{solution} is a substitution such that $\sol {\mathcal A} = \sol{\mathcal B}$.
For a given equation $\mathcal A = \mathcal B$ we are looking for a description of all its solutions.
We treat the empty solution $\sol X = \epsilon$ in a special way and always assume that $\sol X \neq \epsilon$.

Note that if $\sol X \neq \epsilon$, then using~\eqref{eq:univariate}
we can always determine the first ($a$) and last ($b$) letter of $\sol X$ in $\Ocomp(1)$ time.
In fact, we can determine the length of the $a$-prefix and $b$-suffix of \sol X.

\begin{lemma}
\label{lem:a prefix}
For every solution \solution{} of a word equation the first letter of \sol X is the first letter of $A_0$
and the last the last letter of $A_{n_{ \mathcal A}}$ or $B_{n_{\mathcal B}}$ (whichever is non-empty).

If $A_0 \in a^+$ then $\sol X \in a^+$ for each solution \solution{} of $\mathcal A = \mathcal B$.

If the first letter of $A_0$ is $a$ and $A_0 \notin a^+$
then there is at most one solution $\sol X \in a^+$,
existence of such a solution can be tested (and its length returned) in $\Ocomp(|\mathcal A| + |\mathcal B|)$ time.
Furthermore, for $\sol X \notin a^+$ the lengths of the $a$-prefixes of \sol X and $A_0$ are the same.
\end{lemma}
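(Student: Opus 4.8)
The plan is to exploit the structure~\eqref{eq:univariate} directly. For the first claim, observe that the left-hand side $A_0 X A_1 \cdots$ begins with the word $A_0$, which is non-empty, so $\sol{\mathcal A}$ begins with the first letter of $A_0$. On the other hand the right-hand side begins with $X$, so if $\sol X \neq \epsilon$ then $\sol{\mathcal B}$ begins with the first letter of $\sol X$. Since $\sol{\mathcal A} = \sol{\mathcal B}$, these two letters coincide. The statement about the last letter is symmetric, using that exactly one of $A_{n_{\mathcal A}}$, $B_{n_{\mathcal B}}$ is non-empty and that side ends with that word while the other ends with $X$.

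For the case $A_0 \in a^+$: by the first claim $\sol X$ starts with $a$. Write $\sol X = a^k w$ where $a^k$ is the maximal $a$-prefix of $\sol X$, so either $w = \epsilon$ or $w$ starts with a letter $\neq a$. First I would compare the $a$-prefixes on both sides of~\eqref{eq:univariate}. The left side begins $A_0 X = a^{|A_0|} a^k w \cdots$, giving an $a$-prefix of length $|A_0| + k$ (or longer, if $w = \epsilon$ and the next block continues with $a$'s — but then one compares further). The right side begins $X = a^k w \cdots$, so its $a$-prefix has length exactly $k$ unless $w = \epsilon$. If $w \neq \epsilon$, then the right-hand $a$-prefix has length $k < |A_0| + k$, contradicting equality of the two words; hence $w = \epsilon$, i.e. $\sol X \in a^+$. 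The key step here is simply matching the maximal power of $a$ at the front of each side.

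For the remaining case, the first letter of $A_0$ is $a$ but $A_0 \notin a^+$, so $A_0 = a^p u$ with $p \geq 1$ and $u$ beginning with a letter $c \neq a$. Suppose $\sol X \in a^+$, say $\sol X = a^m$ with $m \geq 1$. Then the left side has $a$-prefix $a^{p+m}$ followed by $c$, while the right side $\sol X = a^m \cdots$ has $a$-prefix of length $m$ (the word $B_1$, if present, or the next $X$, begins after position $m$ — but actually I must be careful: the right side is $X B_1 \cdots = a^m B_1 \cdots$, and its $a$-prefix is $m$ plus the length of the $a$-prefix of $B_1$, if $n_{\mathcal B} \geq 1$, or $m$ if $\mathcal B = X B_{n_{\mathcal B}}$ with $B_{n_{\mathcal B}}$ handled as the last word). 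Matching $a$-prefixes forces a linear equation in $m$ determined by $p$ and the $a$-prefix length of the relevant word, which has at most one solution $m$; this $m$ can be computed in $\Ocomp(1)$ time, and then one verifies in $\Ocomp(|\mathcal A| + |\mathcal B|)$ time whether $\sol X = a^m$ actually satisfies the whole equation by evaluating both sides letter by letter. Finally, if $\sol X \notin a^+$, write $\sol X = a^q v$ with $v$ starting with a letter $\neq a$ and $q \geq 1$ (using the first claim); matching the $a$-prefix at the front, the left side gives $p + q$ and the right side gives $q$ plus the $a$-prefix length of $B_1$ (or of $\sol X$ again if it reoccurs before any letter) — wait, the cleanest argument: the right side starts with $\sol X$ itself, whose $a$-prefix is $a^q$ followed by a non-$a$ letter, so the global $a$-prefix of $\sol{\mathcal B}$ is exactly $q$; the global $a$-prefix of $\sol{\mathcal A}$ is $p + q$ if $q < $ well — rather, $\sol{\mathcal A}$ starts $a^p a^q v \cdots$ so its $a$-prefix is $p + q$; equality then would give $p + q = q$, impossible since $p \geq 1$. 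The resolution is that the maximal $a$-prefix of $A_0$ equals the maximal $a$-prefix of $\sol X$: indeed comparing $\sol{\mathcal A} = a^{p'} \cdots$ (where $p'$ is the $a$-prefix length of $A_0$, continuing into $X$ only if $u = \epsilon$, excluded) — since $u \neq \epsilon$ the $a$-prefix of $\sol{\mathcal A}$ is exactly $p$, and that of $\sol{\mathcal B}$ is exactly $q$, so $p = q$. Thus the $a$-prefixes of $\sol X$ and $A_0$ have the same length.

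The main obstacle I anticipate is bookkeeping the boundary cases cleanly: when $n_{\mathcal B} = 0$ (so the right side is just $X B_{n_{\mathcal B}}$), when the word immediately following the first $X$ consists only of $a$'s, and the degenerate situation $|A_0| = 0$ which is excluded by the normalisation after~\eqref{eq:univariate}. Handling these uniformly — by always speaking of the maximal $a$-prefix of the concatenation of the first few symbols, and invoking that $A_0$ is non-empty — is the only delicate part; the arithmetic itself is a single linear equation.
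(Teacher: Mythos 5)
Your overall route is the paper's: read off the first and last letters directly from the form~\eqref{eq:univariate}, then compare the lengths of the maximal $a$-prefixes of the two sides. The first claim, the case $A_0 \in a^+$, and the final claim (that for $\sol X \notin a^+$ the $a$-prefixes of $\sol X$ and $A_0$ have equal length) are argued correctly --- the last one after you catch and repair your own slip mid-paragraph.

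In the middle case, however ($A_0 = a^p u$ with $u$ beginning with $c \neq a$, testing $\sol X = a^m$), the computation as written is wrong and would not give uniqueness. The left-hand side is $\sol{\mathcal A} = A_0\,\sol X\,A_1\cdots = a^p u\, a^m \cdots$, so its maximal $a$-prefix has length exactly $p$, \emph{not} $p+m$ as you state: $A_0$ precedes $X$, and $u$ already breaks the run of $a$'s before $\sol X$ is ever reached. With your figures the ``linear equation'' would read $p + m = m + s$, in which $m$ cancels, so it has either no solution or infinitely many --- you would not obtain ``at most one candidate $m$.'' The right-hand count also needs to be more global than ``the $a$-prefix of $B_1$'': if $B_1,\ldots,B_{i-1}$ all lie in $a^+$, the run of $a$'s continues through further occurrences of $X$, so one must locate the first non-$a$ letter of $\mathcal B$, say in $B_i$ preceded by $\ell_B$ explicit letters $a$, giving an $a$-prefix of length $i\cdot m + \ell_B$. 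The equation $p = i\cdot m + \ell_B$ with $i \geq 1$ is what pins down the unique candidate $m = (p-\ell_B)/i$; and if $\mathcal B$ contains no non-$a$ explicit letter at all, there is no solution in $a^+$. These are precisely the points the paper's proof supplies, and your plan needs them to go through; the verification step (evaluating both sides on the fly in $\Ocomp(|\mathcal A|+|\mathcal B|)$ time) matches the paper.
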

It is later shown that finding all solutions from $a^+$ can be done in linear time, see Lemma~\ref{lem:a* solution}.
\begin{proof}
Concerning the first claim, observe that the first letter of $\sol {\mathcal A}$ is the first letter of $A_0$,
while the first letter of $\sol {\mathcal B}$ is the first letter of \sol X, hence those letters are equal.
The same applies to the last letter of \sol X and the last letter of $A_{n_{\mathcal A}}$ or $B_{n_{\mathcal B}}$,
whichever of them is non-empty.

Consider the case when $A_0 \in a^+$ and suppose that $\sol X \notin a^*$, let $\ell \geq 0$ be the length
of the $a$-prefix of \sol X.
The length of the $a$-prefix of \sol {\mathcal A} is then $|A_0| + \ell > \ell$,
which is the length of the $a$-prefix of \sol {\mathcal B}, contradiction.
Hence $\sol X \in a^+$.

Consider now the case when $A_0$ begins with $a$ but $A_0 \notin a^+$, let its $a$-prefix has length $\ell_A$.
Consider $\sol X \in a^+$, say $\sol X = a^\ell$.
Let the first letter other than $a$ in $\mathcal B$ be the $\ell_{B}+1$ letter in $\mathcal B$
and let it be in explicit word $B_i$.
If there is no such $B_i$ then there is no solution $\sol X \in a^+$, as then \sol{\mathcal B}
consists only of $a$s, which is not true for \sol{\mathcal A}.
The length of the $a$-prefix of \sol {\mathcal A} is $\ell_A$,
while the length of the $a$-prefix of \sol {\mathcal B} is $\ell_B + i \cdot \ell$.
Those two need to be equal, so $\ell_A = \ell_B + i \cdot \ell $ and consequently
$\ell = \frac{\ell_A - \ell_B}{i}$, so this is the only candidate for the solution.

It is easy to verify whether $\sol X = a^\ell$ is a solution for a single $\ell$ in linear time.
It is enough to compare $\sol {\mathcal A}$ and $\sol {\mathcal B}$ letter by letter,
note that they can be created on the fly while reading ${\mathcal A}$ and ${\mathcal B}$.
Each such comparison consumes one symbol from ${\mathcal A}$ and ${\mathcal B}$
(note that if we compare a suffix of \sol X, i.e.\ some $a^{\ell'}$ for $\ell'<\ell$,
with $\sol X = a^\ell$ we simply remove $a^{\ell'}$ from both those strings).
So the running time is linear.

Lastly, consider $\sol X \notin a^*$.
Then the $a$-prefix of \sol{\mathcal A} has length $\ell_A$ and as $\sol X \notin a^+$,
the $a$-prefix of \sol {\mathcal B} is the same as the $a$-prefix of \sol X,
which consequently has length $\ell_A$.
\end{proof}
Symmetric version of Lemma~\ref{lem:a prefix} holds for the suffix of \sol X.

By $\algtestsimple(a)$ we denote a procedure, described in Lemma~\ref{lem:a prefix},
that for $A_0 \notin a^*$ establishes the unique possible solution $\sol X = a^\ell$,
tests it and returns $\ell$ if this indeed is a solution.

\subsection{Representation of solutions}
Consider any solution \solution{} of $\mathcal A = \mathcal B$.
We claim that \sol X is uniquely determined by its length and so when describing solution of $\mathcal A = \mathcal B$
it is enough to give their lengths.

\begin{lemma}
	\label{lem: solution form}
Each solution \solution{} of equation of the form~\eqref{eq:univariate} is of the form $\sol X = (A_0)^k A$,
where $A$ is a prefix of $A_0$ and $k \geq 0$. In particular, it is uniquely defined by its length. 
\end{lemma}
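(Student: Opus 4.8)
The plan is to reason directly about prefixes of $\sol{\mathcal A}$ and $\sol{\mathcal B}$ and exploit the fact, established in Lemma~\ref{lem:a prefix}, that the first letter of $\sol X$ equals the first letter of $A_0$. First I would observe that since $\mathcal B$ starts with $X$, the word $\sol X$ is a prefix of $\sol{\mathcal B} = \sol{\mathcal A}$, and since $\mathcal A$ starts with $A_0 X$, the word $A_0$ is also a prefix of $\sol{\mathcal A}$. Hence $A_0$ and $\sol X$ are prefixes of a common word, so one is a prefix of the other. If $|\sol X| \le |A_0|$ then $\sol X$ is a prefix of $A_0$ and we are done with $k = 0$ and $A = \sol X$. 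Otherwise $A_0$ is a proper prefix of $\sol X$, and I would write $\sol X = A_0 w$ for some nonempty $w$.

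Next I would set up an induction (equivalently, a minimal-counterexample or strong-induction-on-$|\sol X|$ argument) to peel off copies of $A_0$. The key point is that $w$ itself must again be a prefix of $A_0$ or have $A_0$ as a prefix. To see this, note $\sol X = A_0 w$ is a prefix of $\sol{\mathcal A}$, and $\sol{\mathcal A}$ begins with $A_0 X = A_0 A_0 w$; comparing the two prefixes $A_0 w$ and $A_0 A_0 w$ of $\sol{\mathcal A}$ (both of which are genuine prefixes provided $|\sol X| \le |\sol{\mathcal A}|$, which holds because $\sol X$ occurs as a factor — indeed as a prefix — of $\sol{\mathcal A}$) and cancelling the common $A_0$ shows $w$ is a prefix of $A_0 w$, hence a prefix of $A_0$ (if $|w| \le |A_0|$) or $A_0$ is a prefix of $w$ (if $|w| > |A_0|$). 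Wait — I must be slightly careful: cancelling $A_0$ from $A_0 w$ and $A_0 A_0 w$ shows $w$ is a prefix of $A_0 w$, which is automatic and gives nothing. The cleaner route is: $\sol X$ is a prefix of $\sol{\mathcal A}$ and $A_0$ is a prefix of $\sol X$, so $A_0^{-1}\sol X = w$ is a prefix of $A_0^{-1}\sol{\mathcal A}$; but $A_0^{-1}\sol{\mathcal A}$ starts with $X$, i.e.\ with $\sol X$, so $w$ is a prefix of $\sol X = A_0 w$, and since $|w| < |\sol X|$ strictly, repeating the argument (formally: the substitution $S'$ with $\sol[S']{X} = w$ — no, that need not be a solution) — instead I would just iterate the relation "$w$ is a prefix of $A_0 w$" together with $|A_0 w| = |\sol X|$: combined with $A_0$ being a prefix of $\sol X$, a standard periodicity fact gives that $\sol X$ is a prefix of $A_0^\omega$, hence $\sol X = A_0^k A$ with $A$ a proper prefix of $A_0$ and $k = \lfloor |\sol X| / |A_0| \rfloor \ge 1$.

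I would therefore structure the final write-up around the single clean claim: \emph{$\sol X$ is a prefix of the infinite word $A_0^\omega$}. This follows because $A_0$ is a prefix of $\sol{\mathcal A}$ and $\sol X$ is a prefix of $\sol{\mathcal A}$, while $\sol{\mathcal A} = A_0 \sol X A_1 \cdots$ forces, by an immediate induction on the number of leading $A_0$-blocks one can strip, that every prefix of $\sol{\mathcal A}$ of length a multiple of $|A_0|$ is a power of $A_0$; since $\sol X$ is such a prefix up to its last incomplete block, $\sol X = A_0^k A$ with $A$ a prefix of $A_0$. Uniqueness by length is then immediate: the formula $\sol X = A_0^{\lfloor \ell/|A_0|\rfloor}\,A_0[1\twodots \ell \bmod |A_0|]$ is completely determined by $\ell = |\sol X|$ and the fixed word $A_0$.

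The main obstacle, and the only place needing genuine care, is justifying that $\sol X$ really is long enough to "see" the repeated $A_0$ pattern — i.e.\ handling the interaction between $|\sol X|$ and $|A_0|$ and making the induction bottom out correctly when $\sol X$ is shorter than a full copy of $A_0$. Concretely, the induction hypothesis should be "for all solutions with $|\sol X| < m$ the claim holds"; in the inductive step, if $|\sol X| \le |A_0|$ we are done directly, and otherwise we extract one copy of $A_0$ and must argue the remaining word $w$ behaves like the variable's value in a (conceptually) smaller instance. Rather than literally producing a new equation, it is cleaner to keep everything inside the original $\sol{\mathcal A}$: $w$ is a prefix of $A_0^{-1}\sol{\mathcal A}$, which begins with $\sol X$, so $w$ is a prefix of $\sol X$ with $|w| = |\sol X| - |A_0| < |\sol X|$, and now the relation "$w$ is a prefix of $\sol X$ and $A_0$ is a prefix of $\sol X$" lets Fine and Wilf or a one-line direct argument conclude $\sol X \in \mathrm{Pref}(A_0^\omega)$. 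I expect the whole proof to be under half a page once this periodicity bookkeeping is pinned down.
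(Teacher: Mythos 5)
Your proposal is correct and, once the false starts are stripped away, reduces to exactly the paper's argument: since $\sol{\mathcal B}$ begins with $\sol X$ and $\sol{\mathcal A}$ begins with $A_0\sol X$, the word $\sol X$ is a prefix of $A_0\sol X$, hence has period $|A_0|$ and is a prefix of $A_0^\omega$. The induction/minimal-counterexample scaffolding and the worry about $\sol X$ being ``long enough'' are unnecessary --- the one-line periodicity observation already covers all lengths --- but nothing in your argument is wrong.
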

\begin{proof}
If $|\sol X| \leq |A_0|$ then $\sol X$ is a prefix of $A_0$, so \sol X is uniquely determined by its length.
When $|\sol X| > |A_0|$ then \sol {\mathcal A} begins with $A_0 \sol X$ while \sol {\mathcal B}
begins with $\sol X$ and thus \sol X has a period $A_0$. Consequently, it is of the form $A_0^kA$, where $A$ is a prefix of $A_0$.
\qedhere
\end{proof}

\subsubsection*{Weight}
Each letter in the current instance of our algorithm represents some string (in a compressed form) of the input equation,
we store its \emph{weight} which is the length of such a string.
Furthermore, when we replace $X$ with $a^\ell X$ (or $Xa^\ell$) we keep track of sum of weights of all letters removed so far
from $X$.
In this way, for each solution of the current equation we know what is the length of the corresponding solution of the original equation (it is the sum of weights of letters removed so far from $X$ and the weight of the current solution).
Therefore, in the following, we will not explain how we recreate the solutions of the original equation from the solution
of the current one.

\subsection{Recompression}
We recall here the technique of recompression~\cite{fullyNFA,FCPM,wordequations}, restating all important facts about it.
Note that in case of one variable many notions simplify.

\subsubsection{Preserving solutions}
All subprocedures of the presented algorithm should preserve solutions,
i.e.\ there should be a one-to-one correspondence between solution before and after
the application of the subprocedure.
However, when we replace $X$ with $a^\ell X$ (or $Xb^r$),
some solutions may be lost in the process and so they should be reported.
We formalise these notions.

We say that a subprocedure \emph{preserves solutions}
when given an equation $\mathcal A = \mathcal B$ it returns
$\mathcal A' = \mathcal B'$ such that for some strings $u$ and $v$ (calculated by the subprocedure)
\begin{itemize}
	\item some solutions of $\mathcal A = \mathcal B$ are reported by the subprocedure;
	\item for each unreported solution \solution{} of $\mathcal A = \mathcal B$ there is a solution
	$\solution'$ of $\mathcal A' = \mathcal B'$, where $\sol X = u \solution'(X) v$ and
	$\sol{\mathcal A} = u\solution' (\mathcal A')v$;
	\item for each solution $\solution'$ of $\mathcal A' = \mathcal B'$ the $\sol X = u \solution'(X) v$
	is an unreported solution of $\mathcal A = \mathcal B$.
\end{itemize}
The intuitive meaning of these conditions is that during transformation of the equation, either we report a solution
or the new equation has a corresponding solution (and no new `extra' solutions).

By $\PC_{c \to ab }(w)$ we denote the string obtained from $w$ by replacing each $c$ by $ab$,
which corresponds to the inverse of pair compression.
We say that a subprocedure \emph{implements pair compression} for $ab$,
if it satisfies the conditions for preserving solutions above,
but with $\sol X = u \PC_{c \to ab}(\solution'(X)) v$ and $\sol {\mathcal A} = u \PC_{c \to ab}(\solution'(\mathcal A')) v$
replacing $\sol X = u \solution'(X) v$ and $\sol{\mathcal A} = u \solution'(\mathcal A') v$.
Similarly, by $\BC_{a^\ell \to a^\ell}(w)$ we denote a string with letters $a^\ell$ replaced with blocks $a^\ell$
(note that this requires that we know, which letters `are' $a_\ell$ and what is the value of $\ell$, but this is always clear from the context)
and we say that a subprocedure \emph{implements blocks compression} for a letter $a$.
The intuitive meaning is the same as in case of preserving solutions: we not loose, nor gain any solutions.

Given an equation $\mathcal A = \mathcal B$, its solution \solution{}
and a pair $ab \in \letters^2$ occurring \sol U (or \sol V)
we say that this occurrence is
\emph{explicit}, if it comes from substring $ab$ of $\mathcal A$ (or $\mathcal B$, respectively);
\emph{implicit}, if it comes (wholly) from \sol X;
\emph{crossing} otherwise.
A pair is \emph{crossing} if it has a crossing occurrence
and \emph{non-crossing} otherwise.
Similar notion applies to maximal blocks of $a$s,
in which case we say that $a$ \emph{has a crossing block} or it \emph{has no crossing blocks}.
Alternatively, a pair $ab$ is crossing if $b$ is the first letter of \sol X and $aX$ occurs in the equation
or $a$ is the last letter of \sol X and $Xb$ occurs in the equation or $a$ is the last and $b$ the first letter of \sol X
and $XX$ occurs in the equation.

Unless explicitly stated, we consider crossing/non-crossing pairs $ab$ for $a \neq b$.
Note that as the first (last) letter of \sol X is the same for each \solution, see Lemma~\ref{lem:a prefix},
the definition of the crossing pair \emph{does not depend on the solution};
the same applies to crossing blocks.

When a pair $ab$ is non-crossing, its compression is easy, as it is enough to replace each
explicit $ab$ with a fresh letter $c$

\begin{algorithm}[H]
  \caption{$\algpair(a,b)$ Pair compression for a non-crossing pair \label{alg:pc}}
  \begin{algorithmic}[1]
  	\State let $c \in \letters$ be an unused letter
  	\State replace each explicit $ab$ in $\mathcal A$ and $\mathcal B$ by $c$
 \end{algorithmic}
\end{algorithm}

Similarly when none block of $a$ has a crossing occurrence,
the $a$'s blocks compression consists simply of replacing explicit $a$ blocks.

\begin{algorithm}[H]
  \caption{$\algblocks(a)$ Block compression for a letter $a$ with no crossing block
  \label{alg:ac}}
  \begin{algorithmic}[1]
  \For{each explicit $a$'s $\ell$-block occurring in $U$ or $V$ with $\ell>1$}
  		\State let $a_\ell \in \letters$ be an unused letter
		\State replace every explicit $a$'s $\ell$-block occurring
		in $\mathcal A$ or $\mathcal B$ by $a_\ell$
  \EndFor
  \end{algorithmic}
\end{algorithm}

\begin{lemma}
\label{lem:paircomp blockcomp}
Let $ab$ be a non-crossing pair then $\algpair(a,b)$  implements the pair compression for $ab$.
Let $a$ has no crossing blocks, then $\algblocks(a)$  implements the block compression for $a$.
\end{lemma}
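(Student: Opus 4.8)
The plan is to verify directly that each of the two algorithms satisfies the three bullet conditions in the definition of \emph{implementing pair/block compression}, with $u = v = \epsilon$ and the same equation transformed syntactically. The only genuinely new content beyond bookkeeping is the observation that non-crossing-ness is exactly what makes the explicit-only replacement correct, so I would structure the argument around that.

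First I would treat $\algpair(a,b)$. Fix a solution \solution{} of $\mathcal A = \mathcal B$. Since $ab$ is non-crossing, every occurrence of $ab$ in the common string $\sol{\mathcal A} = \sol{\mathcal B}$ is either explicit (coming from a substring $ab$ of $\mathcal A$ or of $\mathcal B$) or implicit (lying wholly inside some copy of \sol X); in particular, no occurrence straddles the boundary between an explicit word $A_i$ and an adjacent $X$, nor between two consecutive $X$'s. This means: the set of explicit $ab$-occurrences of $\mathcal A$ and the set of implicit $ab$-occurrences inside \sol X ``tile'' all $ab$-occurrences of $\sol{\mathcal A}$ consistently, and the same tiling is induced from the $\mathcal B$ side. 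Hence, if we let $\solution'$ be the substitution $\solution'(X) = \PC_{ab \to c}(\sol X)$ (replace each $ab$ inside \sol X by $c$) — which is well-defined because implicit occurrences do not overlap and do not interact with the endpoints of \sol X — then applying $\PC_{c \to ab}$ to $\solution'(\mathcal A')$ recovers $\sol{\mathcal A}$ exactly, and likewise on the $\mathcal B$ side; therefore $\solution'(\mathcal A') = \solution'(\mathcal B')$, so $\solution'$ is a solution of $\mathcal A' = \mathcal B'$. Conversely, given any solution $\solution'$ of $\mathcal A' = \mathcal B'$, the substitution $\sol X := \PC_{c \to ab}(\solution'(X))$ satisfies $\sol{\mathcal A} = \PC_{c\to ab}(\solution'(\mathcal A')) = \PC_{c\to ab}(\solution'(\mathcal B')) = \sol{\mathcal B}$ because $c$ is a fresh letter and $\PC_{c \to ab}$ is a homomorphism; and no solution is reported, so every solution is unreported, matching the required correspondence.

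For $\algblocks(a)$ the argument is parallel. Since $a$ has no crossing block, every maximal $a$-block in $\sol{\mathcal A}=\sol{\mathcal B}$ is either wholly explicit or wholly implicit: no maximal block spans an $A_i$–$X$ boundary, which is exactly the statement that the $a$-prefix/$a$-suffix of \sol X does not merge with adjacent explicit $a$'s (nor two $X$'s with each other). Thus the length $\ell$ of any maximal block is determined locally, the set of block-lengths occurring explicitly is the same whether we look from the $\mathcal A$ side or the $\mathcal B$ side, and the fresh letters $a_\ell$ introduced by the loop are consistent across the equation. Defining $\solution'(X)$ by replacing each maximal $a$-block $a^\ell$ (with $\ell>1$) inside \sol X by the corresponding $a_\ell$, we get $\BC_{a^\ell \to a^\ell}(\solution'(\mathcal A')) = \sol{\mathcal A}$ and similarly for $\mathcal B$, so $\solution'$ solves $\mathcal A' = \mathcal B'$; and conversely $\sol X := \BC_{a^\ell\to a^\ell}(\solution'(X))$ gives a solution of $\mathcal A = \mathcal B$, again with nothing reported. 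One small point I would be careful about: a maximal $a$-block of \sol{\mathcal A} of length $1$ is left untouched, and the non-crossing hypothesis guarantees such a lone $a$ cannot be adjacent (across a variable boundary) to another $a$ in the expanded word, so no block is accidentally split or missed.

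The main obstacle — really the only thing to get right — is the claim that non-crossing-ness forces every pair/block occurrence in the expanded solution to be cleanly classified as explicit or implicit with no interaction at variable boundaries, and that this classification is computed identically from either side of the equation. Once that is established, the verification of the three bullet conditions is a routine homomorphism computation, using that $c$ (resp. each $a_\ell$) is fresh so that $\PC_{c\to ab}$ (resp. $\BC_{a^\ell\to a^\ell}$) is an honest left inverse of the replacement performed on $\mathcal A$, $\mathcal B$, and on \sol X simultaneously. I would also remark that, since by Lemma~\ref{lem:a prefix} the first and last letters of \sol X — and by the preceding discussion the whole $a$-prefix and $a$-suffix — are the same for every solution, the notions ``crossing pair'' and ``crossing block'' do not depend on \solution, so the algorithms are well-defined on the equation alone, exactly as needed.
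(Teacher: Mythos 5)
Your proposal is correct and follows essentially the same route as the paper's proof: define $\solution'(X)$ by performing the compression inside \sol X, use the non-crossing hypothesis to argue that every occurrence in $\sol{\mathcal A}=\sol{\mathcal B}$ is cleanly explicit or implicit so the replacements tile consistently, and use freshness of $c$ (resp.\ $a_\ell$) to invert via $\PC_{c\to ab}$ for the converse direction. The paper's version is terser (it dispatches the block case with ``follows in the same way''), but the content is the same.
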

\begin{proof}
Consider first the case of \algpair.
Suppose that $\mathcal A = {\mathcal B}$ has a solution \solution.
Define $\solution'$: $\solution'(X)$ is equal to \sol X with each $ab$ replaced with $c$
(where $c$ is a new letter).
Consider $\sol {\mathcal A}$ and $\solution'(\mathcal A')$. Then $\solution'(\mathcal A')$ is obtained from \sol{\mathcal A}
by replacing each $ab$ with $c$ (as $a \neq b$ this is well-defined):
the explicit occurrences of $ab$ are replaced by $\algpair(a,b)$,
the implicit ones are replaced by the definition of $\solution'$ and by the assumption
there are no crossing occurrences.
The same applies to $\sol {\mathcal B}$ and $\solution'(\mathcal B')$,
hence $\solution'$ is a solution of $\mathcal A' = \mathcal B'$.

Since $c$ is a free letter, the \sol {\mathcal A} is obtained from $\solution'({\mathcal A}')$ by replacing each $c$ with $ab$,
the same applies to \sol X and $\solution'(X)$ as well as \sol {\mathcal B} and $\solution'(\mathcal B')$.
Hence $\sol {\mathcal A} = \PC_{c \to ab} (\solution'(\mathcal A)) = \PC_{c \to ab} (\solution'(\mathcal B)) = \sol {\mathcal B}$
and $\sol X = \PC_{c \to ab}(\solution'(X))$, as required by the definition of implementing the pair compression.

Lastly, for a solution $\solution'$ of $\mathcal A' = \mathcal B '$ take the corresponding \solution{}
defined as $\sol X = \PC_{c \to ab}(\solution'(X))$ (i.e.\ replacing each $c$ with $ab$ in $\solution'(X)$).
It can be easily shown that $\sol {\mathcal A} = \PC_{c \to ab}(\solution'(\mathcal A '))$
and $\sol {\mathcal B} = \PC_{c \to ab}(\solution'(\mathcal B '))$, thus \solution{} is a solution of $\mathcal A = \mathcal B$.

The proof for the block compression follows in the same way.
\end{proof}

The main idea of the recompression method is the way it deals with the crossing pairs:
imagine $ab$ is a crossing pair, this is because $\sol X = bw$ and $aX$
occurs in $\mathcal A = \mathcal B$ or $\sol X = wa$ and $bX$ occurs in it
(the remaining case, in which $\sol X = awb$ and $XX$ occurs in the equation is treated in the same way).
The cases are symmetric, so we deal only with the first one.
To `uncross' $ab$ in this case it is enough to `left-pop' $b$ from $X$:
replace each $X$ in the equation with $bX$ and implicitly change the solution to $\sol X = w$.
Note that before replacing $X$ with $aX$ we need to check, whether $\sol X = a$ is a solution,
as this solution cannot be represented in the new equation;
similar remark applies to replacing $X$ with $Xb$.

\begin{algorithm}[H]
  \caption{$\algpop(a,b)$ \label{alg:leftpop}}
  \begin{algorithmic}[1]
		\If{$b$ is the first letter of \sol X}  \label{guess first letter}
			\If{$\algtestsimple(b)$ returns $1$} \Comment{$\sol X = b$ is a solution}
				\State report solution $\sol X = b$ \label{first letter test solution}
			\EndIf
			\State replace each $X$ in $\mathcal A = \mathcal B$ by  $bX$ \label{leftpop}
			\Comment{Implicitly change $\sol X = bw$ to $\sol X = w$}
		\EndIf
		\If{$a$ is the last letter of \sol X}  \label{guess last letter}
			\If{$\algtestsimple(a)$ returns $1$} \Comment{$\sol X = a$ is a solution}
				\State report solution $\sol X = a$
			\EndIf
			\State replace each $X$ in $\mathcal A = \mathcal B$ by  $Xa$ \label{rightpop}
			\Comment{Implicitly change $\sol X = w'a$ to $\sol X = w'$}
		\EndIf
  \end{algorithmic}
\end{algorithm}

\begin{lemma}
\label{lem:pop preserves solutions}
$\algpop(a,b)$ preserves solutions and after its application the pair $ab$ is noncrossing.
\end{lemma}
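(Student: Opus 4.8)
The plan is to analyse the two independent \texttt{if}-blocks of $\algpop(a,b)$ separately, and for each one to verify the three bullet-point conditions from the definition of ``preserves solutions'', with the strings $u$ and $v$ being the popped prefixes/suffixes. Concretely, consider first the block guarded by ``$b$ is the first letter of $\sol X$''. By Lemma~\ref{lem:a prefix} this condition is a property of the equation alone, so we may genuinely decide it. If $\algtestsimple(b)$ returns $1$ then $\sol X = b$ is a solution, and it is reported; otherwise by Lemma~\ref{lem:a prefix} it is not a solution and nothing is lost by skipping it. For every remaining solution $\solution$ we have $\sol X = b w$ for some (possibly empty) $w$, and I would set $\solution'(X) = w$; then substituting $X \mapsto bX$ everywhere turns $\mathcal A = \mathcal B$ into $\mathcal A' = \mathcal B'$ with $\sol{\mathcal A} = b\,\solution'(\mathcal A')$ literally by rewriting each occurrence of $X$. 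Conversely, any solution $\solution'$ of the new equation gives back $\sol X = b\,\solution'(X)$, which solves the old one by the same substitution identity. This establishes ``preserves solutions'' with $u = b$, $v = \epsilon$ for this block; the symmetric argument handles the second block with $u = \epsilon$, $v = a$, and when both blocks fire the two substitutions compose, giving $u = b$, $v = a$ and $\sol X = b\,\solution'(X)\,a$.

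The one subtlety here is that popping $b$ on the left and then $a$ on the right must be consistent when $\sol X$ is a single letter or very short; this is exactly why the $\algtestsimple$ tests are placed \emph{before} the corresponding replacement. I would note that after reporting (or refuting) $\sol X = b$ and replacing $X$ by $bX$, the remaining solutions have $|\solution'(X)| = |\sol X| - 1$, and the only solution that could be ``lost'' at the second step, namely $\solution'(X) = a$ i.e.\ the old $\sol X = ba$, is precisely the one tested by the second $\algtestsimple(a)$ call (against the already-updated equation). So no solution is silently dropped, and none is double-counted.

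For the second assertion — that $ab$ is non-crossing after the call — I would use the alternative characterisation of crossing pairs stated just before Algorithm~\ref{alg:pc}: $ab$ is crossing iff ($b$ is the first letter of $\sol X$ and $aX$ occurs in the equation), or ($a$ is the last letter of $\sol X$ and $Xb$ occurs), or ($a$ is last, $b$ is first, and $XX$ occurs). Since we consider $a \neq b$ here, I would check that each source of crossing has been removed. If $b$ was the first letter of $\sol X$, then after the replacement $X \mapsto bX$ every occurrence of $X$ is immediately preceded by $b$, and $\solution'(X)$ now starts with whatever the second letter of the old $\sol X$ was; I must argue this new first letter is not $a$ unless the relevant occurrence has already been made explicit — more carefully, the point is that the leftmost letter of $X$'s substitution is no longer $b$, so the first crossing condition fails for the \emph{updated} solution, and likewise every former $aX$ has become $abX$ so the $ab$ there is now explicit. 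Symmetrically the $Xb$ and $XX$ cases are resolved by the right-pop. I expect this bookkeeping — tracking what the new first and last letters of $\sol X$ are after one or two pops, and confirming that all previously-crossing occurrences of $ab$ have turned into explicit ones — to be the main obstacle, though it is routine: the key clean fact to invoke is that, by Lemma~\ref{lem:a prefix}, the first/last letter of $\sol X$ is determined by the equation, so these conditions can be checked syntactically and do not depend on which solution we look at.
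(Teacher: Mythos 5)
Your proposal follows essentially the same route as the paper: split $\algpop(a,b)$ into the left-pop and the right-pop, verify the three conditions of preserving solutions for each part (with the reported solution $\sol X = b$, resp.\ $\sol X = a$, being exactly the one that the replacement would lose, and the two replacements composing to $u=b$, $v=a$), and then rule out each of the three crossing configurations. One claim in your non-crossing argument is false as stated: after popping $b$, the leftmost letter of $\solution'(X)$ can perfectly well still be $b$ (e.g.\ $\sol X = bbw$), so you cannot conclude that the first crossing condition fails on the grounds that the first letter changed. This does not sink the proof, because the other half of the same sentence --- every former $aX$ has become $abX$, so, as $a \neq b$, no occurrence of $aX$ (nor of $XX$) remains in the new equation and the first and third crossing conditions cannot hold regardless of the new first letter --- is precisely the argument the paper uses, and it suffices on its own.
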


Note that Lemma~\ref{lem:pop preserves solutions} justifies our earlier claim that without loss of generality we can assume that
none of $A_i$, $B_j$ is empty: at the beginning of the algorithm we can run $\algpop(a,b)$ once
for $a$ being the first and $B$ the last letter of \sol X. 
This ensures the claim and increases the size of the instance at most thrice.

\begin{proof}
It is easy to verify that a pair $ab$ is crossing if and only if one of the following situations occurs:
\begin{enumerate}[CP1]
	\item \label{CP1}$aX$ occurs in the equation and the first letter of $\sol X$ is $b$;
	\item \label{CP2}$Xb$ occurs in the equation and the last letter of $\sol X$ is $a$;
	\item \label{CP3}$XX$ occurs in the equation, the first letter of $\sol X$ is $b$ and the last $a$.
\end{enumerate}
Let $\mathcal A' = \mathcal B'$ be the obtained equation,
we show that $ab$ in $\mathcal A' = \mathcal B'$ is noncrossing.
Consider whether $X$ was replaced by $bX$ is line~\ref{leftpop}.
If not, then the first letter of \sol X and $\solution'(X)$ is not $b$,
so $ab$ cannot be crossing because of \CPref{1} nor \CPref{3}.
Suppose that $X$ was replaced with $bX$. Then to the left of each $X$ there is a letter which is not $a$,
so none of situations \CPref{1}, \CPref{3} occurs.

A similar analysis applied to the last letter of \sol X yields that \CPref{2} cannot happen and so $ab$ cannot be a crossing pair.

\algpop{} can be naturally divided into two parts, which correspond to the replacement of $X$ by $bX$
and the replacement of $X$ by $Xa$.
We show for the first one that it preserves solutions, the proof for the second one is identical.

If \sol X does not begin with $b$ (recall that all solutions have the same first letter)
then nothing changes and the set of solutions is preserved.
Otherwise $\sol X = bw$. In this case the solutions of the new equation shall be obtained by prepending $b$ to them.
Consider what happens with a solution $\sol X = bw$
\begin{description}
	\item[if $w = \epsilon$] then it is reported in line~\ref{first letter test solution};
	\item[if $w \neq \epsilon$] then $\solution'(X) = w$ is a solution of the obtained equation.
\end{description}
Note that the solution reported by \algpop{} is verified, so it is indeed a solution.
Furthermore, the only reported solutions is $\sol X = b$,
none of which corresponds to a non-empty solution after popping.
Lastly, when $\solution'(X) = w$ is a solution after popping $b$ then clearly $\sol X = bw$
is a solution of $\mathcal A = \mathcal B$.
A symmetric analysis is done for the operation of right-popping $a$, which ends the proof.
\qedhere
\end{proof}

Now the presented procedures can be merged into one procedure
that turns crossing pairs into noncrossing ones and then compresses them,
effectively compressing crossing pairs.

\begin{algorithm}[H]
  \caption{$\algpairc(a,b)$ Turning crossing pair $ab$ into non-crossing ones and compressing it \label{alg:paircompc}}
  \begin{algorithmic}[1]
  			\State run $\algpop(a,b)$
			\State run $\algpair(a,b)$ \label{crossing pair compression}
  \end{algorithmic}
\end{algorithm}

\begin{lemma}
\label{lem: crossing pairs preserve}
$\algpairc(a,b)$  implements the pair compression of the pair $ab$.
\end{lemma}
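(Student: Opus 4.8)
The plan is to combine Lemma~\ref{lem:pop preserves solutions} and Lemma~\ref{lem:paircomp blockcomp} by a straightforward composition argument, tracking how the substitution strings $u,v$ compose. First I would recall that $\algpairc(a,b)$ runs $\algpop(a,b)$ and then $\algpair(a,b)$. By Lemma~\ref{lem:pop preserves solutions}, $\algpop(a,b)$ preserves solutions, so it reports some solutions and transforms $\mathcal A = \mathcal B$ into an equation $\mathcal A_1 = \mathcal B_1$ together with strings $u_1, v_1$ (each of $u_1,v_1$ is either empty or a single letter, namely $b$ on the left if $b$ was left-popped, $a$ on the right if $a$ was right-popped) such that unreported solutions of $\mathcal A = \mathcal B$ correspond bijectively to solutions $\solution_1$ of $\mathcal A_1 = \mathcal B_1$ via $\sol X = u_1 \solution_1(X) v_1$ and $\sol{\mathcal A} = u_1 \solution_1(\mathcal A_1) v_1$. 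Crucially, by the second part of Lemma~\ref{lem:pop preserves solutions}, in $\mathcal A_1 = \mathcal B_1$ the pair $ab$ is non-crossing, so Lemma~\ref{lem:paircomp blockcomp} applies: $\algpair(a,b)$ implements the pair compression for $ab$ on $\mathcal A_1 = \mathcal B_1$, producing $\mathcal A' = \mathcal B'$ and (with empty $u,v$, since $\algpair$ only renames explicit substrings and reports nothing) a bijection between solutions $\solution_1$ of $\mathcal A_1 = \mathcal B_1$ and solutions $\solution'$ of $\mathcal A' = \mathcal B'$ via $\solution_1(X) = \PC_{c\to ab}(\solution'(X))$ and $\solution_1(\mathcal A_1) = \PC_{c\to ab}(\solution'(\mathcal A'))$.

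Next I would chain the two correspondences. Setting $u = u_1$ and $v = v_1$, for every unreported solution \solution{} of $\mathcal A = \mathcal B$ (unreported meaning not reported by $\algpop$, and $\algpair$ reports nothing) there is $\solution_1$ with $\sol X = u\,\solution_1(X)\,v$, and in turn $\solution'$ with $\solution_1(X) = \PC_{c\to ab}(\solution'(X))$; composing gives $\sol X = u\,\PC_{c\to ab}(\solution'(X))\,v$, and similarly $\sol{\mathcal A} = u_1\,\solution_1(\mathcal A_1)\,v_1 = u\,\PC_{c\to ab}(\solution'(\mathcal A'))\,v$. Conversely, starting from any solution $\solution'$ of $\mathcal A' = \mathcal B'$, Lemma~\ref{lem:paircomp blockcomp} gives a solution $\solution_1$ of $\mathcal A_1 = \mathcal B_1$, and Lemma~\ref{lem:pop preserves solutions} gives from it an unreported solution \solution{} of $\mathcal A = \mathcal B$ with $\sol X = u\,\solution_1(X)\,v = u\,\PC_{c\to ab}(\solution'(X))\,v$; this is exactly the condition required by the definition of implementing the pair compression. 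The reported solutions of $\algpairc(a,b)$ are precisely those reported by $\algpop(a,b)$, which are genuine solutions by Lemma~\ref{lem:pop preserves solutions}.

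One bookkeeping point I would address explicitly: the letter $c$ introduced by $\algpair$ is fresh with respect to $\mathcal A_1 = \mathcal B_1$, hence also fresh with respect to the original $\mathcal A = \mathcal B$, so $\PC_{c\to ab}$ is unambiguous throughout; and since $u_1, v_1 \in \{\epsilon, a, b\}$ contain no occurrence of $c$, applying $\PC_{c\to ab}$ "inside" $\solution_1(X)$ and then prepending/appending $u,v$ is the same as applying it to the whole word $u\,\solution_1(X)\,v$ — this is what lets the two correspondences compose cleanly. I would also note that we must check $a\neq b$ so that $\algpair$ is well-defined, which holds by the standing convention that pair compression is applied to pairs of distinct letters.

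The main obstacle — and it is a mild one — is purely notational: making sure the composition of the two substitution changes is recorded correctly, in particular that the $\PC_{c\to ab}$ from the second step is applied only to the variable-part contributed by $\solution'$ and not to the popped letters $u,v$, and that "unreported after $\algpairc$" coincides with "unreported after $\algpop$" because $\algpair$ reports nothing. Once this is set up, the three bullet conditions in the definition of "implements the pair compression" follow immediately from the corresponding conditions for $\algpop$ (preserving solutions) and $\algpair$ (implementing pair compression on the now non-crossing pair), so the proof is a short verification rather than a computation.
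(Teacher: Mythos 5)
Your proposal is correct and follows exactly the paper's route: the paper proves this lemma by the one-line remark that it ``follows by combining Lemma~\ref{lem:paircomp blockcomp} and \ref{lem:pop preserves solutions}'', which is precisely the composition argument you carry out (your version just makes the bookkeeping of $u,v$ and the freshness of $c$ explicit). No gaps; the extra detail is harmless and arguably clarifying.
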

The proof follows by combining Lemma~\ref{lem:paircomp blockcomp} and \ref{lem:pop preserves solutions}.

There is one issue: the number of non-crossing pairs can be large, however, a simple preprocessing,
which basically applies \algpop, is enough to reduce the number of crossing pairs to $2$.

\begin{algorithm}[H]
  \caption{\algpreproc{} Ensures that there are at most $2$ crossing pairs\label{alg:preproc}}
  \begin{algorithmic}[1]
		\State let $a$, $b$ be the first and last letter of \sol X
		\State run $\algpop(a,b)$
  \end{algorithmic}
\end{algorithm}

\begin{lemma}
\label{lem:preproc preserves solutions}
$\algpreproc$ preserves solution and after its application there are at most two crossing pairs.
\end{lemma}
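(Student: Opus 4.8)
The plan is to reduce the statement to Lemma~\ref{lem:pop preserves solutions} together with a short counting argument on how many pairs can be crossing. First I would invoke Lemma~\ref{lem:pop preserves solutions} directly: since \algpreproc{} consists of a single call to $\algpop(a,b)$ with $a,b$ being the first and last letters of \sol X, preservation of solutions is immediate from that lemma. So the only real content is the bound on the number of crossing pairs after the call.

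For that, recall the characterisation \CPref{1}--\CPref{3} of crossing pairs established inside the proof of Lemma~\ref{lem:pop preserves solutions}: a pair $cd$ (with $c \neq d$) is crossing only if $cX$ occurs in the equation and $d$ is the first letter of \sol X, or $Xd$ occurs and $c$ is the last letter of \sol X, or $XX$ occurs with $c$ the last and $d$ the first letter of \sol X. Since, by Lemma~\ref{lem:a prefix}, the first letter of every solution is the fixed letter $b$ (the first letter of $A_0$) and the last letter is a fixed letter $a$ (the last letter of the last word), the only candidates for crossing pairs are $cb$ for some $c$ (type \CPref{1}), $ad$ for some $d$ (type \CPref{2}), and $ab$ (type \CPref{3}). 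Here I would note the slight clash with the lemma's naming: writing the fixed first and last letters of \sol X as $b_0$ and $a_0$, the crossing pairs are exactly among $\{\, c b_0 : cX \text{ occurs} \,\} \cup \{\, a_0 d : Xd \text{ occurs} \,\} \cup \{a_0 b_0\}$, all sharing $b_0$ as second letter or $a_0$ as first letter.

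Now I would argue that after running $\algpop(a,b)$ (with $a = a_0$, $b = b_0$), every $X$ in the equation is preceded by the letter $b_0$ (if $b_0$ was actually popped to the left) or else \sol X does not begin with $b_0$ at all — but after popping, the new solution $\solution'(X) = w$ may have an arbitrary first letter, call it $b_1$, and similarly an arbitrary last letter $a_1$. The key point is that in the new equation every occurrence of $X$ is immediately preceded by $b_0$ and immediately followed by $a_0$ (or sits at the very start/end of a side, where no crossing pair arises from that side). Hence a pair $cb_1$ can be crossing via \CPref{1} only if some $cX$ occurs, i.e. only if $c = b_0$; a pair $a_1 d$ can be crossing via \CPref{2} only if $d = a_0$; and the $XX$ case \CPref{3} forces the single pair $a_0 b_1$, which is already of the first shape, or $a_1 b_1$ only if $a_0 = a_1$ and $b_0 = b_1$. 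Putting this together, the only pairs that can be crossing in the new equation are $b_0 b_1$ and $a_1 a_0$ — at most two. (If $b_0 b_1 = a_1 a_0$, or if one of $A_{n_{\mathcal A}}$, $B_{n_{\mathcal B}}$ handling makes one of them vacuous, there are even fewer.)

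\textbf{Main obstacle.} The delicate point is bookkeeping of \emph{which} $X$-occurrences actually get a letter popped to their left versus their right, and the corner cases where an $X$ is the first or last symbol of a side of the equation (so that no new crossing pair is created there), as well as the degenerate situation $XX$. I would handle these exactly as in the proof of Lemma~\ref{lem:pop preserves solutions}: after \algpop, to the left of each $X$ there is a letter that is not the old $b_0$ only when $b_0$ was not the first letter of \sol X, in which case no type-\CPref{1}/\CPref{3} crossing pair existed to begin with; and symmetrically on the right. The rest is routine case analysis, and I would present it compactly by just listing the at most two surviving candidate pairs.
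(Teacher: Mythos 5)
Your proposal is correct and follows essentially the same route as the paper: preservation of solutions is delegated to Lemma~\ref{lem:pop preserves solutions}, and the bound of two crossing pairs comes from observing that after popping every occurrence of $X$ is preceded by the old first letter and succeeded by the old last letter, so the characterisation \CPref{1}--\CPref{3} leaves only the two candidate pairs (old first)(new first) and (new last)(old last). You are also right that the paper's naming of the arguments to \algpop{} in \algpreproc{} is swapped relative to the roles of first/last letter; your renaming resolves this correctly and matches the intended semantics used in the paper's own proof.
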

\begin{proof}
It is enough to show that there are at most $2$ crossing pairs, as the rest follows form Lemma~\ref{lem:pop preserves solutions}.
Let $a$ and $b$ be the first and last letters of \sol X, and $a'$, $b'$ such letters after the application of \algpreproc.
Then each $X$ is proceeded with $a$ and succeeded with $b$ in $\mathcal A' = \mathcal B'$.
So the only crossing pairs are $aa'$ and $b'b$ (note that this might be the same pair or part of a letter-block, i.e. $a = a'$ or $b = b'$).
\qedhere
\end{proof}

The problems with crossing blocks can be solved in a similar fashion:
$a$ has a crossing block if and only if $aa$ is a crossing pair.
So we `left-pop' $a$ from $X$ until the first letter of \sol X is different than $a$,
we do the same with the ending letter $b$.
This can be alternatively seen as removing the whole $a$-prefix ($b$-suffix, respectively) from $X$:
suppose that $\sol X = a^\ell w b^r$, where $w$ does not begin with $a$ nor end with $b$.
Then we replace each $X$ by $a^\ell X b^r$ implicitly changing the solution to $\sol X = w$, see Algorithm~\ref{alg:prefix}.
\begin{algorithm}[H]
  \caption{\algprefsuff{} Cutting prefixes and suffixes; assumes that $A_0$ is not a block of letters\label{alg:prefix}}
  \begin{algorithmic}[1]
	\Require $A_0$ is not a block of letters, the non-empty of $A_{n_{\mathcal A}}$, $B_{n_{\mathcal B}}$ is not a block of letters 
	\State let $a$ be the first letter of \sol X
	\State report solution found by $\algtestsimple(a)$ \label{test prefix} \Comment{Excludes $\sol X \in a^+$ from further considerations.}
	\State let $\ell > 0$ be the length of the $a$-prefix of $A_0$ \Comment{By Lemma~\ref{lem:a prefix} \sol X has the same $a$-prefix} 
	\State replace each $X$ in $\mathcal A = \mathcal B$ by  $a^\ell X$
		\Comment{$a^\ell$ is stored in a compressed form},
	\State		\Comment{implicitly change $\sol X = a^\ell w$ to $\sol X = w$}
	\State let $b$ be the last letter of \sol X \Comment{It cannot be that $\sol X \in b^+$}
	\State report solution found by  $\algtestsimple(b)$ \Comment{Exclude $\sol X \in b^+$ from furhter considerations.}
	\State let $r > 0$ be the length of the $b$-suffix of the non-empty of $A_{n_{\mathcal A}}$, $B_{n_{\mathcal B}}$
		\par \Comment{By Lemma~\ref{lem:a prefix} \sol X has the same $b$-suffix} 
	\State replace each $X$ in $\mathcal A = \mathcal B$ by  $Xb^r$
		\Comment{$b^r$ is stored in a compressed form},
	\State		\Comment{implicitly change $\sol X = wb^r$ to $\sol X = w$}
  \end{algorithmic}
\end{algorithm}

\begin{lemma}
\label{lem:cutpref cutsuff}
Let $a$ be the first letter of the first word and $b$ the last of the last word.
If the first word is not a block of $a$s and the last not a block of $b$s then
\algprefsuff{} preserves solutions and after its application there are no crossing blocks of letters.
\end{lemma}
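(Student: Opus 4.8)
The plan is to prove the two assertions of Lemma~\ref{lem:cutpref cutsuff} — preservation of solutions, and the absence of crossing blocks afterwards — by reusing the machinery already in place. For the first assertion, note that \algprefsuff{} is, modulo bookkeeping, just a sequence of \algpop{}-style operations: replacing $X$ by $a^\ell X$ is the same as $\ell$ iterated left-pops of $a$ (and similarly $Xb^r$ is $r$ iterated right-pops of $b$), except that here we test and report all the `short' candidate solutions $\sol X = a, a^2, \dots, a^{\ell}$ in one shot via $\algtestsimple(a)$ rather than one at a time. So first I would invoke Lemma~\ref{lem:a prefix}: since $A_0$ is not a block of $a$s, every solution with $\sol X \notin a^+$ has $a$-prefix of length exactly $\ell$ (the $a$-prefix length of $A_0$), and the unique candidate in $a^+$, if any, is exactly the one $\algtestsimple(a)$ finds and reports. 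Then I would argue that, after removing that reported solution from consideration, the remaining solutions $\sol X = a^\ell w$ with $w \notin a^*$ are in the required one-to-one correspondence with the solutions $\solution'(X) = w$ of the new equation $\mathcal A' = \mathcal B'$, with the strings $u = a^\ell$ and (after the symmetric step) $v = b^r$ in the definition of preserving solutions — this is essentially Lemma~\ref{lem:pop preserves solutions} applied $\ell$ times, the only extra point being that intermediate candidates $\sol X = a^j$ for $0<j<\ell$ cannot be solutions because their $a$-prefix length $j$ differs from $\ell$, so nothing is lost by not testing them separately. The suffix step is handled symmetrically, and I would remark that after the prefix step $A_0$ (now $a^\ell A_0'$ with $A_0'$ not starting with $a$, but more to the point the first word is $a^\ell$ followed by the rest) still is not a block of $b$s so the precondition for the symmetric half of Lemma~\ref{lem:a prefix} holds; one must be a little careful that the first and last words of $\mathcal A'=\mathcal B'$ after the first replacement still satisfy the shape~\eqref{eq:univariate}, which they do after the greedy repair mentioned in Section~\ref{sec:onevar}.

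For the second assertion, I would recall the observation stated in the paragraph preceding Algorithm~\ref{alg:prefix}: a letter $c$ has a crossing block if and only if $cc$ is a crossing pair, which by the characterisation \CPref{1}–\CPref{3} happens exactly when $c$ is the first letter of $\sol X$ and $cX$ occurs, or $c$ is the last letter of $\sol X$ and $Xc$ occurs, or $c$ is simultaneously first and last and $XX$ occurs. After \algprefsuff{} the first letter of the (implicitly updated) $\sol X = w$ is by construction some $a' \neq a$, and the last letter is some $b' \neq b$; moreover every occurrence of $X$ in $\mathcal A' = \mathcal B'$ is immediately preceded by $a$ (the last letter of $a^\ell$) and immediately followed by $b$ (the first letter of $b^r$). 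Hence for the letter $a'$: it is the first letter of $\sol X$, but the letter before any $X$ is $a \neq a'$, so $a'X$ does not occur in an offending way; since $a' \neq b'$ in general one also checks the $XX$ case gives the pair $b a$, not $a'a'$. Running through the three cases for an arbitrary letter $c$ and using that $a \neq a'$, $b\neq b'$ shows no letter has a crossing block.

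The main obstacle I anticipate is the precise bookkeeping in the first part: making sure that the `batch' test $\algtestsimple(a)$ — which, by Lemma~\ref{lem:a prefix}, finds at most one solution in $a^+$ and it is $\sol X = a^\ell$ with $\ell$ the $a$-prefix length of $A_0$ — really does capture every solution that the $\ell$-fold left-pop would individually report, and that no solution with $a$-prefix strictly shorter than $\ell$ exists to be missed. This is exactly where the hypothesis that the first word is not a block of $a$s is used: it guarantees $\ell$ is well-defined and finite and that solutions outside $a^+$ all share this $a$-prefix, so the replacement $X \mapsto a^\ell X$ neither loses a short solution nor mis-aligns a long one. Apart from that, the proof is a routine concatenation of Lemmas~\ref{lem:a prefix} and~\ref{lem:pop preserves solutions} together with the crossing-pair characterisation \CPref{1}–\CPref{3}.
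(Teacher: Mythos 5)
Your proof is correct and follows essentially the same route as the paper: both halves rest on Lemma~\ref{lem:a prefix} (every solution outside $a^+$ shares the $a$-prefix of $A_0$, so the map $\sol X = a^\ell w \mapsto \solution'(X)=w$ is a bijection onto the unreported solutions, and the unique $a^+$-candidate is exactly the one caught by $\algtestsimple(a)$), while the no-crossing-blocks claim follows because each $X$ is now flanked by $a$ on the left and $b$ on the right whereas the updated $\sol X$ neither begins with $a$ nor ends with $b$. Your packaging of the prefix cut as $\ell$ iterated left-pops, and the minor slip of writing $w\notin a^*$ where you mean that $w$ is non-empty and does not begin with $a$, are only presentational variants of the paper's direct argument.
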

\begin{proof}
Consider first only the changes done by the modification of the prefix.
Suppose that $\sol X = a ^\ell w$, where $w$ does not begin with $a$.
If $w = \epsilon$ then as $A_0 \notin a^+$ by Lemma~\ref{lem:a prefix} there is only one such solution
and it is reported in line~\ref{test prefix}.
Otherwise, by Lemma~\ref{lem:a prefix}, each solution \solution{} of the equation
is of the form $\sol X = a^\ell w$, where $a^\ell$ is the $a$-prefix of $A_0$ and $w \neq \epsilon$ nor does it begin with $a$.
Then the $\solution'(X) = w$ is the solution of the new equation.
Similarly, for any solution $\solution'(X) = w$ the $\sol X = a^\ell w$
is the solution of the original equation.

The same analysis can be applied to the modifications of the suffix: observe that if at the beginning the last word
was not a block of $b$s it did not become one during the cutting of the $a$-prefix.

Lastly, suppose that some letter $c$ has a crossing block, without loss of generality assume that $c$ is the first
letter of \sol X and $cX$ occurs in the equation.
But this is not possible: $X$ was replaced by $a^\ell X$ and so the only letter to the left of $X$ is $a$
and \sol X does not start with $a$, contradiction.
\qedhere
\end{proof}

The \algprefsuff{} allows defining a procedure \algblocksc{}
that compresses maximal blocks of all letters, regardless of whether they have crossing blocks or not.

\begin{algorithm}[H]
	\caption{\algblocksc{} Compressing blocks of $a$}
	\label{alg:blocksc}
	\begin{algorithmic}[1]
		\State $\presentletters \gets$ letters occurring in the equation
		\State run \algprefsuff \Comment{Removes crossing blocks of $a$} \label{cut pref}
		\For{each letter $a \in \presentletters$} \label{loop of compressions}
			\State $\algblocks(a)$ \label{block compression local}
		\EndFor
	\end{algorithmic}
\end{algorithm}

\begin{lemma}
\label{lem: consistent no crossing block}
Let $a$ be the first letter of the first word and $b$ the last of the last word.
If the first word is not a block of $a$s and the last not a block of $b$s then
\algblocksc{}  implements the block compression for letters present in $\mathcal A = \mathcal B$
before its application.
\end{lemma}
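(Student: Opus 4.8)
The plan is to split \algblocksc{} into its two parts---the single call to \algprefsuff{} and the subsequent loop of calls to $\algblocks$---to show that the first part \emph{preserves solutions} and each loop iteration \emph{implements the block compression} for one letter, and then to compose these guarantees.

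\emph{Step 1: the prefix/suffix cut.} The precondition of \algprefsuff{}, namely that the first word is not a block of its first letter and the last word is not a block of its last letter, is exactly the hypothesis of the lemma. Hence Lemma~\ref{lem:cutpref cutsuff} applies: \algprefsuff{} preserves solutions (reporting, through \algtestsimple, precisely the solutions in $a^+$ and in $b^+$), and after its application the resulting equation, which I will call $\mathcal A_0 = \mathcal B_0$, has no crossing blocks of letters. The witnessing strings are $u_0 = a^\ell$ (the prepended $a$-prefix) and $v_0 = b^r$ (the appended $b$-suffix), the associated uncompression map being the identity. Moreover \algprefsuff{} introduces no new letters, so every letter of $\mathcal A_0 = \mathcal B_0$ belongs to $\presentletters$.

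\emph{Step 2: the loop invariant.} I claim that whenever $\algblocks(a)$ is about to be run in the loop, the letter $a$ has no crossing block in the current equation; by Lemma~\ref{lem:paircomp blockcomp} this makes each iteration implement the block compression for that letter, with empty witnessing strings. The invariant holds for the first iteration by Step~1. For the inductive step, observe that an earlier iteration $\algblocks(c)$ only replaces \emph{explicit} maximal $c$-blocks by \emph{fresh} letters and does not touch the substitution $\sol X$. Consequently: (i) the fresh letters it creates are not in $\presentletters$, hence are never processed by a later iteration; (ii) for $a \in \presentletters$ with $a \neq c$ it creates no new occurrence of $aX$, $Xa$ or $XX$ and does not alter the first or last letter of $\sol X$, so by the characterisation \CPref{1}--\CPref{3} it introduces no crossing block of $a$; and (iii) since it neither deletes letters nor touches $a$'s, it can neither merge nor split maximal $a$-blocks, so when $a$ is eventually processed all of its maximal blocks from $\mathcal A_0 = \mathcal B_0$ are still present. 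This proves the invariant (and, incidentally, that at the end every maximal block of length $>1$ over a letter of $\presentletters$ has been compressed).

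\emph{Step 3: composition.} Since each iteration introduces only fresh letters, which are touched neither by a later iteration nor by \algprefsuff{}, the block-uncompression maps $\BC_{c_k \to c^k}$ attached to the successive iterations act on disjoint sets of letters; let $g$ denote their composition, i.e.\ the map that replaces every letter introduced in the loop by the block over $\presentletters$ it represents. Composing the solution correspondences---first that of \algprefsuff{}, with witnesses $u_0 = a^\ell$, $v_0 = b^r$ and identity uncompression, and then those of the loop iterations, with empty witnesses---gives: \algblocksc{} reports some solutions of $\mathcal A = \mathcal B$; for every other solution \solution{} there is a solution $\solution'$ of the final equation $\mathcal A' = \mathcal B'$ with $\sol X = a^\ell\, g(\solution'(X))\, b^r$ and $\sol{\mathcal A} = a^\ell\, g(\solution'(\mathcal A'))\, b^r$; and conversely every solution $\solution'$ of $\mathcal A' = \mathcal B'$ yields in this way an unreported solution of $\mathcal A = \mathcal B$. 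As $g$ is exactly the inverse of simultaneously block-compressing all letters present in $\mathcal A = \mathcal B$ before the procedure, this is precisely the claim that \algblocksc{} implements the block compression for those letters.

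The main obstacle is Step~2: one has to verify carefully that compressing the blocks of one letter neither creates a crossing block for any letter still to be processed nor merges or splits its maximal blocks. Both points rest on the freshness of the introduced letters $c_k$ and on the fact that $\algblocks$ modifies only explicit blocks and never changes the value $\sol X$.
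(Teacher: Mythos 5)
Your proof is correct and follows the same route as the paper, whose entire argument is the one-line ``combine Lemma~\ref{lem:paircomp blockcomp} and Lemma~\ref{lem:cutpref cutsuff}''; you simply make explicit the loop invariant (no crossing block for each letter of $\presentletters$ when its turn comes) and the composition of the solution correspondences that the paper leaves implicit. One small imprecision in Step~2(ii): under the solution correspondence, $\algblocks(c)$ \emph{can} change the first or last letter of the (implicitly modified) solution when that letter lies in a compressed $c$-block, but only to a fresh letter outside $\presentletters$, so your conclusion that no crossing block of $a$ is created still stands.
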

The proof follows by combining Lemma~\ref{lem:paircomp blockcomp} and \ref{lem:cutpref cutsuff}.

\section{Main algorithm}
\label{sec:main algorithm}

The following algorithm \algsonevar{} is basically a specification
of the general algorithm for testing the satisfiability of word equations~\cite{wordequations}
and is built up from procedures presented in the previous section.

\begin{algorithm}[H]
	\caption{\algsonevar{} Reports solutions of a given one-variable word equation}
	\label{alg:onevar}
	\begin{algorithmic}[1]
	\While{the first block and the last block are not blocks of a letter} \label{alg:one mainloop}
		\State $\pairs \gets $ pairs occurring in $\sol {\mathcal A} = \sol {\mathcal B}$ \label{listing pairs}
		\State \algblocksc{} \Comment{Compress blocks, in $\Ocomp(|\mathcal A| + |\mathcal B|)$ time.}	\label{a* solution pref}	 
		\State \algpreproc{} \Comment{There are only two crossing pairs, see Lemma~\ref{lem:preproc preserves solutions}} \label{preproc}
		\State $\crossing \gets $ list of crossing pairs from \pairs
			\Comment{There are two such pairs} \label{listing crossing pairs} 
		\State $\ncrossing \gets $ list of non-crossing pairs from \pairs \label{listing noncrossing pairs}
		\For{each $ab\in \ncrossing$} \Comment{Compress non-crossing pairs, in time $\Ocomp(|\mathcal A| + |\mathcal B)|$}
			\State $\algpair(a,b)$ \label{pair compression onevar}
		\EndFor
		\For{$ab \in \crossing$} \label{loop of outer onever} \Comment{Compress the $2$ crossing pairs, in time $\Ocomp(|\mathcal A| + |\mathcal B)|$}
			\State $\algpairc(a,b)$ \label{pair compression onevar 2}
		\EndFor
	\EndWhile
	\State \algtest{} \Comment{Test solutions from $a^*$, see Lemma~\ref{lem:a* solution}}
 \end{algorithmic}
\end{algorithm}
We call one iteration of the main loop a \emph{phase}.

\begin{theorem}
\label{thm:onevar}
\algsonevar{} runs in time $\Ocomp(|{\mathcal A}| + |{\mathcal B}| + (n_{\mathcal A} + n_{\mathcal B})\log(|\mathcal A|+|\mathcal B|))$
and correctly reports all solution of a word equation $\mathcal A = \mathcal B$.
\end{theorem}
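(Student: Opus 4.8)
My plan is to prove the two halves of the statement — correctness and running time — separately, building on the per‑procedure lemmas already established.

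For correctness, the key observation is that \algsonevar{} is a composition of solution‑preserving and compression‑implementing subprocedures. By Lemma~\ref{lem: consistent no crossing block} the call to \algblocksc{} implements block compression for all letters present at the start of the phase; by Lemma~\ref{lem:preproc preserves solutions} the call to \algpreproc{} preserves solutions and leaves at most two crossing pairs; by Lemma~\ref{lem:paircomp blockcomp} each \algpair{} on a non‑crossing pair implements pair compression; and by Lemma~\ref{lem: crossing pairs preserve} each \algpairc{} on a crossing pair implements pair compression. Composing the one‑to‑one correspondences (and accumulating the prepended/appended prefixes/suffixes $u,v$ together with the weights as described in the ``Weight'' paragraph), every unreported solution of the original equation maps to a solution of the current equation, and conversely — so the only solutions ``lost'' along the way are exactly those reported by the \algtestsimple{} calls inside \algpop{} and \algprefsuff{}, each of which is verified before being reported. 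It remains to argue the loop terminates and that the final \algtest{} catches the remaining solutions: when the loop exits, the first and last words are blocks of a single letter, so by (the block‑of‑letters case of) Lemma~\ref{lem: solution form} / Lemma~\ref{lem:a prefix} every remaining solution lies in $a^*$, and Lemma~\ref{lem:a* solution} handles these in linear time. For termination I would invoke Lemma~\ref{lem:reducing length}: as long as the equation is not yet of the trivial ``block'' form, one phase shrinks $|\mathcal A| + |\mathcal B|$ by a constant factor (pair and block compression jointly compress a constant fraction of the text, since every two consecutive letters are either equal — hence inside a block — or form a pair that is compressed), so after $\Ocomp(\log(|\mathcal A|+|\mathcal B|))$ phases the equation reaches the final form.

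For the running time, I would split the cost into the cost of the explicit words $A_i, B_j$ and the cost associated with the $n_{\mathcal A} + n_{\mathcal B}$ occurrences of $X$. Within a single phase, listing pairs, \algblocksc{}, \algpreproc{}, and the two loops over \ncrossing{} and \crossing{} all run in $\Ocomp(|\mathcal A| + |\mathcal B|)$ time — here one uses \algradix{} to group pairs and blocks, exactly as flagged in the note on the computational model. So a phase costs $\Ocomp(|\mathcal A| + |\mathcal B|)$, and the total is $\Ocomp(\sum_{\text{phases}} (|\mathcal A| + |\mathcal B|))$. For a ``long'' explicit word — one whose length exceeds a suitable constant — Lemma~\ref{lem:reducing length} gives a constant‑factor length decrease per phase, so the geometric sum of its contributions over all phases is $\Ocomp(\text{initial length})$, and these telescope to $\Ocomp(|\mathcal A| + |\mathcal B|)$ overall. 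The remaining contribution comes from the ``short'' words and from the fixed per‑$X$ overhead (the $\Ocomp(1)$ bookkeeping at each $X$‑occurrence during popping, plus the \algtestsimple{} calls, which are charged per phase); since there are at most $\Ocomp(\log(|\mathcal A|+|\mathcal B|))$ phases and $n_{\mathcal A} + n_{\mathcal B}$ such occurrences each contributing $\Ocomp(1)$ per phase, this sums to $\Ocomp((n_{\mathcal A} + n_{\mathcal B})\log(|\mathcal A| + |\mathcal B|))$, matching the claimed bound.

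The main obstacle I anticipate is the bookkeeping behind the ``short words'' part of the time analysis: one must argue that a word which stays short never the less does not keep re‑appearing with full cost in every phase without being charged somewhere, and that the number of distinct short words adjacent to a given $X$ is bounded so that the per‑$X$, per‑phase cost is genuinely $\Ocomp(1)$ rather than proportional to the number of short neighbours. This is precisely the point where the $\#_X \log n$ term is unavoidable with the naive implementation, and where Section~\ref{sec:faster} later has to work harder; for the present theorem it suffices to bound it crudely by charging each short word $\Ocomp(1)$ per phase and multiplying by the $\Ocomp(\log(|\mathcal A|+|\mathcal B|))$ phase count, but making even this crude bound airtight requires care that the short words between consecutive variables do not blow up in number or length across phases — which follows because block/pair compression never lengthens the text and popping adds only $\Ocomp(1)$ letters per $X$ per phase.
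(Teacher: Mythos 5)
Your proposal is correct and follows the paper's own proof essentially step for step: correctness by composing the per-procedure guarantees (Lemmata~\ref{lem: consistent no crossing block}, \ref{lem:preproc preserves solutions}, \ref{lem:paircomp blockcomp}, \ref{lem: crossing pairs preserve}) with Lemma~\ref{lem:a* solution} for the final \algtest{}, and the running time by charging each phase's linear cost (Lemma~\ref{lem:one iteration cost}) to the explicit words, geometrically for long words and at $\Ocomp(1)$ per phase over $\Ocomp(\log n)$ phases for short words.

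One point needs repair. You justify both termination and the $\Ocomp(\log(|\mathcal A|+|\mathcal B|))$ phase count by claiming that a phase shrinks $|\mathcal A|+|\mathcal B|$ by a constant factor; Lemma~\ref{lem:reducing length} does not give this. Short words are only guaranteed to \emph{remain} short (popping can add up to $8$ symbols, offsetting the compression), so an equation dominated by length-$1$ words between variables need not shrink at all. The phase bound has to be extracted from the \emph{first and last words} specifically: while long they shrink by a factor of $3/4$ per phase, and once short they still lose at least one letter per phase, so the loop condition fails after $\Ocomp(\log n) + \Ocomp(\bound) = \Ocomp(\log n)$ phases. This is precisely how the paper splits the analysis between Lemma~\ref{lem: till one is first} (while the first/last word is long) and Lemma~\ref{lem: first is short} (the $\Ocomp(n)$ tail once it is short). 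With that substitution your argument is exactly the paper's.
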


Before showing the running time, let us first comment on how the equation is stored.
Each of sides ($\mathcal A$ and $\mathcal B$) is represented as two lists of pointers to strings,
i.e.\ to $A_0$, $A_1$, \ldots, $A_{n_{\mathcal A}}$ and to $B_0$, $B_1$, \ldots, $B_{n_{\mathcal B}}$.
Each of those words is stored as a doubly-linked list.
When we want to refer to a concrete word in a phase, we use names $A_i$ and $B_j$,
when we want to stress its evolution in phases, we use names $\mathcal A$ $i$-word and $\mathcal B$ $j$-word.

\subsubsection*{Shortening of the solutions}
The most important property of \algsonevar{} is that the explicit strings between the variables shorten
(assuming that they have a large enough length).
To show this we use the following technical lemma, which is also used several times later on:

\begin{lemma}
\label{lem: shortening}
Consider two consecutive letters $a$, $b$ in \sol {\mathcal A} for any solution \solution.
At least one of those letters is compressed in this phase.
\end{lemma}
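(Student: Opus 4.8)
The plan is to argue by contradiction: suppose that in some phase neither $a$ nor $b$ is compressed, where $a,b$ are consecutive letters of $\sol{\mathcal A}$. Recall that a phase consists of block compression (via \algblocksc) followed by compression of all pairs that occurred at the start of the phase — the non-crossing ones directly and the (at most two, by Lemma~\ref{lem:preproc preserves solutions}) crossing ones via \algpairc. So I must show that the pair $ab$, or a block containing $a$ or $b$, is necessarily one of the objects compressed during the phase.

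First I would dispose of the case $a=b$. If $a=b$, then $a$ lies inside some maximal block $a^k$ of $\sol{\mathcal A}$ with $k\geq 2$. By Lemma~\ref{lem: consistent no crossing block}, \algblocksc{} implements block compression for every letter present at the start of the phase, and in particular every maximal block of length $\geq 2$ is replaced by a fresh letter; hence both copies of $a$ in the pair $ab=aa$ get compressed, contradiction. (Here one uses that $a$ is present in the equation at the start of the phase — which holds since $a$ occurs in $\sol{\mathcal A}$, and every letter of $\sol{\mathcal A}$ is either an explicit letter of $\mathcal A$ or the first/last letter of $\sol X$, which is fixed across the phase by Lemma~\ref{lem:a prefix}.)

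Now suppose $a\neq b$. After \algblocksc{} the blocks are compressed, but this may merge letters; the cleaner route is to consider the pair $ab$ as it stands at the moment pairs are being compressed, i.e.\ after block compression and after \algpreproc. At that point every pair of distinct consecutive letters is either non-crossing — and then it was listed in \pairs{} at the start of the phase (pairs only disappear, they are not created, once blocks are gone) and so gets compressed by the loop over \ncrossing — or it is one of the two crossing pairs, which are compressed by \algpairc. Either way the occurrence $ab$ under consideration is compressed. The one subtlety is bookkeeping across the block-compression step: I would phrase the statement for the configuration right before pair compression and observe that \algblocksc{} itself already ``compresses'' any letter $a$ that sits in a nontrivial block, so the only uncompressed letters surviving into the pair-compression stage are those lying between two distinct neighbours, and consecutive such letters form exactly the pairs handled in the two loops of \algsonevar.

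The main obstacle I expect is precisely this interplay between the block-compression stage and the pair-compression stage: one must be careful that ``the pairs occurring at the start of the phase'' (line~\ref{listing pairs}) together with the two crossing pairs really do cover every pair of distinct adjacent letters present when the pair-compression loops run, given that block compression can change which letters are adjacent. The resolution is that block compression only ever replaces blocks $a^k$ ($k\geq 2$) by single letters, so it can only destroy adjacencies of the form $a,a$ (already handled) and cannot create a new adjacency of two distinct letters that was not ``witnessed'' by an adjacency present at the start of the phase; combined with \algpreproc{} guaranteeing at most two crossing pairs, this closes the argument.
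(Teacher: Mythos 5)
Your proof is correct and follows essentially the same route as the paper's: split on $a=b$ (disposed of by block compression) versus $a\neq b$ (the pair was listed in \pairs{} at the start of the phase, and a compression attempt on that occurrence can only fail because one of the two letters was already compressed, which is exactly the claim). The only blemish is the parenthetical assertion that no new pairs are created once blocks are compressed — fresh letters $a_\ell$ do create new adjacencies not listed in \pairs{} — but this is harmless, since under your contradiction hypothesis the particular occurrence $ab$ consists of original, untouched letters and therefore survives intact to the pair-compression stage as a pair that was listed.
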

\begin{proof}
Consider whether $a = b$ or not:
\begin{description}
	\item[$a = b$] Then they are compressed using \algblocksc.
	\item[$a \neq b$]
	Then $ab$ is a pair occurring in the equation at the beginning of the phase and so it was listed in \pairs{} in line~\ref{listing pairs}
	and as such we try to compress it,
	either in line~\ref{pair compression onevar} or in line~\ref{pair compression onevar 2}.
	This occurrence cannot be compressed only when one of the letters $a$, $b$ was already compressed, in some other pair
	or by \algblocksc. In either case we are done.
	\qedhere
\end{description}
\end{proof}

We say that a word $A_i$ ($B_i$) is \emph{short} if it consists of at most $100$ letters and \emph{long} otherwise.
To avoid usage of strange constants and its multiplicities, we shall use $\bound=100$ to denote this value.
\begin{lemma}
\label{lem:reducing length}
Consider the length of the $\mathcal A$ $i$-word (or $\mathcal B$ $j$-word).
If it is long then its length is reduced by $1/4$ in this phase.
If it is short then after the phase it still is.
The length of each unreported solution is reduced by at least $1/4$ in a phase.

Additionally, if the first (last) word is short and has at least $2$ letters then its length is shortened by at least $1$ in a phase.
\end{lemma}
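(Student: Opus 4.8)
The plan is to analyze a single phase of \algsonevar{} and track what happens to the letters of a fixed explicit word $w = A_i$ (the argument for $B_j$ is identical, and the claim about solutions follows from Lemma~\ref{lem: solution form} together with the same counting applied to $\sol X$). The key tool is Lemma~\ref{lem: shortening}: among any two consecutive letters appearing in $\sol{\mathcal A}$ for a fixed solution, at least one is compressed during the phase. Since in a phase the compressions (block compression via \algblocksc{}, non-crossing pair compression, and crossing pair compression via \algpairc) each replace one or more consecutive letters by a single fresh letter, every compression event strictly shortens the string it acts on. So the strategy is: partition (most of) the letters of $w$ into disjoint pairs of consecutive positions, observe that each pair loses at least one letter, and conclude the length drops by a constant fraction.

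First I would make precise where the extra letters that are prepended/appended to $X$ come from, since \algpreproc{} and the \algprefsuff{} call inside \algblocksc{} modify $X$ (hence modify the neighbourhood of $w$ at its two ends) before the compressions are carried out. The point is that these modifications only \emph{add} letters next to the variables; they do not touch the interior of $w$, and after them $w$ still has length $\ge|w|$. Then I would look at $w$ as it sits inside $\sol{\mathcal A}$ at the start of the phase and pair up its positions: positions $1,2$; positions $3,4$; and so on. A subtlety is that the first and last letters of $w$ may be compressed "outward" — i.e. merged with a letter that was popped out of an adjacent $X$ or with a letter of an adjacent word — rather than with their partner inside $w$. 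To be safe I would discard the (at most) two boundary letters of $w$ from the accounting, pair up the remaining $\ge |w|-2$ interior letters into $\ge \lfloor(|w|-2)/2\rfloor$ consecutive pairs, and apply Lemma~\ref{lem: shortening} to each such pair: in every pair at least one letter is compressed together with something (possibly its partner, possibly a neighbour), so the total number of distinct "surviving" letters contributed by these pairs is at most one per pair, plus we must be slightly careful that two adjacent pairs are not compressed into the \emph{same} fresh letter — but even then that only helps us, as it removes more letters. Hence after the phase the part of the string coming from $w$ has length at most $\lceil(|w|-2)/2\rceil + 2$, and for $|w|\ge \bound=100$ this is at most $\tfrac34|w|$ (a very loose estimate; one checks $\lceil(|w|-2)/2\rceil + 2 \le |w|/2 + 3 \le 3|w|/4$ whenever $|w|\ge 12$). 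For the "short stays short" part: a short word has $\le 100$ letters and compressions only ever shorten it, while \algpreproc{} and \algprefsuff{} modify only the material adjacent to $X$, not the interior of $A_i$ for $0<i<n_{\mathcal A}$; the first and last words are handled separately in the last sentence of the lemma.

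For the final sentence — if the first (or last) word is short and has $\ge 2$ letters then it loses at least one letter — I would apply Lemma~\ref{lem: shortening} to the first two letters of the first word $A_0$. These are consecutive letters of $\sol{\mathcal A}$, so one of them is compressed in this phase; since $|A_0|\ge 2$, the compression that consumes it either merges the two letters of $A_0$ together (length drops by $\ge 1$), or merges the affected letter of $A_0$ with a neighbour outside $A_0$ — but the only neighbour outside $A_0$ on the left does not exist ($A_0$ is the first word), and on the right it is (the first letter of) $\sol X$ or a popped letter; in that case the first word still loses its boundary letter to the merge, so again its length drops by at least $1$. The symmetric argument with the last two letters of the last word handles the "last word" case.

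The main obstacle is purely bookkeeping: one must be careful that (i) the $X$-modifications performed earlier in the phase (\algpreproc, and the prefix/suffix cutting inside \algblocksc) really do leave the interior of each $A_i$ untouched and do not shorten a long word before the compression stage, and (ii) when pairing up consecutive positions, a letter near the boundary of $w$ may be compressed "across" the boundary into an adjacent word or into popped material, so the naive "each pair loses a letter to itself" reasoning needs the two-letter slack at each end. Neither point is deep — it is just a matter of stating carefully that Lemma~\ref{lem: shortening} gives one compression per consecutive pair and that compressions never lengthen anything — but it is where the argument has to be written with care to be rigorous.
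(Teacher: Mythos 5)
Your overall strategy --- apply Lemma~\ref{lem: shortening} to consecutive positions of a fixed word, account separately for boundary effects and for the letters popped from $X$ --- is the paper's strategy. But two of your counting steps are wrong as stated.

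First, the central estimate ``at most one surviving letter per pair,'' giving a final length of about $|w|/2+2$, is false. A pair $(a,b)$ of consecutive positions can contribute \emph{two} letters to the compressed word: $a$ may survive uncompressed while $b$ is merged with the first letter of the \emph{next} pair into a fresh letter. The worst case is the pattern in which one uncompressed letter is followed by two letters compressed together, repeated: for $|w|=12$ this leaves $8$ letters, whereas your bound promises at most $7$. The correct accounting (and the paper's) is: uncompressed letters form an independent set \emph{and} every compressed letter lies in a run of at least two compressed letters, so at least $\tfrac{2(|w|-1)}{3}$ letters are compressed; since each compression event merges at least two letters into one, the length drops by at least half of that, i.e.\ by $\tfrac{|w|-1}{3}$. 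This yields roughly $\tfrac{2|w|}{3}$ rather than $\tfrac{|w|}{2}$, which is still enough for the $\tfrac34$ claim, so your argument is repairable --- but the step you wrote proves a false bound.

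Second, you drop the popped symbols from the final count. \algpreproc, the \algprefsuff{} call inside \algblocksc, and the two runs of \algpairc{} each prepend/append material to $X$, and that material becomes part of the adjacent explicit words --- up to $8$ symbols per word per phase. These extra symbols must appear on the right-hand side of the length estimate ($|A_i'|\le \tfrac{2|A_i|}{3}+\tfrac{25}{3}$ in the paper), and they are exactly what makes ``short stays short'' nontrivial: a word of length $100$ first grows before it is compressed, and one has to check that $\tfrac{2}{3}\cdot 100+\Ocomp(1)\le 100$. Your justification for that part (``compressions only shorten it and the modifications don't touch the interior'') does not address the growth at all. Relatedly, for the last sentence of the lemma your claim that merging the last letter of $A_0$ with an outside neighbour ``still loses a boundary letter'' is not a net loss: that neighbour is a letter just popped \emph{into} $A_0$, so the word gained one and lost one. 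The paper instead observes that each left-pop is accompanied by trimming the first letter of $A_0$ (to restore the normal form~\eqref{eq:univariate}), so popping never increases $|A_0|$, and then argues that if no compression occurred strictly inside $A_0$ then $A_0$ would contain no compressed letter at all, contradicting Lemma~\ref{lem: shortening} when $|A_0|\ge 2$. You need some version of that trimming observation to close this case.
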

\begin{proof}
We shall first deal with the words and then comment how this argument extends to the solutions.
Consider two consecutive letters $a$, $b$ in any word at the beginning of a phase.
By Lemma~\ref{lem: shortening} at least one of those letters is compressed in this phase.
Hence each uncompressed letter in a word (the last letter) can be associated with the two letters
to the right that are compressed.
This means that in a word of length $k$ during the phase at least $\frac{2(k-1)}{3}$ letters are compressed
i.e.\ its length is reduced by at least $\frac{k-1}{3}$ letters.

On the other hand, letters are introduced into words by popping them from variables.
Let \emph{symbol} denote a single letter or block $a^\ell$ that is popped into a word,
we investigate, how many symbols are introduced in this way in one phase.
At most one symbol is popped to the left 
and one to the right by \algblocksc{} in line~\ref{a* solution pref},
the same holds for \algpreproc{} in line~\ref{preproc}.
Moreover, one symbol is popped to the left
and one to the right in line~\ref{pair compression onevar 2};
since this line is executed twice, this yields $8$ symbols in total.
Note that the symbols popped by \algblocksc{} are replaced by single letters,
so the claim in fact holds for letters as well.

So, consider any word $A_i \in \letters^*$ (the proof for $B_j$ is the same),
at the beginning of the phase and let $A_i'$ be the corresponding word at the
end of the phase.
There were at most $8$ symbols introduced into $A_i'$
(some of them might be compressed later).
On the other hand, by Lemma~\ref{lem: shortening}, at least $\frac{|A_i|-1}{3}$ letters were removed $A_i$ due to compression.
Hence
$$
|A_i'| \leq |A_i| - \frac{|A_i|-1}{3} + 8 \leq \frac{2|A_i|}{3} + 8\frac{1}{3} \enspace.
$$
It is easy to check that when $A_i$ is short, i.e.\ $|A_i| \leq \bound = 100$, then $A_i'$ is short as well
and when $A_i$ is long, i.e.\ $|A_i| > \bound$ then $|A_i'| \leq \frac{3}{4}|A_i|$.

It is left to show that the first word shortens by at least one letter in each phase.
Consider that if a letter $a$ is left-popped from $X$
then we created $B_0$ and in order to preserve~\eqref{eq:univariate} the first letters of $B_0$ and $A_0$ are removed.
Thus, $A_0$ gained one letter on the right and lost one on the left, so its length stayed the same.
Furthermore the right-popping does not affect the first word at all (as $X$ is not to its left);
the same analysis applies to cutting the prefixes and suffixes.
Hence the length of the first word is never increased by popping letters.
Moreover, if at least one compression (be it block compression or pair compression) is performed inside the first word, its length drops.
So consider the first word at the end of the phase let it be $A_0$.
Note that there is no letter representing a compressed pair or block in $A_0$:
consider for the sake of contradiction the first such letter that occurred in the first word.
It could not occur through a compression inside the first word (as we assumed that it did not happen),
cutting prefixes does not introduce compressed letters, nor does popping letters.
So in $A_0$ there are no compressed letters. But if $|A_0| > 1$ then this contradicts Lemma~\ref{lem: shortening}.

Now, consider a solution \sol X. We know that \sol X is either a prefix of $A_0$ or of the form $A_0^\ell A$, where
$A$ is a prefix of $A_0$, see Lemma~\ref{lem: solution form}.
In the former case, \sol X is compressed as a substring of $A_0$.
In the latter observe that argument follows in the same way, as long as we try to compress every pair of letters in \sol X.
So consider such a pair $ab$. If it is inside $A_0$ then we are done. Otherwise, $a$ is the last letter of $A_0$ and $b$ the first.
Then this pair occurs also on the crossing between $A_0$ and $X$ in $\mathcal A$, i.e.\ $ab$ is one of the crossing pairs.
In particular, we try to compress it.
So, the claim of the lemma holds for \sol X as well.
\end{proof}

The correctness of the algorithm follows from Lemmata~\ref{lem: consistent no crossing block}
(for \algblocksc), Lemma~\ref{lem:preproc preserves solutions} (for \algpreproc),
Lemma~\ref{lem:paircomp blockcomp} (for \algpair), Lemma~\ref{lem: crossing pairs preserve}
(for \algpairc)
and from the lemma below, which deals with \algtest.

\begin{lemma}
\label{lem:a* solution}
For $a \in \letters$ we can report all solutions in which $\sol X = a^\ell$
for some natural $\ell$ in $\Ocomp(|\mathcal A| +| \mathcal B|)$ time.
There is either exactly one $\ell$ for which $\sol X = a^\ell$ is a solution or
$\sol X = a^\ell$ is a solution for each $\ell$ or there is no solution of this form.
\end{lemma}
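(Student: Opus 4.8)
The plan is to reduce the problem to testing a constant number of candidate lengths $\ell$, each of which can be checked in linear time by the symbol-by-symbol comparison already described in the proof of Lemma~\ref{lem:a prefix}. First I would dispose of the degenerate case: if $A_0 \in a^+$, then by Lemma~\ref{lem:a prefix} every solution lies in $a^+$, and the equation~\eqref{eq:univariate}, after substituting $\sol X = a^\ell$, becomes an identity between two words over the single letter $a$; comparing their lengths as (affine) functions of $\ell$ shows that either the lengths agree for all $\ell$ (so every $\sol X = a^\ell$ is a solution), or they agree for exactly one $\ell$, or never. This is the ``$\sol X = a^\ell$ is a solution for each $\ell$'' branch. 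All of this is a handful of $\Ocomp(1)$ arithmetic comparisons on the already-known $a$-prefix/$a$-block data, well within the $\Ocomp(|\mathcal A| + |\mathcal B|)$ budget.

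Next, if $A_0$ begins with $a$ but $A_0 \notin a^+$, then Lemma~\ref{lem:a prefix} already gives us the key fact: there is at most one candidate $\ell$, namely $\ell = (\ell_A - \ell_B)/i$ in the notation there, and $\algtestsimple(a)$ computes it (if it is a positive integer) and verifies it in linear time. So in this case we simply invoke $\algtestsimple(a)$ and report whatever it returns; there is at most one solution of the form $\sol X = a^\ell$. Finally, if $A_0$ does not begin with $a$ at all, then by the first claim of Lemma~\ref{lem:a prefix} the first letter of every $\sol X$ is the first letter of $A_0$, which is not $a$, so there is no solution in $a^+$ and we report nothing. Combining the three cases gives exactly the trichotomy in the statement, and each case costs $\Ocomp(|\mathcal A| + |\mathcal B|)$ (a single run of $\algtestsimple$, or $\Ocomp(1)$ arithmetic).

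The only mild subtlety — and the step I would be most careful about — is the ``solution for every $\ell$'' branch, which can occur only when $A_0 \in a^+$. Here one must check that the equation is genuinely satisfied as a word identity for all $\ell$, not merely that the two sides have equal length: when $A_0 \in a^+$, every $A_i$ and $B_j$ must also lie in $a^*$ (otherwise substituting $\sol X = a^\ell$ would leave a non-$a$ letter on one side but, by the length bookkeeping, mismatched position on the other), so the identity reduces purely to the linear length equation $\sum$(lengths on the $\mathcal A$ side) $= \sum$(lengths on the $\mathcal B$ side) as a polynomial in $\ell$; its leading coefficient is $n_{\mathcal A} - n_{\mathcal B}$ and its constant term is $|A_0| + \sum|A_i| - \sum|B_j|$. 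If both vanish, every $\ell \ge 1$ works; if only the leading one vanishes, none works; otherwise the unique root is the sole candidate and is then verified. This is a finite computation on data already available, so the linear time bound holds, completing the proof.
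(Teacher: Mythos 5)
Your two easy cases are fine and agree with Lemma~\ref{lem:a prefix}: if $A_0$ does not begin with $a$ there is no solution in $a^+$, and if $A_0$ begins with $a$ but $A_0\notin a^+$ there is at most one candidate, found and verified by $\algtestsimple(a)$ in linear time. The gap is in the case $A_0\in a^+$, which is the essential one (it is exactly the situation in which \algsonevar{} calls this procedure). Your claim that ``every $A_i$ and $B_j$ must also lie in $a^*$'' is false: the equation $aXbaX = XabXa$ is in the normal form~\eqref{eq:univariate}, has $A_0=a\in a^+$ and $A_1=ba\notin a^*$, and yet $\sol X=a^\ell$ is a solution for every $\ell\geq 1$ (both sides evaluate to $a^{\ell+1}ba^{\ell+1}$). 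Consequently the reduction to a single affine length equation is unsound in both directions: the rule implicit in your parenthetical (reject when some word is not in $a^*$) wrongly rejects this satisfiable instance, while the bare length comparison wrongly accepts all $\ell$ for, say, $aXbaX=XabXb$, whose two sides have equal length $2\ell+3$ for every $\ell$ but are never equal as words.

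What is needed in this case --- and what the paper does --- is an iterative peeling argument rather than a one-shot length count. For $\sol X=a^\ell$ the $a$-prefixes of $\sol{\mathcal A}$ and $\sol{\mathcal B}$ have lengths $k_A\ell+\ell_A$ and $k_B\ell+\ell_B$, affine in $\ell$. If these agree for no natural $\ell$, reject; if for exactly one $\ell$, test that single candidate letter by letter; and if for all $\ell$, strip the matched $a$-prefixes (together with the occurrences of $X$ inside them) from both sides and recurse on the remaining equation $\mathcal A'=\mathcal B'$, whose first letters are now non-$a$ and must coincide. The trichotomy in the statement comes out of this recursion --- the ``every $\ell$'' outcome arises only when the recursion empties both sides simultaneously --- and the $\Ocomp(|\mathcal A|+|\mathcal B|)$ bound follows because each letter of the equation is consumed at most once over all recursive calls. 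Your proof is missing this recursion entirely, and without it the case $A_0\in a^+$ cannot be decided correctly.
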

Note that we do not assume that the first or last word is a block of $a$s.
\begin{proof}
The algorithm and proof is similar as in Lemma~\ref{lem:a prefix}.
Consider a substitution $\sol X = a^\ell$.
We calculate the length of the $a$-prefix of $\sol {\mathcal A}$ and $\sol {\mathcal B}$.
Consider first letter other than $a$ in $\mathcal A$,
let it be in the $A_{k_A}$ and suppose that there were $\ell_A$ letters $a$ before it
(if there is non such letter, imagine we attach an `ending marker' to both $\mathcal A$ and $\mathcal B$,
which then becomes such letter).
Then the length of the $a$-prefix of \sol {\mathcal A} is $k_A \cdot \ell + \ell_A$.
Let additionally $\mathcal A'$ be obtained from $\mathcal A$ by removing those letters $a$ and variables in between them.
Similarly, define $k_B$, $\ell_B$ and $\mathcal B'$.
Then the length of the $a$-prefix of $\sol {\mathcal B}$ is $k_B \cdot \ell + \ell_B$.

The substitution $\sol X = a^\ell$ is a solution if and only if $k_A \cdot \ell + \ell_A = k_B \cdot \ell + \ell_B$
and $\sol {\mathcal A'} = \sol {\mathcal B'}$.
Consider the number of natural solutions of the equation
$$
k_A \cdot x + \ell_A = k_B \cdot x + \ell_B:
$$
\begin{description}
	\item[no natural solution]
	clearly there is no solution of the word equation $\mathcal A = \mathcal B$;
	\item[one solution $x = \ell$]
	then $\sol X = a^\ell$ is the only possible solution from $a^+$ of $\mathcal A = \mathcal B$.
	To verify whether \solution{} satisfies $\mathcal A' = \mathcal B'$
	we apply the same strategy as in $\algtestsimple(a)$: we evaluate both sides of $\mathcal A' = \mathcal B'$
	under the substitution $\sol X = a^\ell$ on the fly. The same argument as in Lemma~\ref{lem:a prefix} shows
	that the running time is linear in $|\mathcal A'| + |\mathcal B'|$
	\item[satisfied by all natural numbers]
	then the $a$-prefixes of $\mathcal A$ and $\mathcal B$ are of the same length for each $\sol X \in  a^*$.
	We thus repeat the procedure for $\mathcal A' = \mathcal B'$,
	shortening them so that they obey the form~\eqref{eq:univariate}, if needed.
	Clearly, solutions in $a^*$ of $\mathcal A' = \mathcal B'$ are exactly the solutions
	of $\mathcal A = \mathcal B$ in $a^*$.
\end{description}
The stopping condition for the recurrence above is obvious: if $\mathcal A'$ and $\mathcal B'$ are both empty then we are done
(each $\sol X = a^\ell$ is a solution of this equation),
if exactly one of them is empty and the other is not then there is no solution at all.

Lastly, observe that the cost of the subprocedure above is proportional to the amount of read letters, which are
then not read again, so the running time is $\Ocomp(|\mathcal A| + |\mathcal B|)$
\qedhere
\end{proof}

\subsubsection*{Running time}
Concerning the running time, we first show that one phase runs in linear time, which follows by standard approach,
and then that in total the running time is $\Ocomp(n + \#_X \log n)$.
To this end we assign in a fixed phase to each $\mathcal A$ $i$ word and $\mathcal B$ $j$ word 
cost proportional to its length.
For a fixed $\mathcal A$ $i$ word the sum of costs assigned while it was long forms a geometric sequence,
so sums up to at most constant more than the initial length of $\mathcal A$ $i$ word;
on the other hand the cost assigned when $\mathcal A$ $i$ word  is short is $\Ocomp(1)$ per phase ad there are $\Ocomp( \log n)$
phases.

\begin{lemma}
\label{lem:one iteration cost}
One phase of \algsonevar{} can be performed in $\Ocomp(|\mathcal A| + |\mathcal B|)$ time.
\end{lemma}
\begin{proof}
For grouping of pairs and blocks we use \algradix,
to this end it is needed that the alphabet of (used) letters can be identified
with consecutive numbers, i.e.\ with an interval of at most $|\mathcal A| + |\mathcal B|$ integers.
In the first phase of \algsonevar{} this follows from the assumption on the input.
\footnote{In fact, this assumption can be weakened a little: it is enough to assume that
$\letters \subseteq \{1,2,\ldots, \poly(|\mathcal A| + |\mathcal B|)\}$:
in such case we can use \algradix{} to sort \letters{} in
time $\Ocomp(|\mathcal A| +| \mathcal B|)$ and then replace \letters{} with set of consecutive natural numbers.}
At the end of this proof we describe how to bring back this property at the end of the phase.

To perform \algblocksc{} we want for each letter $a$ occurring in the equation
to have lists of all maximal $a$-blocks occurring in $\mathcal A = \mathcal B$
(note that after \algprefsuff{} there are no crossing blocks, see Lemma~\ref{lem:cutpref cutsuff}).
This is done by reading $\mathcal A = \mathcal B$ and listing triples $(a,k,p)$,
where $k$ is the length of a maximal block of $a$s and $p$ is a pointer to the beginning of this occurrence.
Notice, that the maximal block of $a$'s may consist also
of prefixes/suffixes that were cut from $X$ by \algprefsuff.
However, by Lemma~\ref{lem:a prefix} such a prefix is of length at most $|A_0| \leq |\mathcal A| +| \mathcal B|$\
(and similar analysis applies for the a suffix).
Then each maximal block includes at most one such prefix and one such suffix
thus the length of the $a$ maximal block is at most $3(|\mathcal A| +| \mathcal B|)$.
Hence, the triples $(a,k,p)$ can be sorted by their first two coordinates
using \algradix{} in total time $\Ocomp(|\mathcal A| +| \mathcal B|)$.

After the sorting, we go through the list of maximal blocks.
For a fixed letter $a$, we use the pointers to localise
$a$'s blocks in the rules and we replace each of its maximal block 
of length $\ell>1$ by a fresh letter.
Since the blocks of $a$ are sorted, all blocks of the same length
are consecutive on the list, and replacing them by the same letter is easily done.

To compress all non-crossing pairs, i.e.\ to perform the loop in line~\ref{pair compression onevar},
we do a similar thing as for blocks:
we read both $\mathcal A$ and $\mathcal B$, whenever we read a pair $ab$ where $a \neq b$ and both $a$ and $b$
are not letters that replaced blocks during the blocks compression,
we add a triple $(a,b,p)$ to the temporary list,
where $p$ is a pointer to this position.
Then we sort all these pairs according to lexicographic order on first two coordinates,
we use \algradix{} for that.
Since in each phase we number the letters occurring in $\mathcal A = \mathcal B$ using consecutive numbers,
this can be done in time $\Ocomp(|\mathcal A| +| \mathcal B|)$.
The occurrences of the crossing pairs can be removed from the list:
by Lemma~\ref{lem:preproc preserves solutions} there are at most two crossing pairs and they can be easily established
(by looking at $A_0XA_1$).
So we read the sorted list of pairs occurrences and we remove from it the ones that correspond to a crossing pair.
Lastly, we go through this list and replaces pairs, as in the case of blocks.
Note that when we try to replace $ab$ it might be that this pair is no longer there as one of its letters was already replaced,
in such a case we do nothing. This situation is easy to identify: before replacing the pair we check whether it is indeed
$ab$ that we expect there, as we know $a$ and $b$, this is done in costant time.

We can compress each of the crossing pairs naively in $\Ocomp(|\mathcal A| +| \mathcal B|)$ time
by simply first applying the popping and then reading the equation form the left to the right
and replacing occurrences of this fixed pair.

It is left to describe, how to enumerate (with consecutive numbers) letters in \letters{} at the end of each phase.
Firstly notice that we can easily enumerate all letters introduced in this phase
and identify them (at the end of this phase) with $\{1, \ldots, m\}$, where $m$
is the number of introduced letters (note that none of them were removed during the \algsolveeq).
Next by the assumption the letters in \letters{} (from the beginning of this phase)
are already identified with a subset of $\{1, \ldots , |\mathcal A|+|\mathcal B|\}$,
we want to renumber them,
so that the subset of letters from \letters{} that are present at the end of the phase
is identified with $\{m+1, \ldots, m+m'\}$ for an appropriate $m'$.
To this end we read the equation, whever we spot a letter $a$ that was present at the beginning of the phase
we add a pair $(a,p)$ where $p$ is a pointer to this occurrence.
We sort the list in time $\Ocomp(|\mathcal A|+|\mathcal B|)$.
From this list we can obtain a list of present letters together with list of pointers to their occurrences in the equation.
Using those pointers the renumbering is easy to perform in $\Ocomp(|\mathcal A|+|\mathcal B|)$ time.

So the total running time is $\Ocomp(|\mathcal A|+|\mathcal B|)$.
\qedhere
\end{proof}

The amortisation, especially in the next section, is much easier to be shown when we know that both the first and last words are long.
This assumption is not restrictive, as as soon as one of them becomes short, the remaining running time of \algsolveeq{} is linear.

\begin{lemma}
\label{lem: first is short}
As soon as first or last word becomes short,
the rest of the running time of \algsonevar{} is $\Ocomp(n)$.
\end{lemma}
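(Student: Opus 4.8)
The plan is to combine two facts. First, once the first word is short, the main loop of \algsonevar{} performs only $\Ocomp(1)$ more phases. Second, the total size $|\mathcal A|+|\mathcal B|$ of the equation stays $\Ocomp(n)$ throughout the whole run of \algsonevar{}. Together these bound the cost of the remaining phases, and of the concluding call to \algtest{}, by $\Ocomp(n)$. The case in which the \emph{last} word becomes short is symmetric, using the symmetric versions of Lemma~\ref{lem:a prefix} and of the last (``additionally'') claim of Lemma~\ref{lem:reducing length}, so I would spell out only the case of the first word.

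For the first fact: suppose the first word is short at the start of some phase. By Lemma~\ref{lem:reducing length} it remains short in every later phase. As long as the main loop keeps running, its guard forces the first word not to be a block of a single letter, hence (being non-empty) it has at least two letters, so the ``additionally'' part of Lemma~\ref{lem:reducing length} makes it lose at least one letter in that phase. Thus after at most $\bound$ further phases the first word becomes a block of a single letter --- either directly, or by shrinking to length $1$ --- whereupon the loop guard fails, \algsonevar{} leaves the loop, and only \algtest{} is executed afterwards.

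For the second fact: the number of explicit words equals $n_{\mathcal A}+n_{\mathcal B}$ up to an additive constant and never changes once the normal form~\eqref{eq:univariate} is in force, since popping and the ensuing re-normalisation only move letters between neighbouring words and neither create nor delete words; so this number is $\Ocomp(n)$. By Lemma~\ref{lem:reducing length}, in any phase in which a word is long its length shrinks to at most $\frac{3}{4}$ of its previous value, and once a word is short it stays short, of length at most $\bound$. Hence at the start of any phase each explicit word either is short --- contributing at most $\bound$ --- or has been long in every preceding phase, and so has length at most its initial length; summing, the long words contribute at most $N_0$ and the short ones contribute at most $\bound$ apiece over $\Ocomp(n)$ words, so $|\mathcal A|+|\mathcal B| \le N_0 + \Ocomp(n) = \Ocomp(n)$, where $N_0 = \Ocomp(n)$ is the size at the start of the algorithm.

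Putting these together: from the phase in which the first word becomes short there remain $\Ocomp(1)$ loop iterations, each of cost $\Ocomp(|\mathcal A|+|\mathcal B|) = \Ocomp(n)$ by Lemma~\ref{lem:one iteration cost} --- so $\Ocomp(n)$ in total --- followed by \algtest{}; and when the loop stops the first word is some $c^+$, so by Lemma~\ref{lem:a prefix} every solution lies in $c^+$, and \algtest{} reports all of them in $\Ocomp(|\mathcal A|+|\mathcal B|) = \Ocomp(n)$ time by Lemma~\ref{lem:a* solution}. Hence the remaining running time is $\Ocomp(n)$. I expect the main obstacle to be the size bound: one has to make sure the number of words really stays $\Ocomp(n)$ and that a long word never grows, so that the long words together contribute at most $N_0$ while the $\Ocomp(n)$ short words contribute $\Ocomp(n)$ in aggregate (as $\bound$ is a fixed constant). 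Everything else is a routine assembly of Lemmata~\ref{lem:reducing length}, \ref{lem:one iteration cost}, \ref{lem:a prefix} and~\ref{lem:a* solution}.
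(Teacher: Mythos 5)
Your proposal is correct and follows essentially the same route as the paper's own proof: $\Ocomp(1)$ remaining phases because the short first (or last) word loses a letter per phase until it becomes a single-letter block and the loop guard fails, each phase costing $\Ocomp(|\mathcal A|+|\mathcal B|)=\Ocomp(n)$ by Lemma~\ref{lem:one iteration cost}, plus an $\Ocomp(n)$ call to \algtest{} justified by Lemma~\ref{lem:a* solution}. You merely spell out two steps the paper leaves implicit — why the total equation size stays $\Ocomp(n)$ and why the loop terminates — which is a welcome, not a divergent, elaboration.
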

\begin{proof}
One phase takes $\Ocomp(|\mathcal A|+|\mathcal B|)$ time by Lemma~\ref{lem:one iteration cost}
(this is at most $\Ocomp(n)$ by Lemma~\ref{lem:reducing length})
and as Lemma~\ref{lem:reducing length} guarantees that both the first word and the last word 
are shortened by at least one letter in a phase, there will be at most $\bound = \Ocomp(1)$ many phases.
Lastly, Lemma~\ref{lem:a* solution} shows that \algtest{} also runs in $\Ocomp(n)$.
\qedhere
\end{proof}

So it remains to estimate the running time until one of the last or first word becomes short.

\begin{lemma}
\label{lem: till one is first}
The running time of \algsonevar{} till one of first or last word becomes short
is $\Ocomp(n + (n_{\mathcal A} + n_{\mathcal B}) \log n)$.
\end{lemma}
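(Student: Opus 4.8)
The plan is to amortize the cost of each phase against the lengths of the long words and, separately, against the number of variable occurrences during the $\Ocomp(\log n)$ phases in which a word is short. By Lemma~\ref{lem:one iteration cost} a single phase costs $\Ocomp(|\mathcal A| + |\mathcal B|)$, so it suffices to bound $\sum_{\text{phases}} (|\mathcal A| + |\mathcal B|)$ until the first or last word becomes short. I would split $|\mathcal A| + |\mathcal B|$ in a given phase into the contribution of the long words plus the contribution of the short words. By Lemma~\ref{lem:reducing length} a long word either shrinks by a factor $3/4$ each phase, or drops below the threshold $\bound$; a short word stays short. So for a fixed $\mathcal A$ $i$-word, the sequence of its lengths over the phases in which it is long is dominated by a geometric sequence with ratio $3/4$, hence the sum of those lengths is $\Ocomp(\ell_i)$, where $\ell_i$ is the length of that word at the moment it first became long (which, for most words, is just its initial length). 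Summing over all words gives $\Ocomp(n)$ total, since the words partition the input and popping only moves $\Ocomp(1)$ symbols per phase into each word (and these are accounted for in Lemma~\ref{lem:reducing length}).

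Next I would bound the contribution of the short words. Each short word contributes $\Ocomp(\bound) = \Ocomp(1)$ per phase, and there are $n_{\mathcal A} + n_{\mathcal B} + 1$ explicit words in total (one more word than variable occurrences on each side, but at most $n_{\mathcal A}+n_{\mathcal B}+O(1)$ altogether). So the short-word contribution in one phase is $\Ocomp(n_{\mathcal A}+n_{\mathcal B})$. It remains to argue that there are only $\Ocomp(\log n)$ phases before the first or last word becomes short. This is exactly where Lemma~\ref{lem:reducing length} is used again: while the first word is long, its length drops by a factor $3/4$ each phase, so after $\Ocomp(\log n)$ phases it must fall below $\bound$, i.e.\ become short — at which point, by hypothesis, we stop counting. (If the first word never becomes long in the first place, it is already short and the count is $0$.) Multiplying, the short-word contribution over all phases is $\Ocomp((n_{\mathcal A}+n_{\mathcal B})\log n)$.

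Adding the two contributions, the total running time until one of the first or last word becomes short is $\Ocomp(n) + \Ocomp((n_{\mathcal A}+n_{\mathcal B})\log n) = \Ocomp(n + (n_{\mathcal A}+n_{\mathcal B})\log n)$, as claimed. A small technical point I would spell out: a word that is short can, in principle, temporarily exceed $\bound$ after popping — but Lemma~\ref{lem:reducing length} already rules this out, since it states that a short word remains short after a phase; so the clean dichotomy "long words form geometric sequences, short words cost $\Ocomp(1)$" holds without caveats. I expect the main obstacle to be bookkeeping the popped symbols correctly — ensuring that the geometric-decay argument for each long word is not derailed by the constant inflow of symbols from variable popping — but Lemma~\ref{lem:reducing length} has already absorbed this, so the remaining work is just the two summations above.
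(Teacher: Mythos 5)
Your proposal is correct and follows essentially the same route as the paper: distribute the per-phase cost of Lemma~\ref{lem:one iteration cost} over the explicit words proportionally to their lengths, sum the long-word charges as geometric series totalling $\Ocomp(n)$, and bound the short-word charges by $\Ocomp(1)$ per word per phase over the $\Ocomp(\log n)$ phases during which the first (or last) word remains long. The only cosmetic difference is your caveat about a word ``first becoming long''; by Lemma~\ref{lem:reducing length} a short word never becomes long again, so the relevant length is always the initial one.
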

\begin{proof}
By Lemma~\ref{lem:one iteration cost} the time of one iteration of \algsonevar{}
is $\Ocomp(|\mathcal A| + |\mathcal B|)$.
We distribute the cost among the $\mathcal A$ words and $\mathcal B$ words:
we charge $\beta |A_i|$ to $\mathcal A$ $i$-word and $\beta |B_j|$ to $\mathcal B$ $j$-word, for appropriate positive $\beta$.
Fix $\mathcal A$ $i$-word, we separately estimate how much was charged to it when
it was a long and short word.
\begin{description}
	\item[long] Let $n_i$ be the initial length of $\mathcal A$ $i$-word.
	Then by Lemma~\ref{lem:reducing length} the length in the $(k+1)$-th phase it at most $(\frac{3}{4})^k n_i$
	and so these costs are at most $\beta n_i + \frac{3}{4} \beta n_i + (\frac{3}{4})^2 \beta n_i + \ldots \leq 4 \beta n_i$.
	\item[short] Since $\mathcal A$ $i$-word is short, its length is at most $\bound$,
	so we charge at most $\bound \beta$ to it.
	Notice, that there are $\Ocomp(\log n)$ iterations of the loop in total,
	as first word is of length at most $n$ and it shortens by $\frac{3}{4}$ in each iteration when it is long
	and we calculate only the cost when it is long.
	Hence we charge in this way $\Ocomp(\log n)$ times, so in total $\Ocomp(\log n)$.
\end{description}
Summing those costs over all phases over all words and phases yields $\Ocomp((n_{\mathcal A} + n_{\mathcal B})\log n)$.
	\qedhere
\end{proof}

\section{Heuristics and Better Analysis}
\label{sec:faster}
The intuition gained from the analysis in the previous section, especially in Lemma~\ref{lem: till one is first}
is that the main obstacle in obtaining the linear running
time is the necessity of dealing with short words,
as  the time spend on processing them is difficult to charge.
This applies to both the compression performed within the short words,
which does not guarantee any reduction in length, see Lemma~\ref{lem:reducing length}, and to testing of the candidate solutions,
which cannot be charged to the length decrease of the whole equation.

Observe that by Lemma~\ref{lem: first is short} as soon as the first or last word becomes short,
the remaining running time is linear.
Hence, in our improvements of the running time we can restrict ourselves to the case, in which the first and last word are long.

The improvement to linear running time is done by four improvements in algorithm analysis and employed data structures,
which are described in details in the following subsections:
\begin{description}
	\item[several equations] Instead of a single equation, we store a system of several equations
	and look for a solution of such a system.
	This allows removal of some words from the equations that always correspond to each other.
	and thus decreases the overall storing space and testing time.
	This is described in Section~\ref{subsec: several} and Section~\ref{subsec: storing}.
	
	\item[small solutions] We identify a class of particularly simple solutions, called \emph{small},
	and show that a solution is reported within $\Ocomp(1)$ phases from the moment when it became small.
	In several problematic cases of the analysis we are able to show that the solutions involved are small
	and so it is easier to charge the time spent on testing them.
	Section~\ref{subsec: small} is devoted to this issue.
	
	\item[storage] The storage is changed so that all words are represented
	by a structure of size proportional to the size of the \emph{long words}.
	In this way the storage space decreases by a constant factor in each phase and so the running time (except for testing)
	is linear.
	This is explained in Section~\ref{subsec: storing} 
	
	\item[testing] The testing procedure is modified, so that the time it spends on the short words is reduced.
	In particular, we improve the rough estimate that one \algtestsimple{} takes time proportional to the equation
	to an estimation that actually counts for each word whether it was included in the test or not.
	Section~\ref{subsec: testing} is devoted to this.
\end{description}

\subsection{Suffix arrays and lcp arrays}
We use a standard data structure for comparisons on strings:
a suffix array $\textsl{SA}[1 \twodots m]$ for a string $w[1 \twodots m]$ stores the $m$ non-trivial suffixes
of $w$, that is $w[m], w[m-1\twodots m], \ldots, w[1\twodots m]$ in (increasing) lexicographical order.
In other words, $\textsl{SA}[k] = p$ if and only if $w[p \ldots m]$ is the $k$-th suffix according
to the lexicographical order.
It is known that such an array can be constructed in $\Ocomp(m)$ time~\cite{suffixarrays}
assuming that \algradix{} is applicable to letters,
i.e.\ that they are integers from $\{1, 2, \ldots, m^c\}$ for some constant $c$.
We assume explicitly that this is the case in our problem.

Using a suffix array the equality testing for substrings of $w$
reduces to the \emph{longest common prefix} (lcp) query:
observe that $w[i \twodots i + k] = w[j \twodots j+k]$ if and only if the common prefix of
$w[i \twodots m]$ and $w[j \twodots m]$ is at least $k$.
The first step in constructing a data structure for answering such queries is the LCP array:
for each $i = 1, \ldots, m-1$ the $LCP[i]$ stores the length of the longest common prefix of $SA[i]$ and $SA[i+1]$.
Given a suffix array, the LCP array can be constructed in linear time~\cite{lcpsuffixarrays}, however,
the linear-time construction of suffix arrays can be in fact extended to return also the LCP array~\cite{suffixarrays}.

When the LCP array is supplied, the general longest prefix queries reduce to the range minimum queries:
the longest common prefix of $SA[i]$ and $SA[j]$ (for $i < j$) is the minimum among $LCP[i], \ldots, LCP[j-1]$,
and so it is enough to have a data structure that answers the queries about the minimum in the range 
in constant time.
Such data structures in general case are known and in case of LCP arrays even simpler construction were given~\cite{rmq}.
The construction time is linear and query time is $\Ocomp(1)$~\cite{rmq}.
Hence, after a linear preprocessing, we can calculate the length of the longest common prefix of two substrings of a given string in $\Ocomp(1)$ time.

\subsection{Several equations}
\label{subsec: several}
The improved analysis assumes that we do not store a single equation,
instead, we store several equations and look for substitutions that simultaneously satisfy all of them.
Hence we have a collection $\mathcal A_i = \mathcal B_i$ of equations, for $i=1, \ldots, m$,
each of them is of the form described by~\eqref{eq:univariate};
by $\mathcal A = \mathcal B$ we denote the whole system of those equations.
In particular, each of those equations specifies the first and last letter of the solution,
length of the $a$-prefix and suffix etc.,
exactly in the same way as it does for a single equation.
If there is a conflict, as two equations give different answers regarding the first/last letter or the length of the $a$-prefix or $b$-suffix,
then there is no solution at all.
Still, we do not check the consistency of all those answers,
instead, we use an arbitrary equation, say $\mathcal A_1 = \mathcal B_1$, to establish the first, last letter, etc.,
and as soon as we find out that there is a conflict, we stop the computation and terminate immediately.

The system of equations stored by \algsonevar{} is obtained by replacing
one equation $\mathcal A_i'\mathcal A_i'' = \mathcal B_i' \mathcal B_i''$
(where $\mathcal A_i',\mathcal A_i'', \mathcal B_i', \mathcal B_i'' \in (\letters \cup \{ X\})^*$)
with equivalent two equations $\mathcal A_i' = \mathcal B_i'$ and $\mathcal A_i'' = \mathcal B_i''$
(note that in general the latter two equation are not equivalent to the former one, however,
we perform the replacement only when they are; moreover, we need to trim them so that they satisfy the form~\eqref{eq:univariate}).

The described way of splitting the equations implies a natural order on the equations in the system:
when $\mathcal A_i'\mathcal A_i'' = \mathcal B_i' \mathcal B_i''$ is split to
$\mathcal A_i' = \mathcal B_i'$ and $\mathcal A_i'' = \mathcal B_i''$ then 
$\mathcal A_i' = \mathcal B_i'$ is before $\mathcal A_i'' = \mathcal B_i''$
(moreover, they are both before/after each equation before/after which $\mathcal A_i'\mathcal A_i'' = \mathcal B_i' \mathcal B_i''$ was).
This order is followed whenever we perform any operations on all words of the equations.
We store a list of all equations, in this order.

We store each of those equations in the same way as described for a single equation in the previous phase, i.e.\
for an equation $\mathcal A_i = \mathcal B_i$ we store a list of pointers to words on one side and list of pointers to words on the other side.
Additionally, the first word of $\mathcal A_i$ has a link to the last word of $\mathcal A_{i-1}$ and the last word of $\mathcal A_i$
similarly, the last word of $\mathcal A_i$ has a link to the first word of $\mathcal A_i$ and the first word of $\mathcal A_{i+1}$.
We also say that $A_i$ ($B_j$) is first or last if it is in any of the stored equations.

All operations on a single equation introduced in the previous sections
(popping letters, cutting prefixes and suffixes, pair compression, blocks compression)
generalise to a system of equations.
The running times are addressed in detail later on.
Concerning the properties, they are the same, we list those for which the generalisation or the proof are non-obvious:
\algpreproc{} should ensure that there are only two crossing pairs.
This is the case, as each $X$ in every equation is replaced by the same $aXb$ and $\sol X$ is the same for all equations,
which is the main fact used in the proof of Lemma~\ref{lem:preproc preserves solutions}.
Lemma~\ref{lem:reducing length} ensured that in each phase the length of the first and last word is decreased.
Currently the first words in each equation may be different,
however, the analysis in Lemma~\ref{lem:reducing length} applies to each of them.

\subsection{Small solutions}
\label{subsec: small}
We say that a word $w$ is \emph{almost periodic} with \emph{period size} $p$ and \emph{side size} $s$
if it can be represented as $w = w_1 w_2^\ell w_3$ (where $\ell$ is an arbitrary number),
where $|w_2| \leq p$ and $|w_1w_3| \leq s$; we often call $w_2$ the \emph{periodic part} of this factorisation.
(Note that several such representation may exist,
we use this notion for a particular representation that is clear from the context).
A substitution \solution{} is \emph{small}, if $\sol X = ( w )^k v$,
where $w$, $v$ are almost periodic with period and side sizes $\bound$.

The following theorem shows the main result of this section: if a solution is small,
then it is reported by \algsonevar{} within $\Ocomp(1)$ phases.

\begin{theorem}
\label{thm:solution with small height}
If \sol X is a small solution then \algsonevar{} reports it within $\Ocomp(1)$ phases.
\end{theorem}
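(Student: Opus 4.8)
The plan is to track what happens to a small solution $\sol X = (w)^k v$ through a single phase and to show that, after $\Ocomp(1)$ phases, the solution either gets reported (via one of the \algtestsimple{} calls or \algtest{}) or it becomes a block of a single letter, which is handled directly. The key quantities to control are the period size and side size of the almost-periodic factorizations of $w$ and $v$, which we need to show remain bounded by (a constant times) $\bound$ through each phase. First I would record the effect of the block compression step \algblocksc{} and of the various poppings: cutting the $a$-prefix and $b$-suffix removes at most $\bound$-length prefixes/suffixes from $\sol X$ (by Lemma~\ref{lem:a prefix}, since the $a$-prefix of $\sol X$ equals that of $A_0$ — but here we need to be careful and instead bound it via the structure of the small solution itself), so the side part $w_1w_3$ of each almost-periodic factorization grows only by $\Ocomp(\bound)$; this can be absorbed by increasing the constant in front of $\bound$ a bounded number of times. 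The periodic part $w_2$ of length $\le \bound$ gets compressed like any short word: by Lemma~\ref{lem: shortening}, in one phase at least one of every two consecutive letters inside $w_2$ gets compressed, so $|w_2|$ is replaced by something of comparable or smaller size — crucially it does not grow, since the only way a short word grows is by popping, and popping into the periodic part is exactly what turns it into part of the side. So after $\Ocomp(1)$ phases the representation stabilizes in the sense that the period and side sizes stay $\Ocomp(\bound)$.

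Next I would argue that the exponent $k$ (the number of repetitions of $w$) cannot decrease slowly forever unless the solution is reported. The point is that $|\sol X| \ge k \cdot |w|/2$ roughly, so if $k$ is large then $\sol X$ is long; and by Lemma~\ref{lem:reducing length} the length of the solution shrinks by a factor $1/4$ each phase it is not reported. Combining this with the fact that the non-exponential contribution to $|\sol X|$ is $\Ocomp(\bound)$, after $\Ocomp(\log k)$ phases we would reach $k = \Ocomp(1)$, at which point $\sol X$ itself has length $\Ocomp(\bound)$, i.e.\ it is a short solution. But a short solution is reported within $\Ocomp(1)$ further phases: once $\sol X$ has bounded length, within one phase either it gets fully compressed to a single letter (and then \algprefsuff{} or \algtest{} catches it, since $\sol X \in a^+$ is tested explicitly at each such step via \algtestsimple{}), or it is detected because $\sol X = a^\ell$ for the first letter $a$ and this is tested by \algtestsimple(a) inside \algpop{} / \algprefsuff{} and by \algtest{} at the end. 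So the real content is: $\Ocomp(1)$ phases to stabilize the period/side representation, then $\Ocomp(\log k)$ phases is \emph{not} what we want — we want $\Ocomp(1)$ total, so I need a sharper argument.

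The sharper argument, which I expect to be the main obstacle, is this: a small solution is reported within $\Ocomp(1)$ phases \emph{not} because $k$ decreases to a constant, but because the periodic structure forces an early conflict or match. Here the idea is that when $\sol X = (w)^k v$ with $w$ almost periodic, the values $\sol{\mathcal A}$ and $\sol{\mathcal B}$ of the two sides become almost-periodic strings with a small period, and by Fine--Wilf / standard combinatorics on words (the third heuristic mentioned in the introduction, comparing $(\sol X u)^m$ and $(\sol X u')^{m'}$) the equation $\sol{\mathcal A} = \sol{\mathcal B}$ reduces to a bounded-size condition on the period, the side, and the exponents. More precisely, after \algprefsuff{} removes the $a$-prefix and $b$-suffix, and after one round of block and pair compression, the occurrences of the periodic part $w_2$ inside $\sol X$ get uniformly compressed to a single letter (a block of $w_2$, once $|w_2|=1$, or a compressed pair otherwise), so within $\Ocomp(1)$ phases the periodic part of $\sol X$ has length $1$; at that point $\sol X = (c^j)^k v' = c^{jk} v'$ up to bounded-length sides, and Lemma~\ref{lem:a prefix} (applied with letter $c$ after it becomes the dominant letter, or its symmetric suffix version) together with \algtestsimple{} pins down the unique candidate and tests it. I would need to check carefully that the side parts $w_1, w_3, v$ really do stay of size $\Ocomp(\bound)$ through these $\Ocomp(1)$ phases — each of \algblocksc, \algpreproc, and the two \algpairc{} calls pops a bounded number of symbols, contributing $\Ocomp(1)$ extra letters per phase, which over $\Ocomp(1)$ phases is still $\Ocomp(\bound)$ after enlarging the implicit constant — and that when the periodic part collapses to one letter, the resulting "almost a power of a single letter" form of $\sol X$ is exactly the situation Lemma~\ref{lem:a prefix} and \algtest{} are designed to report. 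The delicate bookkeeping of how the two almost-periodic factorizations of $w$ and of $v$ interact (and how their periodic parts might be different letters or different lengths) is where the proof will require the most care.
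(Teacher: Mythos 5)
Your overall route is the paper's: track the factorisation $\sol X = w^k v$ phase by phase, show the period and side sizes stay $\Ocomp(\bound)$, let the period size shrink geometrically until the periodic part is a single letter, kill the whole power $c^{jk}$ in one block-compression step, and finish with a bounded-length string. But there are two genuine gaps. First, the central technical difficulty is not where you place it. You assert that $w_2^\ell$ is ``compressed like any short word'' so that each copy of $w_2$ shrinks the same way; but a compressed pair can straddle the boundary between two consecutive copies of $w_2$, and then the compressed string need not be of the form $w_1'(w_2')^{\ell}w_3'$ for any compressed image of $w_2$. The paper spends an entire lemma (Lemma~\ref{lem:small solution partition}) repairing this: one locates the \emph{first} pair $cd$ compressed wholly inside the periodic region, argues that all of its occurrences there are compressed simultaneously, and re-phases the factorisation to a new period $w_2' = cdz'$ with $|w_2'| = |w_2|$, paying $2|w_2|$ into the side size. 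Without some version of this re-phasing your inductive invariant (period size shrinks by a constant factor, side size stays $\Ocomp(\bound)$) is unproven --- and this is exactly the step you defer as ``delicate bookkeeping.''

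Second, your endgame is wrong as stated. Once the periodic part collapses and $\sol X = c^{jk}v'$ with $v' \neq \epsilon$ (or with nonempty side parts), this solution is \emph{not} a power of a single letter, so neither \algtestsimple{} nor \algtest{} nor Lemma~\ref{lem:a prefix} will ``pin down the unique candidate and test it'' --- those mechanisms only handle solutions in $a^+$. No Fine--Wilf argument on the equation is needed here. The actual mechanism is purely length-based: after block compression replaces $c^{jk}$ by one letter, the tracked string is a concatenation of almost periodic words whose total period and side sizes are $\Ocomp(\bound)$; by the same geometric shrinking it reaches length $\Ocomp(\bound)$ and then a single letter within a further $\Ocomp(\log\bound) = \Ocomp(1)$ phases (not one phase, as you write at one point), and a single-letter solution is reported by the \algtestsimple{} call inside \algpop. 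Your instinct in the middle paragraph --- that $k$ must decrease over $\Ocomp(\log k)$ phases --- is the trap to avoid; the resolution is that $k$ never decreases gradually, it dies in a single block compression once the period is one letter, which you half-see but do not carry through to a correct reporting argument.
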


We would like to note that the rest of the paper is independent from the proof of Theorem~\ref{thm:solution with small height},
so it might be skipped in reading.

Intuition is as follows:
observe first that in each phase we make \algpop{} and test whether $\sol X = a$, where $a$ is a single letter, is a solution.
Thus it is enough to show that a small
solution is reduced to one letter within $\Ocomp(1)$ phases.
To see this, consider first an almost periodic word, represented as $ w_1 w_2^\ell w_3$.
Ideally, all compressions performed in one phase of \algsonevar{} are done separately on $w_1$, $w_3$ and each $w_2$.
In this way we obtain a string $w_1'w_2'^\ell w_3'$ and from Lemma~\ref{lem: shortening} it follows that
$w_i'$ is shorter than $w_i$ by a constant fraction.
After $\Ocomp(\log |w_2|)$ steps we obtain a word $w_1'' w_2''^\ell w_3'' $ in which $w_2''$ is a single letter,
and so in this phase $w_2''^\ell$ is replaced with a single letter.
Then, since the length of $w_1'''w_3'''$ is at most $\bound$, after $\Ocomp(1)$ phases this is also reduced to a single letter.
Concerning the small solution, $w^kv$ we first make such an analysis for $w$, when it is reduced to a single letter (after $\Ocomp(1)$ phases)
after one additional phase $w^k = a^k$ is also reduced to one letter (by \algblocksc) and so the obtained string $a_kv'$
is a concatenation of two almost periodic strings.
Using the same analysis as above for each of them we obtain that it takes $\Ocomp(1)$ time to reduce them all to single letters.
Thus we have a $2$-letter string, which is reduced to a single letter within $2$ phases.

In reality we need to take into the account that some compression are made on the crossings of the considered strings,
however, we can alter the factorisation (into almost periodic words and almost periodic words into periodic part and rest)
of the string so that the result is almost as in the idealised case.

We say that for a substring $w$ of \sol X during one phase of \algsonevar{} the letters in $w$ are \emph{compressed independently},
if every compressed pair or block were either wholly within this $w$ or wholly outside this $w$
(in some sense this corresponds to the non-crossing compression).

The following lemma shows that given an almost periodic substring of \sol X with period size $p$ and side size $s$
we can find an alternative representation in which the period size is the same, side size increases (a bit) but each $w$ in $w^k$
in this new representation is compressed independently.
This shows that the intuition about shortening of almost periodic strings is almost precise ---
we can think that periodic part in almost periodic strings are compressed independently,
but we need to pay for that by an increase in the side size.

\begin{lemma}
\label{lem:small solution partition}
Consider almost periodic substring of \sol X with period size $p$ and side size $s$ represented as $w_1w_2^\ell w_3$,
where $w$ is not a block of single letter.
Then there is a representation of this string as $w_1'w_2'^{\ell'}w_3'$ such that
\begin{itemize}
	\item $\ell-2 \leq \ell' \leq \ell$
	\item $|w_2'| = |w_2|$ (and consequently $|w_1'| + |w_3'| \leq |w_1| + |w_3| + 2|w_2|$)
	\item the form of $w_2'$ depends solely on $w_2$ and does not depend on $w_1$, $w_3$
	(it does depend on the equation and on the order of blocks and pairs compressed by \algsonevar)
	\item the compression in one phase of \algsonevar{} compresses each $w'$ from $w'^{\ell'}$ independently.
\end{itemize}
In particular, this other representation has period size $p$ and side size $s + 2p$.
\end{lemma}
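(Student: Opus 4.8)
The statement we must prove, Lemma~\ref{lem:small solution partition}, asks us to re-factorise an almost periodic substring $w_1 w_2^\ell w_3$ of $\sol X$ so that the periodic part $w_2'$ is compressed independently in a phase, at the cost of only a mild increase of the side size. The plan is to look at the two ``boundary'' regions where a crossing compression could interact with the repeated block $w_2^\ell$: the junction between $w_1$ and the first copy of $w_2$, and the junction between the last copy of $w_2$ and $w_3$. All the internal junctions between consecutive copies of $w_2$ are copies of the \emph{same} two-letter (or $a^j a^k$-block) context, so whatever \algsonevar{} does at one of them it does at all of them, uniformly --- this is exactly the ``independent'' behaviour we want. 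The only genuinely asymmetric spots are the outermost two, and the idea is to absorb the first and last copy of $w_2$ into $w_1'$ and $w_3'$ respectively, so that the remaining $w_2'^{\ell'}$ (with $\ell' = \ell - 2$ in the worst case) sits strictly inside a region all of whose copy-boundaries are in identical contexts.

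\textbf{Key steps.} First I would set $w_2'$ to be a cyclic rotation of $w_2$ chosen so that no pair or block that \algsonevar{} compresses in this phase straddles a boundary between two consecutive copies of $w_2'$; since all internal boundaries carry the same context (the last letter of $w_2$ followed by the first letter of $w_2$, together with the maximal blocks around it), there is a single ``cut point'' inside the period that works for every copy simultaneously, and this choice depends only on $w_2$ and on which pairs/blocks the algorithm compresses --- not on $w_1,w_3$. Concretely: if the pair (or block) spanning a copy-boundary would be compressed, rotate $w_2$ so the boundary falls in the ``middle'' of that compressed unit; because $w$ is not a block of a single letter, such a rotation exists and keeps $|w_2'| = |w_2|$. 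Second, I would define $w_1' = w_1 (w_2'$ through the first copy$)$ and $w_3' = (w_2'$ through the last copy$) w_3$, absorbing the two boundary copies of the period; this gives $\ell' \ge \ell - 2$, and $|w_1'| + |w_3'| \le |w_1| + |w_3| + 2|w_2|$, hence side size $s + 2p$. Third, I would verify the independence claim: every pair or block compressed in this phase and occurring inside $w_2'^{\ell'}$ either lies wholly within one $w_2'$ or straddles an internal boundary; by the choice of rotation in step one, the straddling case is compressed uniformly across all copies, and a straddling \emph{block} is handled identically because the maximal block structure repeats. Thus each $w'$ copy undergoes exactly the same compression, which is what ``compressed independently'' means here.

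\textbf{Main obstacle.} The delicate point is the treatment of \emph{blocks} rather than pairs at the copy boundaries. A maximal block of some letter $a$ can start inside one copy of $w_2$ and continue into the next, or even span several copies (if $w_2$ itself ends and begins with $a$, so that $w_2^\ell$ contains a long $a$-run across the seam). One has to argue that, after the rotation chosen in step one, the block structure is ``aligned'': either the seam letter differs from its neighbour on one side (so no block crosses), or $w_2$ is of the form $a^j u a^k$ and the cross-copy block has a uniform length in all interior copies, so \algblocksc{} replaces each with the same fresh letter and independence is maintained --- and the genuinely irregular first/last block is exactly what gets pushed into $w_1'$, $w_3'$. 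The hypothesis that $w$ (hence $w_2$) is not a block of a single letter is precisely what guarantees a good rotation exists and that these interior blocks have bounded, uniform length. I expect that making this block-alignment argument airtight, including the interaction with the prefix/suffix cutting done by \algprefsuff{} at the very start of a phase, is where most of the care is needed; the pair case and the arithmetic on $\ell'$ and the side size are routine once the rotation is fixed.
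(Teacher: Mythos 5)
Your high-level plan --- rotate the period and absorb the first and last copies into $w_1'$ and $w_3'$ --- matches the paper's, and the bookkeeping for $\ell'$ and the side size is routine, as you say. The genuine gap is in the justification of independence. You assert that because all internal junctions between consecutive copies of $w_2$ carry the same local context, ``whatever \algsonevar{} does at one of them it does at all of them, uniformly.'' That is exactly the statement that needs proof, and it does not follow from identical local contexts: pair compression within a phase is a sequential process, and whether a given occurrence of a pair is still present when its turn comes depends on which of its letters were consumed by earlier compressions; this dependence can propagate through a periodic string and is sensitive to alignment. For instance, in $(ab)^{\ell}$ the occurrences of $ab$ are compressed and those of $ba$ are not, whereas in $b(ab)^{\ell}$ (one extra letter on the left, e.g.\ contributed by $w_1$) it may be the occurrences of $ba$ that survive and get compressed --- a single boundary letter can change the fate of every junction. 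Your concrete recipe for choosing the rotation is also problematic: ``rotate so the boundary falls in the middle of that compressed unit'' contradicts your stated goal that no compressed unit straddle a boundary, and deciding which occurrences ``would be compressed'' presupposes the very uniformity you are trying to establish.

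The paper closes this gap as follows. It writes $w_2 = a^m z b^r$ (with $z$ neither starting with $a$ nor ending with $b$) and restricts attention to the fragment $zb^r(a^mzb^r)^{\ell-2}a^mz$, which is insulated from $w_1$ and $w_3$ except at its two extremal letters. It first shows that some compression must occur wholly inside this fragment, then takes the \emph{first} such compression (say of a pair $cd$) and proves by contradiction --- using that it is the first one, so that any prior interference can only have touched the fragment's first or last letter --- that this compression hits \emph{all} occurrences of $cd$ in the fragment simultaneously. The new period $w_2'=cdz'$ then begins with a freshly introduced letter in every copy, and these fresh letters act as barriers forcing all subsequent compressions to act identically on each copy. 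Some version of this ``first synchronizing compression'' argument is what your proof is missing; without it the uniformity claim, and hence the independence bullet of the lemma, is unsupported.
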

\begin{proof}
First of all, if $\ell \leq 2$ then we take $w_2' = \epsilon$,
$k' = 0$ and concatenate $w_2^k$ to $w_1$ to obtain $w_1'$ (and take $w_3 = w_3'$).
So in the following we consider the case in which $\ell > 2$ and set $\ell' = \ell-2$.

Let $w_2 = a^m z b^r$, where $a,b \in \letters$, $m ,r\geq 1$ and $z \in \letters^*$ does not start with $a$ nor it ends with $b$,
such a representation is possible, as $w_2$ is not a block of letters.
Then $w_1w_2^\ell w_3 = w_1 (a^m zb^r)^\ell w_3$. Since $w_1$ can end with $a$ and $w_2$ can begin with $b$,
we are interested in compressions within the middle $zb^r (a^m z b^r)^{\ell-2}a^m z$.
We first show that indeed there is a compression of a substring that is fully within
the $zb^r (a^m z b^r)^{\ell-2}a^m z$:
\begin{itemize}
	\item If $m  > 1$ or $r> 1$ then we compress the block $a^m$ or $b^r$.
	\item If $m = r = 1$, $a = b$ then $ab = aa$ is a block and it is compressed.
	\item If $m = r = 1$, $a \neq b$ and $z = \epsilon$ then this substring is $b (ab)^{\ell-2} a$.
	As $\ell > 2$ the pair $ab$ is listed by \algsonevar{} and we try to compress it. If we fail then it means that one of the letters was already compressed with a letter inside the considered string.
	\item If $m = r = 1$ and $a \neq b$ and $z \neq \epsilon$ then $ba$ is listed among the pairs and we try to compress the occurrence right after the first $z$.
	If we fail then it means that one of the letters was compressed with its neighbouring letter, which is also in the string.	
\end{itemize}
Consider the first substring that is compressed and it is wholly within $zb^r (a^m z b^r)^{\ell-2}a^m z$.
There are two cases: the compressed substring is a block of letters or it is a pair.
We give a detailed analysis in the latter case, the analysis in the former case is similar.

So, let the first pair compressed wholly within this fragment $zb^r (a^m z b^r)^{\ell-2}a^m z$ be $cd$, see Fig.~\ref{fig:factors} for an illustration.
We claim that all pairs $cd$ that occurred within this fragment at the beginning of the phase are compressed at this moment.
Assume for the sake of contradiction that this is not the case.
So this means that one of the letters, say $c$ was already compressed in some other compression performed earlier.
By the choice of the compressed pair (i.e.\ $cd$), this $c$ is compressed with a letter from outside of the fragment
$zb (a^m z b^r)^{\ell-2}az$, there are two possibilities:
\begin{description}
	\item[$c$ is the last letter of $zb^r (a^m z b^r)^{\ell-2}a^m z$]
	Observe that the letter succeeding $c$ is either $b$ or a letter representing a compressed pair/string.
	In the latter case we do not make a further compression, so it has to be $b$.
	This is a contradiction: each $c$ that is a last letter of $z$ was initially followed by $b$,
	and so in fact some compression of $cb$ (note that by our choice the last letter of $z$ was not $b$, and so $b \neq c$)
	was performed wholly within $zb^r (a^m z b^r)^{\ell-2}a^m z$
	and it was done earlier than the compression of $cd$, contradiction with the choice of $cd$.
	\item[$c$ is the first letter of $zb^r (a^m z b^r)^{\ell-2}a^m z$]
	The argument is symmetric, with $a$ preceding $c$ in this case.
\end{description}

\begin{figure}
	\centering
		\includegraphics{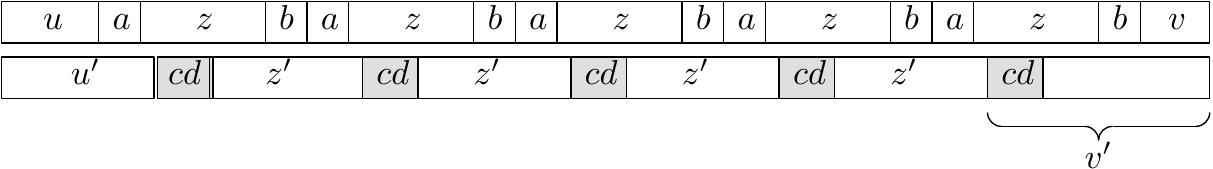}
	\caption{The alternative factorisation. The first compressed letters are in grey. For simplicity $m = r = 1$. Each $z'$ between the $cd$s is compressed independently.}
	\label{fig:factors}
\end{figure}

There are at least $\ell-1$ occurrences of $cd$ that are separated by $|w|-2$ letters,
i.e.\ the $(cdz')^{\ell-2}cd$ is a substring of $zb^r (a^m z b^r)^{\ell-2}a^m z$, for some $z'$ of length $|w|-2$,
see Fig.~\ref{fig:factors}.
We take $w_2' = cdz'$ and let $w_1'$ be the $w_1$ concatenated with string proceeding the $(cdz')^{\ell-2}cd$ and
$w_3'$ the $w_3$ concatenated with the string following this $(cdz')^{\ell-2}$  (note that the latter includes the ending $cd$, see Fig.~\ref{fig:factors}).
Clearly $|w_2'|= |w_2|$ and consequently $|w_1'| + |w_3'| = |w_1| + |w_3| + 2|w_2|$.
Note that each $w_2'$ begins with $cd$, which is the first substring even partially within $w_2'$ that is compressed,
furthermore, each of those $w_2'$ is also followed by $cd$. So the compression inside each $w_2'$ is done independently
(because by the choice of $cd$ there was no prior compression applied in $w_2'$).

Concerning the analysis when the first compressed substring is some $c^m$ it can be routinely verified that there are no essential differences in the analysis.
\qedhere
\end{proof}
The immediate consequence of Lemma~\ref{lem:small solution partition} is that when an almost periodic string is a substring of \solution,
then we can give bounds on the period size and side size on the corresponding word after one phase of \algsonevar.
\begin{lemma}
\label{lem: shortening of the almost periodic word}
Consider an almost periodic substring of \sol X with period size $p$ and side size $s$.
Then the corresponding substring after the phase of \algsonevar{} has a factorisation with period size at most $\frac{3}{4}p$
and side size at most $\frac{2}{3}s + \frac{7}{3}p$.
\end{lemma}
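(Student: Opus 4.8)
The plan is to feed the refactorisation supplied by Lemma~\ref{lem:small solution partition} into the elementary length count already carried out in the proof of Lemma~\ref{lem:reducing length}. Fix the almost periodic substring together with a representation $w_1 w_2^\ell w_3$ realising the parameters, so $|w_2|\le p$ and $|w_1 w_3|\le s$. First I would dispatch the degenerate case in which the periodic part $w_2$ is a single letter $a$: extending the block maximally we may assume $w_1$ does not end and $w_3$ does not start with $a$, so $w_2^\ell=a^\ell$ is a maximal $a$-block, hence for $\ell\ge 2$ it is replaced by a single fresh letter by \algblocksc{} in this phase, while $w_1,w_3$ evolve exactly as ordinary explicit words. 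This already yields a factorisation of the image whose periodic part is a single letter (of size $1\le\tfrac34 p$ once $p\ge 2$; and if $p\le 1$ the whole substring has length at most $s+1$ and, folding it entirely into the side, the claim is immediate) and side size comfortably below $\tfrac23 s+\tfrac73 p$. So from now on I assume $w_2$ is not a block of a single letter and apply Lemma~\ref{lem:small solution partition}.

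Lemma~\ref{lem:small solution partition} gives a representation $w_1' w_2'^{\ell'} w_3'$ with $\ell-2\le\ell'\le\ell$, $|w_2'|=|w_2|\le p$, and $|w_1'|+|w_3'|\le|w_1 w_3|+2|w_2|\le s+2p$, in which every copy of $w_2'$ is compressed independently in this phase and the compression performed inside a copy of $w_2'$ depends only on $w_2'$. Consequently all $\ell'$ copies of $w_2'$ are mapped to one and the same word $\tilde w_2$, and after the phase the substring reads $\tilde w_1 \tilde w_2^{\ell'}\tilde w_3$, where $\tilde w_1,\tilde w_3$ are the images of $w_1',w_3'$; these evolve like ordinary explicit words, the only exceptions being the at most two junctions with the rest of \sol X and whatever prefix/suffix \algpop{} strips from the ends of \sol X, all of which together inflate their total length by only $\Ocomp(1)$.

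It remains to estimate the two sizes. For the period part, the construction of $w_2'$ in Lemma~\ref{lem:small solution partition} forces $w_2'$ to start with a pair (or block) that is itself compressed inside $w_2'$, so $w_2'$ loses at least one letter; combined with Lemma~\ref{lem: shortening} (of any two consecutive letters at least one is compressed) this gives $|\tilde w_2|\le\tfrac{2|w_2'|+1}{3}$ in general and $|\tilde w_2|\le|w_2'|-1$ when $|w_2'|\le 4$, and in both ranges $|\tilde w_2|\le\tfrac34|w_2'|\le\tfrac34 p$. For the side part the same counting as in Lemma~\ref{lem:reducing length} gives $|\tilde w_1|+|\tilde w_3|\le\tfrac23(|w_1'|+|w_3'|)+\Ocomp(1)\le\tfrac23 s+\tfrac43 p+\Ocomp(1)\le\tfrac23 s+\tfrac73 p$, the last inequality absorbing the additive constant into the slack $\tfrac73 p-\tfrac43 p=p$ (for $p$ below that constant the substring is of bounded length and one argues directly by folding everything into the side). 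If $\ell'=0$ we regard the whole image as the side, with period size $0$; otherwise $\tilde w_1 \tilde w_2^{\ell'}\tilde w_3$ is the desired factorisation, with period size at most $\tfrac34 p$ and side size at most $\tfrac23 s+\tfrac73 p$.

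The only genuinely delicate point — and the step I expect to take the real work — is the boundary bookkeeping: checking that the at most two junctions of $w_1' w_2'^{\ell'} w_3'$ with the surrounding part of \sol X, together with the prefixes/suffixes \algpop{} removes from the two ends of \sol X, change $|\tilde w_1|+|\tilde w_3|$ only by an additive constant (not proportionally) and that this constant is small enough to be swallowed by the gap between $\tfrac43 p$ and $\tfrac73 p$. Everything else is the routine length estimate from Lemma~\ref{lem:reducing length} together with the structural guarantees of Lemma~\ref{lem:small solution partition}.
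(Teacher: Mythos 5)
Your proposal is correct and follows essentially the same route as the paper: dispatch the single-letter-block case via block compression, otherwise apply Lemma~\ref{lem:small solution partition} and then count compressed letters with Lemma~\ref{lem: shortening} separately for the side parts and the (independently compressed) period part, using the $|u_2|-1$ bound for short periods and $\frac{2|u_2|+1}{3}$ otherwise. The boundary bookkeeping you flag as delicate is resolved in the paper simply by the stated convention that a border letter compressed with something outside the substring is still counted in the corresponding substring, giving an exact additive term of $\frac{4}{3}$ which is absorbed using $p\ge 2$ in the non-degenerate case (popping does not insert anything into a substring of \sol{\mathcal A}, so no further correction is needed).
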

There are two remarks: firstly, if period size of the original word was $1$ then the given bound is $\frac{3}{4}<1$,
which holds, i.e.\ the corresponding word has no periodic part in the factorisation.
Secondly, the first (last) letter of the substring may be compressed with the letter to the left (tight, respectively),
so outside of the considered substring.
In such a case we still include the letter representing the replaced pair or block in the corresponding substring.
\begin{proof}
Let us fix the factorisation $w_1 w_2^\ell w_3$ of $w$,
where $p = |w_2|$ is the period size and $s = |w_1w_3|$ is the side size.
First of all, consider the special case, in which $w_2$ is a block of letters, say $a$,
without loss of generality we may assume that it is a single letter
(note that this simple borderline case is not covered by Lemma~\ref{lem:small solution partition}).
Without loss of generality we also may assume that $w_1$ does not end and $w_3$ does not begin with $a$,
as otherwise we can move those letters to $w_2^\ell$, decreasing the side size and not increasing the period size.
Then during the block compression the $w_2^\ell = a^\ell$ is  going to be replaced by a single letter
(this block may also include some letters from outside of $w$, when $w_1$ or $w_3$ is empty, this does not affect the analysis).
Now consider $w_1$: its first letter can be compressed with letters outside it,
otherwise each letter not compressed in the phase, except perhaps the last one, is followed by two letters that are, see Lemma~\ref{lem: shortening}.
Hence at most $2 + \frac{|w_1| - 2}{3}$ letters are uncompressed and so the length of the corresponding compressed $w_1'$
is at most $\frac{2|w_1| + 2}{3}$
and similarly for $w_3'$ its length is at most $\frac{2|w_3| + 2}{3}$.
Adding $1$ for the letter replacing $w_2^\ell$ we obtain $\frac{2|w_1w_3| + 7}{3} = \frac{2s}{3} + \frac{7p}{3}$, as claimed.

In other cases, by Lemma~\ref{lem:small solution partition} we can refactor $w$ into $u_1 u_2^{\ell'}u_3$
such that $|u_1u_3| \leq |w_1w_3| + 2|w_2|$ and $|u_2| = |w_2|$ and each $u_2$ is compressed independently (note that $|u_2| \geq 2$).
Then after one phase of \algsonevar{} the corresponding word $w'$ can be represented as $u_1'u_2'^{\ell'}u_3'$.
Let us inspect its compression rate.
The argument for $u_1$ and $u_3$ is the same as for $w_1$ and $w_2$ in the previous case,
so
$|u_1'| \leq \frac{2 |u_1| + 2}{3}$ and $|u_3'| \leq \frac{2 |u_3| + 2}{3}$. As $|u_1u_3| \leq |w_1w_3| + 2|w_2|$,
the new side size is at most $\frac{2}{3} s + \frac{4}{3} p + \frac{4}{3} \leq \frac{2}{3} s + 2p$, as $p \geq 2$.
For the period size, consider $u_2$. By Lemma~\ref{lem: shortening}, each uncompressed letter (perhaps except the last one)
is followed by a compressed one, and so $|u_2'| \leq \frac{2|u_2| + 1}{3}$.
For $|u_2| \geq 4$ this yields the desired compression rate $\frac{3}{4}$, for $|u_2| = 2$ and $|u_2| = 3$
observe that by Lemma~\ref{lem: shortening} at least one letter inside $u_2$ is compressed and we know that the compressions
inside $u_2$ are done independently, so $|u_2'| \leq |u_2| - 1$, which yields the desired bound for those two border cases.
\qedhere
\end{proof}

Imagine now we want to make a similar refactoring also for the small word
(in order to draw conclusions about shortening of \sol X, which is small).
So take $w^kv$ where both $w$ and $v$ are almost periodic words (with some period sizes and side sizes) and $k$ is some number.
When we look at $w^k$, each single $w$ can be refactored so that its periodic part is compressed independently.
Note that this is the same word, i.e.\ we are still given $w^kv$, though we have in mind a different factorisation of $w$.
However, the compression of the $w$ is influenced by the neighbouring letters,
so while each of the middle $w$ in $w^{k-2}$ is compressed in the same way,
both the first and the last $w$ can be compressed differently.
Hence, after the compression we obtain something of the form $w_1 w'^{k-2} w_2 v'$,
where $w_1, w', w_2, v'$ are almost periodic.
In the next phase the process continues and we accumulate almost periodic words on both sides of $w'^{k'}$.
So in general we deal with a word of the form $u w^k v$, where $w$ is almost periodic and $u,v$ are concatenations of almost periodic words.
The good news is that we can bound the sum of side sizes and period sizes of almost periodic words occurring in $u$, $v$.
Moreover, the period size of $w$ drops by a constant factor in each phase, so after $\Ocomp(1)$ phases it is reduced to $0$,
i.e.\ $w^k$ is almost periodic.

As a first technical step we show that Lemma~\ref{lem:small solution partition} can be used to analyse what happens with a concatenation of
almost periodic words in one phase of \algsonevar:
as in the case of a single word, see Lemma~\ref{lem: shortening of the almost periodic word},
the sum of period sizes drops by a constant factor, while the sum of side sizes drops by a constant factor but it increases by magnitude of
sum of period sizes.

\begin{lemma}
\label{lem: shortening of the almost periodic word concatenation}
Let $u$, a substring of \sol X, be a concatenation of almost periodic words with a factorisation for which the sum of period sizes if $p$
and side sizes is $s$.
Then after one phase of \algsonevar{} the corresponding string $u'$ is a concatenation of almost periodic words with a factorisation
for which the sum of period sizes is at most $\frac{3}{4} p$ and sum of side sizes is at most $\frac{2}{3}s + \frac{7}{3}p$.
\end{lemma}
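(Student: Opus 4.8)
The plan is to decompose $u$ into its constituent almost periodic words $u = v^{(1)} v^{(2)} \cdots v^{(t)}$, where each $v^{(j)}$ is almost periodic with period size $p_j$ and side size $s_j$, so that $\sum_j p_j = p$ and $\sum_j s_j = s$. I would like to apply Lemma~\ref{lem: shortening of the almost periodic word} to each $v^{(j)}$ individually; the only subtlety is that the compressions occurring at the boundaries between consecutive $v^{(j)}$ and $v^{(j+1)}$ are shared. Following the remark after Lemma~\ref{lem: shortening of the almost periodic word}, I would adopt the convention that a letter resulting from a pair/block compression straddling the boundary between $v^{(j)}$ and $v^{(j+1)}$ is assigned (say) to the corresponding substring $v^{(j)}{}'$ on the left; the boundary letter is then not double-counted, and each $v^{(j)}{}'$ is genuinely the image of $v^{(j)}$ under the phase, so Lemma~\ref{lem: shortening of the almost periodic word} applies verbatim to it.

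The second step is to add up the resulting bounds. Lemma~\ref{lem: shortening of the almost periodic word} gives, for each $j$, a factorisation of $v^{(j)}{}'$ with period size at most $\frac{3}{4}p_j$ and side size at most $\frac{2}{3}s_j + \frac{7}{3}p_j$. Concatenating these factorisations gives a factorisation of $u' = v^{(1)}{}' \cdots v^{(t)}{}'$ as a concatenation of almost periodic words, with
$$
\sum_j \tfrac{3}{4} p_j = \tfrac{3}{4} p \quad\text{and}\quad \sum_j \left(\tfrac{2}{3}s_j + \tfrac{7}{3}p_j\right) = \tfrac{2}{3}s + \tfrac{7}{3}p,
$$
which is exactly the claimed bound. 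Here I am using that ``concatenation of almost periodic words'' is closed under concatenation in the obvious way, so gluing the individual factorisations produces a legitimate witness for $u'$.

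I expect the main obstacle to be the bookkeeping at the boundaries: one must check that no compressed letter is counted in two different $v^{(j)}{}'$ and, conversely, that no letter of $u'$ is omitted, so that the $v^{(j)}{}'$ really do partition $u'$. Once the boundary convention is fixed consistently (each straddling compressed letter belongs to exactly one side), the per-word application of Lemma~\ref{lem: shortening of the almost periodic word} is immediate and the arithmetic is just summation over $j$, as above. A minor point to mention is the degenerate case where some $v^{(j)}$ has period size $1$ (no genuine periodic part); as noted in the remark following Lemma~\ref{lem: shortening of the almost periodic word}, the bound $\frac{3}{4}p_j < 1$ is still correct since the corresponding $v^{(j)}{}'$ then contributes $0$ to the sum of period sizes, and this case needs no special treatment in the summation.
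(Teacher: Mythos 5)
Your proposal is correct and matches the paper's proof, which likewise factors $u$ into its constituent almost periodic words, applies Lemma~\ref{lem: shortening of the almost periodic word} to each, and sums the guarantees. Your extra care about assigning boundary-straddling compressed letters to exactly one piece is a detail the paper handles only via the remark following Lemma~\ref{lem: shortening of the almost periodic word}, but it does not change the argument.
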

Note that as in the case of Lemma~\ref{lem: shortening of the almost periodic word} when sum of the period sizes is $1$,
then after one phase we are guaranteed that all almost periodic words in the factorisation have empty periodic parts.
Moreover, as in the case of Lemma~\ref{lem: shortening of the almost periodic word} the first and last letter 
of $u$ may be compressed with the letters outside $u$, in which case we include in the corresponding word
the letters that are obtained in this way.
\begin{proof}
Let the promised factorisation of $u$ into almost periodic words be $u_1 \cdot u_2 \cdots u_m$.
We apply Lemma~\ref{lem: shortening of the almost periodic word} to each of them.
By a simple summation of the guarantees from Lemma~\ref{lem: shortening of the almost periodic word}
the bound on the size of the period sizes is $\frac{3}{4}p$ while the bound on the sum of the side sizes is $\frac{2}{3}s + \frac{7}{3}p$. 
\qedhere
\end{proof}

The following lemma is the crowning stone of our considerations. It gives bounds on the period sizes and side sizes for the word
that can be represented as $uw^kv$, where $w$ is almost periodic and $u, v$ are concatenations of almost periodic words.

\begin{lemma}
\label{lem:new sizes}
Suppose that at the beginning of the phase of \algsonevar{} 
a substring of \sol {\mathcal A_i} can be represented as $\sol X = u w^k v$, where
$w$ is almost periodic with period size $p_w>0$ and side size $s_w$ while $u,v$ are concatenations
of almost periodic words, let the sum of their period sizes be $p_{uv}$ and side sizes $s_{uv}$.
Then the corresponding substring at the end of the phase can be represented as $u' (w')^{k'} v'$,
where $w'$ is almost periodic with period size $p_w' \leq \frac{3}{4} p_w$
and side size $s_w' \leq \frac{2}{3}s_w + \frac{11}{3}p_w$ and $u', v'$ are concatenations of almost periodic words,
the sum of their period sizes is $p_{uv}' \leq \frac{3}{4}(p_{uv} + 2p_w)$ and the sum of their side sizes $s_{uv}'$
at most $\frac{2}{3}(s_{uv} + s_w) + \frac{7}{3}(p_{uv} + 2p_w)$.
\end{lemma}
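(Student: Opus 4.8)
The plan is to decompose the word $uw^kv$ at the two "seams" where $w^k$ meets $u$ and $v$, and apply the already-established Lemmata~\ref{lem:small solution partition} and~\ref{lem: shortening of the almost periodic word concatenation} to the three pieces separately, paying careful attention to how compressions that straddle a seam are accounted for. First I would apply Lemma~\ref{lem:small solution partition} to the almost periodic word $w$ (with period size $p_w$, side size $s_w$), obtaining a refactorisation $w = u_1 w_2'^{\,\ell'} u_3$ of the \emph{individual} $w$ in which $|w_2'|=p_w$, the form of $w_2'$ depends only on $w_2$ (not on the neighbours), and each occurrence of $w_2'$ inside $(w_2')^{\ell'}$ is compressed independently. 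Crucially, because the form of $w_2'$ does not depend on $w$'s neighbours, \emph{all} $k$ copies of $w$ inside $w^k$ get the same internal periodic part, so after gathering the "rest" strings $u_1,u_3$ from adjacent copies we can write $w^k$ itself in the form $u_1 \cdot (w_2')^{\ell'} w_3' u_1' (w_2')^{\ell'} \cdots$ — i.e.\ $w^k = x \cdot (w')^{k'} \cdot y$ where $w'$ is an almost periodic word built around the independently-compressed periodic part $w_2'$, with $k'\ge k-2$, period size $p_w$, and side size at most $s_w+2p_w$ (from the $|w_1|+|w_3|\le|w_1|+|w_3|+2|w_2|$ bound in Lemma~\ref{lem:small solution partition}); the leftover $x,y$ are almost periodic words with total period size $\le 2p_w$ and total side size $\le s_w + $ (a bounded multiple of $p_w$), which get absorbed into $u$ and $v$.

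Next I would bound the two concatenations. The new left part is $u\cdot x$ and the new right part is $y\cdot v$; these are concatenations of almost periodic words, and summing the bounds from the previous paragraph, the combined period size of $u\cdot x$ and $y\cdot v$ is at most $p_{uv}+2p_w$ and the combined side size at most $s_{uv}+s_w+cp_w$ for a small constant $c$. Now apply Lemma~\ref{lem: shortening of the almost periodic word concatenation} to $u\cdot x$ and to $y\cdot v$: the new sum of period sizes is at most $\tfrac34(p_{uv}+2p_w)$ and the new sum of side sizes is at most $\tfrac23(s_{uv}+s_w) + \tfrac73(p_{uv}+2p_w)$, matching the claimed $p_{uv}'$ and $s_{uv}'$ (after noting the $cp_w$ term is swallowed by the $\tfrac73\cdot 2p_w$ slack — this is the one place where I'd want to double-check that the constants in Lemma~\ref{lem:small solution partition} are generous enough, adjusting $N$ if necessary). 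For the central block $(w')^{k'}$: since each $w'$ is compressed independently (its periodic part was built to be so, and its short sides behave as in Lemma~\ref{lem: shortening of the almost periodic word}), the analysis of Lemma~\ref{lem: shortening of the almost periodic word} applies verbatim to a single $w'$, giving a compressed $w''$ with period size $p_w'\le\tfrac34 p_w$ and side size at most $\tfrac23 s_w' + \tfrac73 p_w \le \tfrac23(s_w+2p_w)+\tfrac73 p_w = \tfrac23 s_w + \tfrac{11}{3}p_w$, which is the stated bound on $s_w'$; the leftmost and rightmost copies of $w'$, which may be compressed differently because of their outer neighbours, are split off and absorbed into $u'$ and $v'$, decreasing $k'$ by at most $2$ more and increasing the side/period totals of $u',v'$ by bounded amounts already covered by the slack.

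The main obstacle I anticipate is \textbf{bookkeeping the seams}: a pair (or block) compression can straddle the boundary between $u$ and the first $w$, between consecutive $w$'s, or between the last $w$ and $v$, and each such straddling compression must be consistently assigned to exactly one of $u'$, $(w')^{k'}$, $v'$ without double-counting and without breaking the "compressed independently" property that Lemma~\ref{lem: shortening of the almost periodic word} needs. The device that makes this work — and which I would state carefully — is that after the refactorisation of Lemma~\ref{lem:small solution partition} the boundaries of the new periodic part $w_2'$ are chosen to sit just after the first pair $cd$ compressed wholly inside the periodic region, so that no compression crosses a $w_2'$–$w_2'$ boundary; the only genuinely "free" seams are the two outermost ones, and those are exactly what gets peeled off into $u'$ and $v'$ (hence the $k'\ge k-4$-type loss, which is fine since the lemma only claims the existence of \emph{some} such representation). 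The rest is the routine arithmetic of adding up the linear bounds, which I would present compactly rather than term by term.
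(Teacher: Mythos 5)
Your decomposition is essentially the paper's own: apply Lemma~\ref{lem:small solution partition} to each copy of $w$, use the fact that the form of the new periodic part depends only on $w_2$ so that all $k$ copies refactor identically, cyclically rotate $w^k$ so that the new period $w' = w_2^{\ell} w_3 w_1$ is delimited on both sides by independently compressed copies of $w_2$, absorb the leftover prefix into $u$ and the leftover $w_2^{\ell}w_3$ into $v$, and then invoke Lemma~\ref{lem: shortening of the almost periodic word} for $w'$ and Lemma~\ref{lem: shortening of the almost periodic word concatenation} for the two flanks. Your arithmetic in this case also matches: $s_w' \leq \frac{2}{3}(s_w+2p_w)+\frac{7}{3}p_w = \frac{2}{3}s_w+\frac{11}{3}p_w$, and the flank bounds come out with the stated slack.

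There is, however, one genuine omission. Lemma~\ref{lem:small solution partition} carries the explicit hypothesis that the periodic part of $w$ is not a block of a single letter, and you invoke it without checking or discharging this hypothesis. When $w = w_1 a^{\ell} w_3$ the refactorisation argument (locating a first pair $cd$ compressed wholly inside the periodic region and realigning on its occurrences) is unavailable --- the run of $a$'s is collapsed by block compression instead --- and the paper spends roughly half of its proof on exactly this case, splitting further into: $w_1w_3=\epsilon$, where $uw^kv = u\,a^{k\ell}\,v$ is treated as a single concatenation of almost periodic words; $k=1$; and the case $w_1w_3\neq\epsilon$, $k\geq 2$, where one writes $uw^kv = uw_1 a^{\ell} w_3 w_1 (a^{\ell} w_3 w_1)^{k-2} a^{\ell} w_3 v$ and argues that each $a^{\ell}$ in the middle power is compressed independently because $w_3w_1$ neither begins nor ends with $a$. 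None of this is hard, but it is not covered by the argument you give, and it is precisely where the looser constant $\frac{3}{4}(p_{uv}+2p_w)$ in the statement (as opposed to the $\frac{3}{4}(p_{uv}+p_w)$ that the main case actually yields) is consumed. To complete the proof you need to add this borderline case explicitly.
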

\begin{proof}
Consider the factorisation of $w$ as an almost periodic word.
Consider first the main case, in which the periodic part of $w$ is not a block of single letter.
Then we can apply Lemma~\ref{lem:small solution partition} to each $w$,
obtaining a factorisation $w = w_1 w_2^\ell w_3$ such that $|w_2| \leq p_w$ and $|w_1w_3| \leq s_w + 2p_w$.
Then $uw^kv$ can be represented as
$$
u \left( w_1w_2^\ell w_3\right)^k v = u  w_1\left( w_2^\ell w_3 w_1 \right)^{k-1} w_2^\ell w_3 v \enspace .
$$
Define $u' = u w_1$ and $v' = (w_2^\ell) w_3 v$,
they are concatenations of almost periodic words, the sum of their period sizes is $p_{uv} + |w_2| = p_{uv} + p_v$
while side sizes $s_{uv} + |w_1| + |w_3| = s_{uv} + s_w + 2p_w$.
Define also $w' = w_2^\ell w_3 w_1$,
observe that each such $w'$ is delimited by $w_2$
(it includes it in the left end and to the right there is a copy of it which is not inside this $w'$)
and each $w_2$ is compressed independently, so also each $w'$ is compressed independently,
so in particular it is compressed in the same way.
Thus $u'w'^{k-1}v'$ is compressed into $u''w''^{k-1}v''$, let us estimate their sizes.

For $u'$ and $v'$ can can straightforwardly apply Lemma~\ref{lem: shortening of the almost periodic word concatenation},
obtaining that the sum of their period sizes is at most $\frac{3}{4}(p_{uv} + p_w)$
while their side sizes $\frac{2}{3}(s_{uv} + s_w + 2p_w) + \frac{7}{3}(p_{uv} + p_w) = \frac{2}{3}(s_{uv} + s_w) + \frac{7}{3}p_{uv} + \frac{11}{3}p_w$.
Concerning $w'$: we apply Lemma~\ref{lem: shortening of the almost periodic word}, which shows that the new period size is at most $\frac{3}{4}p_w$
and new side size $\frac{2}{3}(s_w + 2p_w) + \frac{7}{3}p_w = \frac{2}{3}s_w + \frac{11}{3}p_w$,
so all as claimed.

Let us return to the trivial case, in which $w = w_1 w_2^\ell w_3$ and $w_2$ is a block of a single letter.
Note that without a loss of generality, we can assume that $w_2$ is a single letter (we replace $w_2^\ell$ with $a^{|w_2|\ell}$)
and that $w_1$ does not end and $w_3$ does not begin with $a$
(we can move those letters to $w_2$, decreasing side size and not increasing the period size).
Then $uw^kv = u(w_1 a^\ell w_3)^kv$.
If $w_1w_3 = \epsilon$ this is equal to $u a^{k\ell} v$, we treat $a^{k\ell}$ as a almost periodic word with period size $1$ and side size $0$,
so $ua^{k\ell}v$ have a sum of period sizes $p_{uv} + 1 = p_{uv} + p_{w}$ and sum of side sizes $s_{uv}$.
Applying Lemma~\ref{lem: shortening of the almost periodic word concatenation} yields the claim:
the sum of period sizes is at most $\frac{3}{4}(p_{uv} + p_w)$ while the new side sizes $\frac{2}{3}s_{uv} + \frac{7}{3}(p_{uv} + p_w)$.
Similarly, when $k = 1$ we can treat $u w v$ as a concatenation of almost periodic words,
the sum of their period sizes is at most $p_{uv} + p_w$ and side sizes $s_{uv} + s_{w}$;
again, applying Lemma~\ref{lem: shortening of the almost periodic word concatenation} yields the claim:
the sum of period sizes is at most $\frac{3}{4}(p_{uv} + p_w)$ while the new side sizes $\frac{2}{3}(s_{uv}+s_w) + \frac{7}{3}(p_{uv} + p_w)$.

So let us go back to the main case, in which $w_1w_3 \neq \epsilon$ and $k \geq 2$.
Then $uw^kv = u(w_1 a^\ell w_3)^kv = uw_1 a^\ell w_3 w_1 (a^\ell w_3 w_1)^{k-2} a^\ell w_3 v$.
As $w_3w_1$ is non-empty and does not end, nor begin with $a$, each $a^\ell$ in $(a^\ell w_3 w_1)^{k-2}$ is compressed independently.
We set $u' = uw_1 a^\ell w_3 w_1$, $v' = a^\ell w_3 v$ and $w' = a^\ell w_3 w_1$.
Applying Lemma~\ref{lem: shortening of the almost periodic word concatenation} to $u'$ and $w'$
yields that after one phase the sum of period sizes is $\frac{3}{4}(p_{uv} + 2p_w)$
while side size $\frac{2}{3}(s_{uv} + 2s_w) + \frac{7}{3}(p_{uv} + 2p_w)$.
On the other hand, the period size of $w''$ is $\frac{3}{4}p_w$ while its side size at most $\frac{2}{3}s_w + \frac{7}{3}p_w$
\qedhere
\end{proof}

With Lemma~\ref{lem:new sizes} established, we can prove Theorem~\ref{thm:solution with small height}.

\begin{proof}[proof of Theorem~\ref{thm:solution with small height}]
Consider the string \sol X. We show that within $\Ocomp(\log \bound) = \Ocomp(1)$ this string is reduced to a single letter.
This means that \sol X is reported in the same time.
Note that in the following phases the corresponding solution (if unreported) \emph{is not} the corresponding string,
as we also pop letters from $X$. However, the corresponding solution is the substring of this string.

So fix a small solution and its occurrence within \sol {\mathcal A}.
It can be represented as $w^kv$, where $w$ and $v$ are almost periodic with period and side size $\bound$
We claim that in each following phase the corresponding string can be represented as
$u' w'^{k'} v'$, where $u'$ and $v'$ are concatenations of almost periodic words,
the sum of their period sizes is at most $6 \bound$ while side sizes $78 \bound$.
Also, $w'$ is almost periodic with side size at most $11 \bound$ and period size dropping by $\frac{3}{4}$ in each phase
(and at most $\bound$ at the beginning).
This claim can be easily verified by induction on the estimations given by Lemma~\ref{lem:new sizes}.
As the period size of $w'$ drops by $\frac{3}{4}$ in each phase and initially it is $\bound$,
after $\Ocomp(\log \bound)$ phases $w'$ has period size $0$.
Then inside $u' w'^{k'} v'$ we treat $w'^{k'}$ as a periodic word with period size $|w'| \leq 11 \bound$ and side size $0$.
Thus $u' w'^{k'} v'$ is a concatenation of almost periodic words with sum of period sizes at most $17 \bound$ and sum of side size at most $78 \bound$.
Then, by easy induction on bounds given by Lemma~\ref{lem: shortening of the almost periodic word concatenation},
in the following phases the corresponding string will be a concatenation of almost periodic strings,
with sum of period sizes decreasing by $\frac{3}{4}$ in each phase (and initial value $17 \bound) $and sum of side sizes at most $78 \bound$.
Thus after $\Ocomp(\log \bound )$ phase its sum of period sizes is reduced to $0$ and so it is a string of length at most $78 \bound$,
which will be reduced to a single letter within $\Ocomp(\log \bound)$ rounds, as claimed.
Since $\bound = \Ocomp(1)$.
\qedhere
\end{proof}

\subsection{Storing of an equation}
\label{subsec: storing}
To reduce the running time we store duplicates of short word only once.
Recall that for each equation we store lists of pointers pointing to strings that are the explicit words in this equation.
We store the long words in a natural way, i.e.\ each long word is represented by a separate string.
The short words are stored more efficiently:
if two short words in equations are equal we store only one string, to which both pointers point.
In this way all identical short words are stored only once (though each of them has a separate pointer pointing to it);
we call such a representation \emph{succinct}.

We show that the compression can be performed on the succinct representation,
without the need of reading the actual equation.
This allows bounding the running time using the size of the succinct representation and not the equation.

We distinguish two types of short words:
those that are substrings of long words (normal) and those that are not (overdue).
We can charge the cost of processing the normal short words to the time of processing the long words.
The overdue words can be removed from the equation after $\Ocomp(1)$ phases after becoming overdue,
so their processing time is constant per $\mathcal A$-$i$ word (or $\mathcal B$-$j$ word).

The rest of this subsection is organised as follows:
\begin{itemize}
	\item We first give precise details, how we store short and long words, see Section~\ref{subsubsec: storing details}
	and prove that we can perform compression using only succinct representation, see Lemma~\ref{lem:words are equal}.
	\item We then define precisely the normal and overdue words, see Section~\ref{subsubsec: overdue}
	as well as show that we can identify new short and overdue words, see Lemma~\ref{lem:identify overdue}.
	Then we show that overdue words can be removed $\Ocomp(1)$ phases after becoming overdue,
	see Lemma~\ref{lem:overdue can be removed} and \ref{lem:overdue}.
	\item Lastly, in Section~\ref{subsubsec: compression time}, we show that the whole compression time,
	summed over all phases is $\Ocomp(n)$.
	The analysis is done separately for long words normal short words and overdue short words.
\end{itemize}

As observed at the beginning of Section~\ref{sec:faster}, as soon as the first or last word becomes short, the remaining running time is linear.
Thus, when such a word becomes short,
we drop our succinct representation and recreate out of it the simple representation used in Sections\ref{sec:prelim}--\ref{sec:main algorithm}.
Such a recreation takes linear time.

\subsubsection{Storing details}
\label{subsubsec: storing details}
We give some more details about the storing:
All long words are stored on two doubly-linked lists, one representing the long words on the left-hand sides
and the other the long words on the right-hand sides.
Those words are stored on the lists according to the initial order of the words in the input equation.
Furthermore, for each long word we store additionally, whether it is a first or last word of some equation
(note that a short word cannot be first or last).
The short words are also organised as a list, the order on the list is irrelevant.
Each short word has a list of its occurrences in the equations, the list points to the occurrences in the natural order
(occurrences on the left-hand sides and on the right-hand sides are stored separately).

We say that such a representation is \emph{succinct} and its size is the sum of 
lengths of words stored in it
(so the sum of sizes of long words, perhaps with multiplicities, plus the sum of sizes of different short words).
Note that we do \emph{not} include the number of pointers from occurrences of short words.
We later show that in this way we do not need to actually read the whole equation in order to compress it;
it is enough to read the words in the succinct representation, see Lemma~\ref{lem:compression cost}.

We now show that such a storage makes sense, i.e.\ that if two short words become equal,
they remain equal in the following phases (note again that none of them are first, nor last).
\begin{lemma}
\label{lem:words are equal}
Consider any explicit words $A$ and $B$ in the input equation.
Suppose that during \algsonevar{} they were transformed to $A' = B'$,
none of which is a first or last word in one of the equations.
Then $A = B$ if and only if $A' = B'$.
\end{lemma}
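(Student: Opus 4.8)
The plan is to track what happens to the two words under each of the operations that \algsonevar{} performs, and argue that each operation acts on the two words "in the same way'' so long as neither is a first or last word. There are really only two kinds of modification: (i) \emph{popping} — in one phase we may prepend a prefix $a^\ell$ and/or append a suffix $b^r$ to words adjacent to a variable; and (ii) \emph{compression} — a non-crossing pair $cd$ is replaced by a fresh letter everywhere, and maximal blocks $c^m$ are replaced by fresh letters everywhere. The key point is that (ii) is a \emph{global substitution} on the whole system: the map that sends $cd\mapsto c'$ (or $c^m\mapsto c'_m$) is applied uniformly to every explicit word, so it is in particular an injective map from strings over the old alphabet that actually occur to strings over the new alphabet, and hence $A=B$ iff the images are equal. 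So compression alone preserves both the implication and its converse.

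The remaining issue is popping. First I would observe that a word $A_i$ (with $i\ge 1$ and not the last word) always sits between two variables, say $XA_iX$, or with a variable on exactly one side if it is adjacent to the first/last word's side; crucially, because it is neither first nor last, whenever a letter is popped into $A_i$ on the left it is popped into \emph{every} such word on the left, and likewise on the right — the prefix $a^\ell$ and suffix $b^r$ that are glued on are determined solely by \sol X (the first/last letters and the lengths of the $a$-prefix / $b$-suffix of \sol X), which is the same global quantity for the whole system. Hence if $A$ and $B$ were both non-first/non-last throughout, and at some phase $A$ became $u A u'$ (prepend $u$, append $u'$), then $B$ became $u B u'$ with the \emph{same} $u,u'$. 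So popping replaces $(A,B)$ by $(uAu',uBu')$, and clearly $A=B \iff uAu'=uBu'$ (cancel the common prefix $u$ and common suffix $u'$). Iterating: at every phase the transformation applied to the pair $(A,B)$ is a composition of a uniform global letter-substitution and a uniform "prepend $u$, append $u'$'' — and each such elementary step preserves equality in both directions.

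The one subtlety I expect to be the main obstacle is the bookkeeping at the interface where popping meets the maintenance of the normal form~\eqref{eq:univariate}: when a letter is left-popped, a new word $B_0$ is spawned and the leading letters of $A_0$ and $B_0$ are trimmed, and symmetrically on the right; I must make sure that this trimming only ever touches \emph{first} and \emph{last} words, never an interior word $A_i$ or $B_j$ with $1\le i<n_{\mathcal A}$ (resp.\ $1\le j<n_{\mathcal B}$). This is exactly why the hypothesis that $A'$ and $B'$ are not first/last words is needed, and it should follow directly from the description of \algpop{} and \algprefsuff{}: the only words whose left or right end is altered other than by a uniform prepend/append are the first and last words of each equation. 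Once that is pinned down, I would finish by a straightforward induction on the number of phases, with the inductive invariant "the transformation so far carries $(A,B)$ to $(u\sigma(A)u', u\sigma(B)u')$ for a uniform letter-morphism-like map $\sigma$ (actually an iterated global substitution, which is injective on occurring strings) and strings $u,u'$,'' from which $A=B\iff A'=B'$ is immediate in both directions.
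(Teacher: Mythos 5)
Your proposal is correct and follows essentially the same route as the paper's proof: an induction over the operations of \algsonevar, observing that popping and prefix/suffix cutting prepend/append the same strings to every non-first/non-last word (since each such word is flanked by $X$ on both sides), that compression is a uniform, invertible substitution never applied across word boundaries, and that the trimming needed to maintain the normal form~\eqref{eq:univariate} touches only first and last words. The only detail the paper adds that you leave implicit is the remark that splitting an equation into subequations does not alter any explicit word, which is immediate.
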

\begin{proof}
By induction on operation performed by \algsonevar.
Since none of the $A'$, $B'$ is the first or last word in the equation,
it means that during the whole \algsonevar{} they had $X$ to the left and to the right.
So whenever a letter was left-popped or right-popped from $X$, it was prepended or appended to both
$A$ and $B$; the same applies to cutting prefixes and suffixes.
Compression is never applied to a crossing pair or a crossing block, so after it
two strings are equal if and only if they were before the operation.
The removal of letters (in order to preserve~\eqref{eq:univariate}) is applied only to first and last words, so it does not apply to words considered here.
Partitioning the equation into subequations does not affect the equality of explicit words.
\end{proof}

We now show the main property of succinct representation:
the compression (both pair and block) can be performed on succinct representation in linear time.

\begin{lemma}
\label{lem:compression cost}
The compression in one phase of \algsonevar{}
can be performed in time linear in size of the succinct representation.
\end{lemma}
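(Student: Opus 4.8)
The plan is to follow one phase of \algsonevar{} operation by operation and check that each can be carried out while touching every word stored in the succinct representation only $\Ocomp(1)$ times, and never touching the occurrence lists of the short words. The organising observation is that a short word is never a first or last word, so throughout \algsonevar{} it has an $X$ immediately to its left and to its right — exactly the hypothesis of Lemma~\ref{lem:words are equal}. Hence every context-dependent step acts identically on all copies of a given short word: when \algprefsuff{}, \algpreproc{}, or a \algpop{} inside \algpairc{} pops letters from $X$, the same prefix and suffix — of total length $\Ocomp(1)$ in symbols, a block counting as one symbol, see the proof of Lemma~\ref{lem:reducing length} — are prepended/appended to each copy; and when a pair or block is compressed, the fresh letter assigned to it is global to the phase. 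So, maintaining each distinct short word as a single string with a header to which all its occurrence pointers point, it suffices to update each stored word — each long word with its multiplicity, each distinct short word once — $\Ocomp(1)$ times, and Lemma~\ref{lem:words are equal} guarantees that short words that were merged remain mergeable.

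First I would run the data-gathering pass. I claim that the list of pairs (respectively maximal blocks) that \algsonevar{} must act on this phase is produced by reading each long word (with multiplicity) and each distinct short word once, recording the pairs (blocks) occurring inside them, and adding, for each word $w$ that has $X$ on a side, the $\Ocomp(1)$ boundary items: the pairs (last letter of $w$, first letter of $\sol X$) and (last letter of $\sol X$, first letter of $w$), the boundary pair at an $XX$ occurrence, and the analogous straddling blocks (recall that \algblocksc{} lists blocks only after \algprefsuff{} has removed crossing blocks, cf.\ Lemma~\ref{lem:cutpref cutsuff}, so at that point every maximal block lies inside one word, possibly enlarged by an absorbed popped prefix/suffix). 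The only nontrivial point is that a pair occurring strictly inside $\sol X$ is caught without materialising $\sol X$: by Lemma~\ref{lem: solution form} $\sol X = A_0^k A$ with $A$ a prefix of $A_0$, so every pair of $\sol X$ either occurs inside $A_0$ — a long, stored word — or is the self-junction pair (last letter of $A_0$, first letter of $A_0$), which is precisely the $A_0$--$X$ boundary pair already recorded. Sorting the recorded occurrences with \algradix{} and building the fresh-letter maps then costs $\Ocomp(\text{size of the succinct representation})$: the number of recorded items is linear in the number of symbols of the stored words, the distinct letters present in the stored words number at most that many and can be kept numbered in an interval of that length exactly as in the proof of Lemma~\ref{lem:one iteration cost}, and each block length is at most the length of its word plus the length of the first or last word, hence $\Ocomp(\text{size of the succinct representation})$.

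Then I would apply the operations. In \algprefsuff{} the $a$-prefix of $\sol X$ equals that of $A_0$ by Lemma~\ref{lem:a prefix}, read off $A_0$ directly; we append/prepend the corresponding compressed block symbols to the appropriate end of every long word flagged as having $X$ there and of every distinct short word; \algpreproc{} and the two \algpop{}'s inside \algpairc{} likewise prepend/append $\Ocomp(1)$ letters, once per stored word. Block compression, non-crossing-pair compression, and — after the \algpop{} makes the two crossing pairs non-crossing (Lemma~\ref{lem:pop preserves solutions}) — crossing-pair compression are then a single replacement pass over each stored word using the global fresh-letter maps (checking in $\Ocomp(1)$ per position that the expected pair or block is still present, as in the proof of Lemma~\ref{lem:one iteration cost}); the short-word occurrences inherit the change through their headers with no per-occurrence work. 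Renumbering the letters at the end of the phase is one further pass per stored word. Summing, the phase performs $\Ocomp(1)$ passes over the stored words plus $\Ocomp(1)$ extra symbols, i.e.\ $\Ocomp(\text{size of the succinct representation})$.

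The step I expect to be the real obstacle is the claim in the data-gathering pass — that every pair and block the phase acts on is visible in the succinct representation although $\sol X$ may be exponentially long — and the rigid factorisation $\sol X = A_0^k A$ with $A_0$ stored (Lemma~\ref{lem: solution form}) is exactly what makes it go through. A secondary technicality is keeping the \algradix{} inputs and the per-symbol counts under control when a popped prefix or suffix is appended to a (short) word and merges with an adjacent block; this is controlled by Lemma~\ref{lem:reducing length}, which already ensures short words stay short and long words shrink.
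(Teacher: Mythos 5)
Your proof is correct and follows essentially the same route as the paper's: reduce all work to $\Ocomp(1)$ passes over each stored word (long words with multiplicity, each distinct short word once), invoking Lemma~\ref{lem:words are equal} to justify that all occurrences of a short word evolve identically, and then reusing the per-phase analysis of Lemma~\ref{lem:one iteration cost} including the letter renumbering. Your extra care about pairs hidden inside $\sol X$ via Lemma~\ref{lem: solution form} is a detail the paper handles elsewhere (in the proof of Lemma~\ref{lem:reducing length}) rather than in this lemma, but it is consistent with, not divergent from, the paper's argument.
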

\begin{proof}
The long words are stored in a list and we can compress them without the need of reading the word table.
We know which one of them is first or last, so when letters are popped from $X$ we know what letters are
appended/prepended to each of those words.
Since they are stored explicitly, the claim for them follows from the analysis of the original version of \algsonevar,
see Lemma~\ref{lem:one iteration cost}. This analysis in particular requires that we can identify
the letters used in the equation with numbers from an interval of linear size.
Here the size is the size of the succinct representation. Note though that this part of the proof
follows in the same way: when listing letters (to replace them with new ones) we do not need to list letters in different
occurrences of the same short word, it is enough to do this once, which can be done using the succinct representation.

For the short words stored in the list of short words,
from Lemma~\ref{lem:words are equal} it follows that if an explicit word $A$ occurs twice in the equations
(both times not as a first, nor last word of the equation) it is changed during \algsonevar{} in the same way
at both those instances.
So it is enough to perform the operations on the words stored in the list,
doing so as in the original version of \algsonevar{} takes time linear in the size of the tables of short words,
as in Lemma~\ref{lem:one iteration cost}.
\end{proof}

\subsubsection{Normal and overdue short words}
\label{subsubsec: overdue}
The short words stored in the tables are of two types: normal and overdue.
The \emph{normal} words are substrings of the long words or $A_0^2$ and consequently the sum of their sizes
is proportional to the size of the long words.
A word becomes \emph{overdue} if at the beginning of the phase it is not a substring of a long word nor $A_0^2$.
It might be that it becomes a substring of such a word later, it does not stop to be an overdue word in such a case.

Since the normal words are of size $\Ocomp(\bound) = \Ocomp(1)$, the sum of lengths of normal words stored in short word list
is at most $\Ocomp(1)$ larger than the sum of sizes of the long words.
Hence the processing time of normal short words can be charged to the long words.
For the overdue words the analysis is different:
we show that after $\Ocomp(1)$ phases we can remove them from the equation (splitting the equations).
Thus their processing time is $\Ocomp(1)$ per $\mathcal A$-$i$ word (or $\mathcal B$-$j$ word)
and thus $\Ocomp(n)$ in total.

The new overdue words can be identified in linear time:
this is done by constructing a suffix array 
for a concatenation of long and short words occurring in the equations.

\begin{lemma}
\label{lem:identify overdue}
In time proportional to the size of succinct representation size we can identify the new overdue words.
\end{lemma}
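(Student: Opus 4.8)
In time proportional to the succinct representation size we can identify the new overdue words.

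**Plan.** The key observation is that a short word $w$ is overdue at the beginning of a phase precisely when $w$ is not a substring of any long word and not a substring of $A_0^2$ (where $A_0$ is the first word). Since substring queries against a fixed set of texts are exactly what suffix arrays are built for, the plan is: (1) form a single string $T$ by concatenating all the long words, together with the string $A_0^2$, separated by distinct sentinel symbols that occur nowhere else — call them $\#_1, \#_2, \ldots$; (2) append to $T$ all the short words currently stored in the short-word list, again each separated by a fresh sentinel; (3) build the suffix array and LCP array for the resulting string in linear time (using \algradix, which is applicable since letters are identified with an interval of integers of size linear in the succinct representation, as noted in Lemma~\ref{lem:one iteration cost} and Lemma~\ref{lem:compression cost}); (4) for each short word $w$, decide in time $\Ocomp(|w|)$ whether it occurs as a substring of the "text part" of $T$, i.e.\ inside one of the long words or inside $A_0^2$ — and if not, mark $w$ as a new overdue word. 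Words that were already overdue stay overdue, so only short words not previously flagged need to be tested.

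sincStep (4) is carried out as follows. First locate, via binary search in the suffix array (or more simply, since we have the occurrence of $w$ itself inside $T$, by walking to its neighbours using the LCP array), whether there is any suffix of $T$ starting at a position inside a long word or inside the $A_0^2$ block whose longest common prefix with $w$ is at least $|w|$. Concretely: the occurrence of $w$ as a standalone block in $T$ gives us one position; we scan outward among the suffix-array neighbours of that position, using LCP values to detect when the shared prefix drops below $|w|$, and check whether any neighbour with shared prefix $\geq |w|$ starts inside the designated "text" region (long words or $A_0^2$). The region membership of a suffix-array position is a precomputed $\Ocomp(1)$ test (store, for each start position in $T$, a flag saying whether it lies in the text region). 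Summing $\Ocomp(|w|)$ over all short words $w$ and $\Ocomp(1)$ per suffix-array neighbour visited gives total time linear in $|T|$, which is linear in the succinct representation size, since the long words and $A_0^2$ contribute at most a constant-factor overhead and each distinct short word appears exactly once in $T$.

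theC**Main obstacle.** The only delicate point is making the "neighbour scan" run in amortised linear time rather than per-word quadratic: a naive outward scan for $w$ could visit many suffix-array positions before the LCP drops below $|w|$. The clean fix is to avoid scanning neighbours of the standalone copy of $w$ at all, and instead precompute once, in a single left-to-right pass over the suffix array, for every suffix-array index $i$ the value $D[i]$ = the largest $k$ such that $LCP$-minimum between position $i$ and the nearest suffix-array position lying in the text region is $\geq k$; this can be done by two sweeps (one from each side) propagating the running LCP-minimum, in $\Ocomp(|T|)$ total. Then for each short word $w$, whose standalone block sits at suffix-array index $i_w$, we simply test $D[i_w] \geq |w|$ in $\Ocomp(1)$. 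With this precomputation the whole procedure is manifestly linear in the size of the succinct representation. I would also remark that care must be taken with the sentinels so that a short word cannot "accidentally" match across a sentinel boundary — using pairwise-distinct sentinels that are fresh letters (available since the alphabet has room) handles this, and does not disturb the linear bound because we add only $\Ocomp(\#\text{words})$ sentinels, which is dominated by $|T|$.
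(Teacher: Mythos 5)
Your proposal is correct and follows essentially the same route as the paper: build a suffix array over the concatenation of $A_0^2$, the long words, and the distinct short words (with separators), then for each short word find the nearest text-region suffix on either side via two linear sweeps propagating the running LCP-minimum, and test whether that value reaches $|w|$. The only cosmetic difference is that you phrase the precomputation as a single array $D[i]$ rather than two stored values whose maximum is taken, which is the same computation.
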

\begin{proof}
Consider all long words $A_0$, \ldots, $A_m$ (with or without multiplicities, it does not matter)
and all short (not already overdue) words $A_1'$, \ldots $A'_{m'}$, without multiplicities;
in both cases this is just a listing of words stored in the representation (except for old overdue words).
We construct a suffix array for the string
$$
A_0^2\$A_1\$\dots A_m\$A_1'\$\dots A'_{m'}\# \enspace .
$$
As it was already observed that the size of the alphabet is linear in the size of the succinct representation,
the construction of the suffix array can be done in linear time~\cite{suffixarrays}.

Now $A'_i$ is a factor in some $A_j$ (the case of $A_0^2$ is similar, it is omitted to streamline the presentation)
if and only if for some suffix $A_j''$ of $A_j$ the strings $A_j''\$A_{j+1}\dots A_m\$A_1'\$\dots \$A'_{m'}\#$
and $A_i'\$\dots \$ A'_{m'}\#$ have a common prefix of length at least $|A_i'|$.
In terms of a suffix array, the entries for $A_i'\$\dots \$ A'_{m'}\#$
and $A_j''\$A_{j+1}\dots \$ A_m\$A_1'\$\dots \$ A'_{m'}\#$ should have a common prefix of length at least $|A_i'|$.
Recall that the length of the longest common prefix of two suffixes stored at positions $p < p'$ in
the suffix array is the minimum of $LCP[p]$, $LCP[p+1]$, \ldots, $LCP[p'-1]$.

For fixed suffix $A_i'\$\dots \$ A'_{m'}\#$ we want to find $A_j''\$A_{j+1}\dots \$ A_m\$A_1'\$\dots \$ A'_{m'}\#$
(where $A_j''$ is a suffix of some long word $A_j$)
with which it has the longest common prefix.
As the length of the common prefix of $p$th and $p'$th entry in a suffix array is
$\min(LCP[p], LCP[p+1], \ldots, LCP[p'-1])$, this is is either the first previous or first next
suffix of this form in the suffix array.
Thus the appropriate computation can be done in linear time:
we first go down in the suffix array, storing the last spotted
entry corresponding to a suffix of some long $A_j$, calculating the LCP with consecutive suffixes
and storing them for the suffixes of the form $A_i'\$\dots \$ A'_{m'}\#$.
We then do the same going from the bottom of the suffix array.
Lastly, we choose the larger from two stored values;
for $A_i'\$\dots \$ A'_{m'}\#$ it is smaller than $|A_i'|$ if and only if $A_i'$ just became an overdue word.

Concerning the running time, it linearly depends on the size of the succinct representation and alphabet size,
which is also linear in size of succinct representation, as claimed.
\qedhere
\end{proof}

The main property of the overdue words is that they can be removed from the equations in $\Ocomp(1)$ phases after becoming overdue.
This is shown by a serious of lemmata.

First we need to define what does it mean that for solution word $A$ in one side of the equation is at the same position as its copy
on the other side of the equation:
we say that for a substitution \solution{} the explicit word $A_i$ (or its subword)
is \emph{arranged against} the explicit word $B_j$ (\sol X for some fixed occurrence of $X$) 
if the position within $\sol{\mathcal A_k}$ occupied by this explicit word $A_i$ (or its subword)
are within the positions occupied by explicit word $B_j$ (\sol X, respectively) in $\mathcal B_k$.

\begin{lemma}
\label{lem:overdue can be removed}
Consider a short word $A$ in a phase in which it becomes overdue.
Then for each solution \sol X either \solution{} is small
or in every $\sol {\mathcal A_k} = \sol {\mathcal B_k}$ each explicit word $A_i$ equal to $A$
is arranged against another explicit word $B_j$ equal to $A$.
\end{lemma}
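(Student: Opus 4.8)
The plan is to fix a solution \solution{} that is \emph{not} small and an explicit word $A_i = A$ in some equation $\mathcal A_k = \mathcal B_k$ of the system --- without loss of generality on the left-hand side, the right-hand side being mirror-symmetric --- and to show that $A_i$ is arranged exactly against an explicit word $B_j = A$. I would first record three reductions coming from \solution{} not being small. If $|\sol X| \leq \bound$, then $\sol X$ is itself almost periodic with side size $\leq \bound$, so (taking the power $0$ in the definition of small) \solution{} is small; hence $|\sol X| > \bound$. If $\sol X$ has a period of length at most $\bound$, then $\sol X$ is almost periodic with period size $\leq \bound$ and side size $0$, and again \solution{} is small; hence $\sol X$ has no such period. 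Finally, by Lemma~\ref{lem: solution form} we have $\sol X = A_0^m A'$ with $A'$ a prefix of the first word $A_0$ of this equation, which is long since we are in the regime where all first and last words are long; thus $\sol X$ is a substring of $A_0^{m+1}$, and as $|A| \leq \bound < |A_0|$ every substring of $A_0^{m+1}$ of length $|A|$ is already a substring of $A_0^2$. Since $A$ is overdue, it is a substring neither of a long word nor of $A_0^2$, so $A$ is not a substring of $\sol X$.

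Since $A$ is short it is neither first nor last, hence in $\mathcal A_k$ it is flanked by $X$ on both sides, so $\sol{\mathcal A_k}$ --- and therefore the equal string $\sol{\mathcal B_k}$ --- contains $\sol X \cdot A \cdot \sol X$ at some range $[p_0,p_3]$, with $\sol X$ occupying $[p_0,p_1-1]$ (call this occurrence $\alpha$), the block $A$ occupying $[p_1,p_2-1]$, and $\sol X$ occupying $[p_2,p_3]$ (occurrence $\beta$). I then split according to how the block $[p_1,p_2-1]$ meets the factorisation of $\sol{\mathcal B_k}$ into occurrences of $\sol X$ and explicit words $B_j$.

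Suppose first $[p_1,p_2-1]$ lies inside a single explicit word $B_j$ of $\mathcal B_k$. If $B_j$ is long, then $A$ is a substring of a long word, contradicting overdue-ness. If $B_j$ is short with range exactly $[p_1,p_2-1]$, then $B_j = A$ and $A_i$ is arranged against it, which is the claim. Otherwise $B_j$ is short and contains the block strictly, so it protrudes past $A$ on the left or on the right; since $|B_j| \leq \bound < |\sol X|$, occurrence $\alpha$ (resp.\ $\beta$) then protrudes past $B_j$, and reading its positions against the factorisation of $\sol{\mathcal B_k}$ expresses $\sol X$ as a suffix of $\sol X$ followed by a prefix of $B_j$ of length $\leq \bound$ (resp.\ a suffix of $B_j$ of length $\leq \bound$ followed by a prefix of $\sol X$), which yields a period of $\sol X$ of length $\leq \bound$ --- contradicting the second reduction. (If $B_j$ is the last word of its equation, so nothing follows it, a right protrusion would force occurrence $\beta$ to lie inside $B_j$ and hence $B_j$ to be long; thus only a left protrusion is possible there, and the left argument applies unchanged.) Now suppose $[p_1,p_2-1]$ is not inside one explicit word. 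Then it meets some occurrence $\sigma$ of $\sol X$ in $\mathcal B_k$; as $|\sigma| = |\sol X| > \bound \geq |A|$, this $\sigma$ does not fit inside the block, so it also meets $\alpha$ or $\beta$ --- say $\alpha$. Because $A$ is not a substring of $\sol X$, $\sigma$ cannot cover the whole block, so $\sigma$ lies inside the window $[p_0,p_2-1]$ of length $|\sol X| + |A|$; but then the two occurrences $\sigma$ and $\alpha$ of $\sol X$ inside this window are offset by at most $|A| \leq \bound$, giving $\sol X$ a period $\leq \bound$ --- again a contradiction. The only surviving alternative is $B_j = A$ with matching range, which is what had to be shown.

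The position bookkeeping is routine; the delicate point, which I expect to be the main obstacle, is checking in each branch that the period of $\sol X$ extracted (equivalently, the length of the relevant prefix/suffix of $B_j$, or the offset between the two occurrences of $\sol X$) is genuinely bounded by $\bound$ and not merely by $|\sol X|$ --- this is exactly what triggers the ``periodic $\Rightarrow$ small'' reduction. It relies on using repeatedly that $|A| \leq \bound$ and $|B_j| \leq \bound$ while $|\sol X| > \bound$, and that consecutive occurrences of $X$ within an equation are separated by a non-empty explicit word; the secondary case to keep straight is the protruding $B_j$ being the last word of its equation.
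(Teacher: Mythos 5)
Your proof is correct and follows essentially the same approach as the paper's: the case where $A$ sits inside an occurrence of $\sol X$ is ruled out via Lemma~\ref{lem: solution form} and the fact that an overdue word is not a factor of $A_0^2$, and all remaining non-matching arrangements force a period of $\sol X$ of length at most $\bound$ (hence smallness) from two overlapping occurrences of $\sol X$ with small offset. Your version merely reorganises the paper's three cases (arranged against $\sol X$, arranged against an explicit word, straddling) into a cleaner preliminary reduction plus a two-way split, and carries out the position bookkeeping more explicitly.
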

\begin{proof}
Consider an equation and a solution \solution{} such that in some
$\sol {\mathcal A_i} = \sol {\mathcal B_i}$ an explicit word $A_i$ (equal to an overdue word $A$)
is not arranged against another explicit word equal to $A$.
There are three cases:

\subsubsection*{$A$ is arranged against \sol X}
Note that in this case $A$ is a substring of \sol X. 
Either \sol X is a substring of $A_0$ or $\sol X = A_0^kA_0'$, where $A_0'$ is a prefix of $A_0$.
In the former case $A$ is a factor of $A_0$, which is a contradiction,
in the latter it is a factor of $A_0^{k+1}$. As $A_0$ is long and $A$ short, it follows that $|A|<|A_0|$
and so $A$ is a factor of $A_0^2$, contradiction with the assumption that $A$ is overdue.

\begin{figure}
	\centering
		\includegraphics{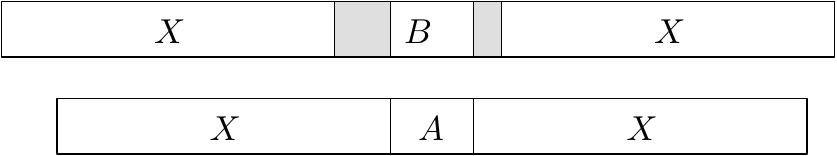}
	\caption{$A$ is arranged against $B$. The periods of length at most $|B| - |A|$ are in ligther grey.
	Since $A \neq B$, at least one of them is non-empty.}
	\label{fig:arranged}
\end{figure}

\subsubsection*{$A$ is arranged against some word}
Since $A$ is an overdue word, this means that $A_i$ is arranged against a short word $B_j$.
Note that both $A_i$ and $B_j$ are preceded and succeeded by \sol X,
since $A_i \neq B_j$ we conclude that \sol X has a period at most $|B_j| - |A_i|$, see Fig.~\ref{fig:arranged};
in particular \solution{} is small.

\begin{figure}
	\centering
		\includegraphics{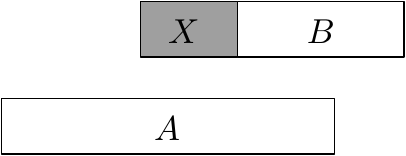}
	\caption{Subword of $A_i$ is arranged against the whole \sol X.}
	\label{fig:misarranged1}
\end{figure}

\subsubsection*{Other case}
Since $A_i$ is not arranged against any word, nor arranged against \sol X, it means that
some substring of $A_i$ is arranged against \sol X{} and as $A_i$ is preceded and succeeded by \sol X,
this means that either \sol X is shorter than $A_i$ or it has a period at most $|A|$, see Figure~\ref{fig:misarranged1} and~\ref{fig:misarranged2}, respectively.
In both cases \solution{} is small.
\begin{figure}
	\centering
		\includegraphics{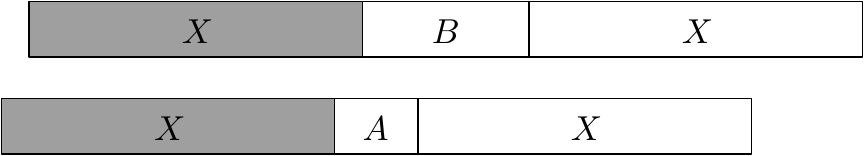}
	\caption{Subword of $A_i$ is arranged against \sol X.
	The overlapping \sol X are in in grey, the \sol X has a period shorter than $A_i$.}
	\label{fig:misarranged2}
\end{figure}
\qedhere
\end{proof}

Observe that due to Theorem~\ref{thm:solution with small height} and Lemma~\ref{lem:overdue can be removed}
the $\mathcal A$-$i$-words and $\mathcal B$-$j$-words
that are overdue can be removed in $\Ocomp(1)$ phases after becoming overdue:
suppose that $A$ becomes an overdue word in phase $\ell$.
Any solution, in which an overdue word $A$ is not arranged against another occurrence of $A$
is small and so it is reported after $\Ocomp(1)$ phases.
Consider an equation $\mathcal A_i = \mathcal B_i$ in which $A$ occurs.
Then the first occurrence of $A$ in $\mathcal A_i$ and the first occurrence of $A$ in $\mathcal B_j$
are arranged against each other for each solution \solution.
In particular, we can write $\mathcal A_i = \mathcal B_i$ as 
$\mathcal A_i' XAX \mathcal A_i'' = \mathcal B_i' XAX \mathcal B_i''$,
where $\mathcal A_i$ and $\mathcal B_i$ do not have $A$ as an explicit word
(recall that $A$ is not the first, nor the last word in $\mathcal A_i = \mathcal B_i$).
This equation is equivalent to two equations $\mathcal A_i'  = \mathcal B_i' $ and $\mathcal A_i'' = \mathcal B_i''$.
This procedure can be applied recursively to $\mathcal A_i'' = \mathcal B_i''$.
In this way, all occurrences of $A$ are removed and no solutions are lost in the process.
There may be many overdue strings so the process is a little more complicated,
however, as each word can be removed once during the whole algorithm,
in total it takes $\Ocomp(n)$ time.

\begin{lemma}
\label{lem:overdue}
Consider the set of overdue words introduced in phase $\ell$.
Then in phase $\ell + \Ocomp(1)$ we can remove all occurrences of overdue words from the equations.
The obtained set of equations has the same set of solutions.
The amortised time spend on removal of overdue words, over the whole run of \algsonevar, is $\Ocomp(\#_X)$.
\end{lemma}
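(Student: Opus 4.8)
The plan is to combine the two pillars already established: Theorem~\ref{thm:solution with small height}, which says a small solution is reported within $\Ocomp(1)$ phases, and Lemma~\ref{lem:overdue can be removed}, which says that every solution not reported by that mechanism arranges each overdue word against a copy of itself. First I would fix the set of words that became overdue in phase $\ell$ and wait $c = \Ocomp(1)$ phases, where $c$ is the constant from Theorem~\ref{thm:solution with small height}; after phase $\ell + c$ every small solution has been reported, so for every remaining (unreported) solution \solution{} and every equation $\mathcal A_k = \mathcal B_k$ each explicit occurrence of an overdue word $A$ in $\mathcal A_k$ is arranged against an explicit occurrence of $A$ in $\mathcal B_k$, and vice versa. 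This is the structural fact that licenses the surgery.

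Next I would describe the removal itself. Process the overdue words one equation at a time. For a fixed equation $\mathcal A_k = \mathcal B_k$, I scan both sides simultaneously (using the arrangement property) and locate the leftmost explicit occurrence of an overdue word; by Lemma~\ref{lem:overdue can be removed} it is arranged against a copy of the same word on the other side, and since overdue words are neither first nor last, both occurrences sit strictly between two $X$'s. Hence the equation has the form $\mathcal A_k' X A X \mathcal A_k'' = \mathcal B_k' X A X \mathcal B_k''$ with the displayed occurrences of $A$ arranged against each other, which (because these occurrences match exactly in every solution) is equivalent to the pair $\mathcal A_k' = \mathcal B_k'$ and $\mathcal A_k'' = \mathcal B_k''$ after trimming to the form~\eqref{eq:univariate}; this is exactly the legitimate split described in Section~\ref{subsec: several}, so no solution is gained or lost. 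Recurse on $\mathcal A_k'' = \mathcal B_k''$. One subtlety is that several overdue words may occur and the arrangement must be read off consistently — I would handle this by a single left-to-right pass that, whenever it meets an overdue word, performs the split, so that after the pass the $\mathcal A$-word and $\mathcal B$-word lists of the resulting subequations contain no overdue words. Since every solution of the original system is a solution of the new system of subequations and conversely, the set of solutions is unchanged.

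Finally the running time. For identifying which words just became overdue I invoke Lemma~\ref{lem:identify overdue}, costing $\Ocomp(1)$ per phase per word in the succinct representation, and hence $\Ocomp(n)$ over the run. For the removal: the split that deletes one occurrence of an overdue word does $\Ocomp(1)$ pointer work (cutting the doubly-linked lists, creating the link structure of Section~\ref{subsec: several}, trimming leading/trailing identical symbols), and each $\mathcal A$-$i$-word or $\mathcal B$-$j$-word is deleted at most once over the whole algorithm; since the total number of such words over all phases telescopes against the initial $\Ocomp(\#_X)$ many word-slots (a word-slot is created only by popping, and Lemma~\ref{lem:reducing length}'s analysis bounds the amount of popping), the amortised removal cost is $\Ocomp(\#_X)$. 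I would phrase this last bookkeeping as: assign to each $\mathcal A$-$i$-word a credit of $\Ocomp(1)$ when it is created, pay the removal out of that credit; the number of word-creations is $\Ocomp(\#_X)$ because each phase pops only $\Ocomp(1)$ symbols per variable occurrence into each word. The main obstacle I expect is precisely this amortisation — making rigorous that ``each word is removed once'' combines with the bound on the number of distinct $\mathcal A$-$i$-words ever created to give $\Ocomp(\#_X)$ rather than $\Ocomp(\#_X \log n)$; the resolution is that removal is a one-time event keyed to the word, not to the phase, so the $\log n$ factor that plagued short words in Lemma~\ref{lem: till one is first} does not reappear here.
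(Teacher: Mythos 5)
Your proposal is correct and follows essentially the same route as the paper: wait the $\Ocomp(1)$ phases of Theorem~\ref{thm:solution with small height} so that all small solutions are already reported, invoke Lemma~\ref{lem:overdue can be removed} to conclude that every surviving solution arranges each overdue occurrence against a copy of itself, split each equation at the matched (first) occurrences into equivalent subequations, and charge the $\Ocomp(1)$ pointer work per split to the removed word/occurrence of $X$, giving $\Ocomp(\#_X)$ in total. The only cosmetic slip is the remark that word-slots are ``created by popping'' --- the number of explicit words is bounded by $\Ocomp(\#_X)$ simply because each non-first, non-last word is flanked by occurrences of $X$, which is exactly the paper's charging scheme --- and you should also note, as the paper does, that if a matched pair of occurrences fails to lie in the same equation the system is declared unsatisfiable.
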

\begin{proof}
Consider any word $A$ that become overdue in phase $\ell$ and any solution \solution{} of this equation,
such that in some $\sol {\mathcal A_i} = \sol {\mathcal B_i}$
the explicit word $A$ is not arranged against another instance of the same explicit word.
Then due to Lemma~\ref{lem:overdue can be removed} the \sol X is small.
Consequently, from Theorem~\ref{thm:solution with small height} this solution is reported before phase $\ell + c$, for some constant $c$.
So any solution $\solution'$ in phase $\ell + c$ corresponds to a solution \solution{} from phase $\ell$
that had each explicit word $A$ arranged in each $\sol {\mathcal A_i} = \sol {\mathcal B_i}$ against another explicit word $A$.
Since all operations in a phase either transform solution, implement the pair compression of implement
the blocks compression for a
solution \sol X,
it follows that in phase $\ell + c$ the corresponding overdue words $A'$
are arranged against each other in $\solution'(\mathcal A_i') = \solution'(\mathcal B_i')$.
Moreover, by Lemma~\ref{lem:words are equal} each explicit word $A'$ in this phase corresponds to an explicit word $A$ in phase $\ell$.

This observation allows removing all overdue words introduced in phase $\ell$.
Let $C_1$, $C_2$, \ldots, $C_m$ (in phase $\ell + c$) correspond to all overdue words introduced in phase $\ell$.
By Lemma~\ref{lem:identify overdue} we have already identified the overdue words.
Using the list of short words, for each overdue word $C$, we have
the list of pointers to occurrences of $C$ in left-hand sides of the equations and right-hand sides of the equations,
those lists are sorted according to the order of occurrences.
In phase $\ell + c$ we go through those lists, if the first occurrences of $A$ in the left-hand sides and right-hand sides
are in different equations then the equations are not satisfiable,
as this would contradict that in each solution both $A$ is arranged against its copy.
Otherwise, they are in the same equation $\mathcal A_i = \mathcal B_i$, which is of the form
$\mathcal A_i' XAX \mathcal A_i'' = \mathcal B_i' XAX \mathcal B_i''$,
where $\mathcal A_i'$ and $\mathcal B_i'$ do not have any occurrence of $A$ within them.
We split $\mathcal A_i = \mathcal B_i$ into two equations $\mathcal A_i'  = \mathcal B_i' $ and $\mathcal A_i'' = \mathcal B_i''$
and we trim them so that they are in the form described in~\eqref{eq:univariate}.
The new equations have exactly the same set of solutions as the original one.

Note that as new equations are created, we need to reorganise the pointers from the first/last words in the equations,
however, this is easily done in $\Ocomp(1)$ time.
The overall cost can be charge to the removed $X$, which makes in total at most $\Ocomp(\#_X)$ cost.
\qedhere
\end{proof}

\subsubsection{Compression running time}
\label{subsubsec: compression time}
\begin{lemma}
\label{lem:time for data}
The running time of \algsonevar, except for time used to test the solutions, is $\Ocomp(n)$.
\end{lemma}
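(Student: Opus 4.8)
The plan is to amortise the per-phase compression cost against three disjoint parts of the succinct representation: the long words, the normal short words, and the overdue short words, and to show that each part contributes only $\Ocomp(n)$ over the whole run. By Lemma~\ref{lem:compression cost} the compression in one phase costs $\Ocomp(s)$, where $s$ is the current size of the succinct representation, so it suffices to bound $\sum_{\text{phases}} s$.

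First I would handle the long words. By Lemma~\ref{lem:reducing length} a long $\mathcal A$-$i$ word (or $\mathcal B$-$j$ word) loses a $1/4$ fraction of its length in each phase while it stays long, and once it becomes short it stays short; moreover as soon as the first or last word becomes short the whole remaining running time is $\Ocomp(n)$ by Lemma~\ref{lem: first is short}, so we may assume throughout that the first and last words are long. Fixing one long word of initial length $n_i$, its lengths over the phases in which it is long are dominated by the geometric series $n_i + \tfrac34 n_i + (\tfrac34)^2 n_i + \cdots \le 4 n_i$; summing over all words gives $\Ocomp(\sum_i n_i) = \Ocomp(n)$ for the total contribution of long words (as long as they are long, which is the only regime in which they are stored explicitly in the succinct representation).

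Next, the normal short words. By definition a normal short word is a substring of some long word or of $A_0^2$ and has length at most $\bound = \Ocomp(1)$; since distinct short words are stored only once, the number of distinct normal short words present in a given phase is $\Ocomp(1)$ times the total length of the long words in that phase (each long word of length $k$ has $\Ocomp(k)$ substrings of length $\le \bound$, hence only $\Ocomp(k)$ of them can be distinct short words; the $A_0^2$ contributes $\Ocomp(|A_0|)$ more). Hence the total size contributed by normal short words in a phase is at most $\Ocomp(1)$ times the size contributed by long words in that phase, and the bound from the previous paragraph applies again, yielding $\Ocomp(n)$. Finally, for overdue short words, Lemma~\ref{lem:overdue} shows that all occurrences of words that became overdue in phase $\ell$ can be removed from the equations within $\Ocomp(1)$ further phases, at amortised cost $\Ocomp(\#_X)$ in total; so an overdue word of length $\le \bound$ is stored for only $\Ocomp(1)$ phases, contributing $\Ocomp(\bound) = \Ocomp(1)$ to $\sum_{\text{phases}} s$ per overdue word, and since each word can become overdue at most once and there are $\Ocomp(n)$ words in total over the run (new words are only introduced by popping and splitting, bounded by $\Ocomp(n)$), this part is $\Ocomp(n)$ as well. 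Adding the three contributions, together with the $\Ocomp(n)$ overhead for constructing suffix arrays and re-enumerating letters each phase (Lemma~\ref{lem:identify overdue} and the end of Lemma~\ref{lem:one iteration cost}), gives the claimed $\Ocomp(n)$ bound, excluding the testing time which is treated separately.

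The main obstacle I expect is the bookkeeping around overdue words: one has to make sure that the $\Ocomp(1)$-phase delay before removal, the fact that a word may become overdue while its storage is shared among many occurrences, and the re-splitting of equations, all compose correctly so that no word's processing cost is charged more than $\Ocomp(1)$ times per phase it is actually stored, and that the number of phases a word sits in the structure (long phases plus short-normal phases plus the $\Ocomp(1)$ overdue phases) telescopes correctly. The geometric-series arguments for long and normal words are routine once the charging scheme is fixed; the delicate point is that the succinct representation's size genuinely shrinks geometrically, which relies on the fact established above that normal short words are dominated by long words and overdue words are short-lived.
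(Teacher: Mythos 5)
Your proposal is correct and follows essentially the same route as the paper: reduce to bounding the sum over phases of the succinct representation size (via Lemma~\ref{lem:compression cost}), observe that long words and normal short words together shrink geometrically by Lemma~\ref{lem:reducing length} (with normal short words dominated by the long ones since they are $\Ocomp(\bound)$-length substrings of them), and charge each overdue word $\Ocomp(1)$ total using Lemma~\ref{lem:overdue}. The bookkeeping concerns you raise are exactly the ones the paper discharges in Lemmata~\ref{lem:identify overdue} and~\ref{lem:overdue}, so no gap remains.
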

\begin{proof}
By Lemma~\ref{lem:compression cost} the cost of compression is linear in terms of the size of the succinct representation
by Lemma~\ref{lem:identify overdue} in the same time bounds we can also identify the overdue words.
Lastly, by Lemma~\ref{lem:overdue can be removed} the total cost of removing the overdue words is $\Ocomp(n)$.
So it is enough to show that the sum of sizes of the succinct representations summed over all phases is $\Ocomp(n)$.

When the overdue words are excluded, the size of the succinct representation is proportional to the total length of long words.
Since by Lemma~\ref{lem:reducing length} this sum of lengths decreases by a constant in each phase,
the sum of those costs is linear in $n$.

Concerning the costs related to the overdue words: Note that an $\mathcal A$ $i$-word or $\mathcal B$ $j$-word
is overdue for only $\Ocomp(1)$ phases, after which it is deleted from the equation see Lemma~\ref{lem:overdue}.
So in $\Ocomp(1)$ phases it is charged $\Ocomp(\bound) = \Ocomp(1)$ cost, during the whole run of \algsonevar.
Summing over all $\mathcal A$ $i$-words and $\mathcal B$ $j$-words yields $\Ocomp(n)$ time.
\qedhere
\end{proof}

\subsection{Testing}
\label{subsec: testing}
We already know that thanks to appropriate storing the compression of the equations can be performed in linear time.
It remains to explain how to test the solutions fast, i.e.\ how to perform \algtestsimple{}
when all first and last words are still long.

Recall that \algtestsimple{} checks whether \solution,which is of the form $\sol X = a^\ell$ for some $\ell$,
is a solution by comparing \sol {\mathcal A_i} and \sol {\mathcal B_i} letter by letter, replacing $X$ with $a^\ell$ on the fly.
We say that in such a case a letter $b$ in \sol {\mathcal A_i} is \emph{tested against} the corresponding letter in \sol {\mathcal B_i}.
Note that during the testing we do not take advantage of the smaller size of the succinct representation,
so we need to make a separate analysis.
Consider two letters, from $A_i$ and $B_j$, that are tested against each other.
If one of $A_i$ and $B_j$ is long, this can be amortised against the length of the long word.
The same applies when one of the words $A_{i+1}$ or $B_{j+1}$ is long.
So the only problematic case is when all of those words are short.
To deal with this case efficiently we distinguish between different test types,
in which we exploit different properties of the solutions to speed up the tests.
In the end, we show that the total time spent on testing is linear.

For a substitution \solution{} by a \emph{mismatch} we denote the first position on which \solution{} is shown not be a solution,
i.e.\ sides of the equation have different letters (we use a natural order on the equations);
clearly, a solution has no mismatch.
Furthermore, \algsonevar{} stops the testing as soon as it finds a mismatch,
so in the rest of this section, if we use a name \emph{test} for a comparison of letters,
this means that the compared letters are before the mismatch (or that there is no mismatch at all).

There are two preliminary technical remarks:
First we note that for when testing a substitution \solution,
for a fixed occurrence of $X$ there is at most explicit word whose letters are tested against letters from this occurrence of $X$.

\begin{figure}
	\centering
		\includegraphics{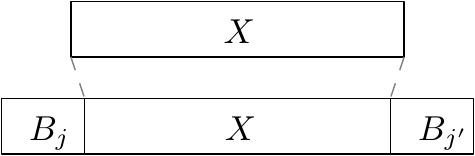}
	\caption{Let $B_{j}$ and $B_{j'}$ both have their letters arranged against letters from fixed occurrence of $X$.
	Then the $X$ separating them is a proper substring of another $X$, contradiction.\label{fig:protected}}
\end{figure}

\begin{lemma}
\label{lem: against X}
Fix a tested substitution \solution{} and an occurrence of $X$ in the equation.
Then there is at most one explicit word whose letters are arranged against letters from this fixed occurrence of \sol X.
\end{lemma}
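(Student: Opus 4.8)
The plan is to argue by contradiction, forcing an occurrence of the variable to be crammed strictly inside $\sol X$, which is absurd.

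First I would set up coordinates. By symmetry I may assume the fixed occurrence of $X$ lies on the left-hand side of its equation $\mathcal A_k = \mathcal B_k$; write $[p,q]$ for the range of positions it occupies inside $\sol{\mathcal A_k}$, so that $q-p+1 = |\sol X| \geq 1$. Since \emph{arranged against} is by definition a cross-side notion (positions of $\sol{\mathcal A_k}$ and $\sol{\mathcal B_k}$ being identified index by index during the test), an explicit word can be arranged against this occurrence of $\sol X$ only if it is some $B_j$ of $\mathcal B_k$ occupying at least one position in the range $[p,q]$. In particular explicit words from other equations are irrelevant, and so are explicit words on the left-hand side of $\mathcal A_k = \mathcal B_k$ --- the latter because distinct tokens of $\mathcal A_k$ expand to disjoint, nonempty blocks of $\sol{\mathcal A_k}$, so no left-hand word lands inside $[p,q]$.

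Then I would suppose, for contradiction, that two distinct explicit words $B_j, B_{j'}$ of $\mathcal B_k$ with $j < j'$ each occupy some position in $[p,q]$. Because $\mathcal B_k$ has the form~\eqref{eq:univariate}, between $B_j$ and $B_{j'}$ there is at least one occurrence of the variable; fix one, call it $X^\star$, and let $[p^\star,q^\star]$ be the positions it occupies in $\sol{\mathcal B_k}$. As $X^\star$ follows $B_j$, and $B_j$ --- being nonempty --- already occupies a position $\geq p$, we get $p^\star > p$; symmetrically, as $X^\star$ precedes $B_{j'}$, which occupies a position $\leq q$, we get $q^\star < q$. Hence $|X^\star| = q^\star - p^\star + 1 < q - p + 1 = |\sol X|$, contradicting the fact that $X^\star$ is an occurrence of $X$ and is therefore replaced by $\sol X$. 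This is exactly the configuration of Figure~\ref{fig:protected}: the $X$ separating $B_j$ and $B_{j'}$ has been squeezed into a proper substring of the fixed occurrence of $X$.

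I expect the proof to be short, with the only delicate points being administrative: using that \emph{arranged against} is cross-side in order to discard same-side and other-equation words at once, and verifying that the two inequalities $p^\star > p$ and $q^\star < q$ are strict --- which they are precisely because $B_j$ and $B_{j'}$ are nonempty and flank $X^\star$ on the correct sides. No combinatorics on words is needed for this statement.
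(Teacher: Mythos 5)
Your proof is correct and follows exactly the paper's argument: two explicit words on the opposite side of the same equation that both meet the positions of the fixed $\sol X$ must be separated by an occurrence of $X$, which is then forced to be a proper substring of $\sol X$ --- the configuration of Figure~\ref{fig:protected}. Your version merely makes the position bookkeeping (strictness of $p^\star > p$ and $q^\star < q$ via nonemptiness of the flanking words) explicit where the paper leaves it to the figure.
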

\begin{proof}
Without loss of generality assume that $X$ occurs within $\mathcal A_\ell$ in an equation $\mathcal A_\ell = \mathcal B_\ell$.
Suppose that $B_j$ and $B_{j'}$ (for $j' > j$) have their letters arranged against a letter from this fixed occurrence of \sol X,
see Fig~\ref{fig:protected}.
But $B_j$ and $B_{j'}$ are separated by at least one $X$ in the equation,
and whole this $X$ is also arranged against this fixed occurrence of $X$, contradiction.
\qedhere
\end{proof}

As a second remark, observe that tests include not only explicit letters from \sol {\mathcal A_\ell} and \sol {\mathcal B_\ell}
but also letters from \sol X.
In the following we will focus on tests in which at least one letter comes from an explicit word.
It is easy to show that the time spent on other tests is at most as large as time spent on those tests.
This follows from the fact that such other tests boil down to comparison of long blocks of $a$
and the previous test is of a different type, so we can account the comparison between two long blocks of $a$ to the previous test.
However, our fast testing procedures in some times makes a series of tests in $\Ocomp(1)$ time, so this argument can be made precise
only after the explanation of the details of various testing optimisations.
For this reason the proof of Lemma~\ref{lem: only letter tests} is delayed till the end of this section.

\begin{lemma}
\label{lem: only letter tests}
Suppose that we can perform all tests in which at least one letter comes from an explicit word in $\Ocomp(n)$ time.
Then we can perform all test in $\Ocomp(n)$ time.
\end{lemma}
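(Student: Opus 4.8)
The plan is to observe that the tests not covered by the hypothesis are precisely the comparisons of an $a$ coming from one occurrence of $X$ against an $a$ coming from another occurrence of $X$, that a maximal run of such comparisons is resolved in $\Ocomp(1)$ time, and that this $\Ocomp(1)$ can be charged either to a covered test or to a fast-testing invocation. First I would use that every candidate ever tested by \algsonevar{} (through \algtestsimple, and hence through \algpop, \algprefsuff{} and \algtest) is of the form $\sol X = a^\ell$. Hence a test in which neither compared letter comes from an explicit word compares an $a$ inside some occurrence of $X$ with an $a$ inside some occurrence of $X$; it always succeeds, so a mismatch is never found inside such a test. Process the positions in the order used by the testing procedure (along the ordered list of equations, left to right inside each) and group the maximal runs of consecutive tests of this kind. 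Inside one such run the pointer on the $\mathcal A$-side stays inside a single occurrence of $X$ and the pointer on the $\mathcal B$-side stays inside a single occurrence of $X$, because on one side two consecutive occurrences of $X$ are separated by a non-empty explicit word. A run not performed inside a fast-testing invocation is therefore resolved by a single fast-forward in $\Ocomp(1)$ time: one advances both pointers by $\ell$ minus the larger of the two offsets with which they entered their copies of $X$.

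Next I would bound the total cost of all these runs by $\Ocomp(n)$. The comparisons of the above kind performed \emph{inside} a fast-testing invocation are already paid for by the cost of that invocation, so it suffices to bound the total cost of runs performed on their own. Fix such a run and let $q$ be its first position. Since the first position of every equation lies inside $A_0$, which is non-empty and explicit, $q$ is not such a position, so the comparison at position $q-1$ was performed. By maximality that comparison is not a test of the above kind performed on its own; so either it is a covered test performed on its own --- in which case I charge the $\Ocomp(1)$ cost of the run to it --- or it is performed inside a fast-testing invocation, and then it is the last position of that invocation (an interior position would force $q$ to lie inside the invocation as well) --- in which case I charge the cost of the run to that invocation. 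Each covered test performed on its own, and each invocation, is charged in this way by at most one run, namely the one beginning at the position immediately after it.

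The hard part is to see that these charges sum to $\Ocomp(n)$, and this is exactly where the not-yet-given details of the optimised testing are needed. A covered test performed on its own costs $\Theta(1)$, so by the hypothesis there are $\Ocomp(n)$ of them. A fast-testing invocation compares $X w_1 X \cdots X w_m X$ against an identical copy, or $(\sol X u)^m$ against $(\sol X u')^{m'}$ (possibly with an offset), in $\Ocomp(1)$, respectively $\Ocomp(|u|+|u'|)$, time; such an invocation may itself contain comparisons of $\sol X$ against $\sol X$, but these are already counted in its cost, so runs lying inside invocations need no further charge. Since each invocation costs $\Omega(1)$ and all invocations together are among the covered tests and hence cost $\Ocomp(n)$, there are $\Ocomp(n)$ of them. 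Summing the charges, all the runs cost $\Ocomp(n)$ in total, and together with the hypothesis every test is performed within $\Ocomp(n)$ time. The delicate bookkeeping --- that a fast-testing invocation occupies a contiguous block of positions and so is the predecessor of at most one run performed on its own, and that the comparisons of $\sol X$ against $\sol X$ it performs internally are indeed absorbed in its stated cost --- is to be carried out once those subroutines are described.
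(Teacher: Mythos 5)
Your proposal is correct and follows essentially the same route as the paper: since only candidates of the form $\sol X = a^\ell$ are tested, a maximal run of $X$-against-$X$ comparisons is a comparison of a suffix of $a^\ell$ with $a^\ell$ that can be skipped in $\Ocomp(1)$ time, and this cost is charged to the immediately preceding test (or group of tests), which involves an explicit letter and already took $\Omega(1)$ time. Your write-up is somewhat more careful about the boundary bookkeeping than the paper's, but the charging argument is the same.
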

Thus, in the following section we consider only the tests in which at least one letter comes from an explicit word.

\subsubsection{Test types}
Suppose that for a substitution \solution{} a letter from $A_i$ is tested against a letter from \sol{XB_j}
or a letter from $B_j$ is tested against a letter from \sol{XA_i}
(the special case, when there is no explicit word after $X$ is explained later).
We say that this test is:

\begin{description}
	\item[protected] if at least one of $A_i$, $A_{i+1}$, $B_j$, $B_{j+1}$ is long;
	\item[failed] if $A_i$, $A_{i+1}$, $B_j$ and $B_{j+1}$ are short
	and a mismatch for \solution{} is found till the end of $A_{i+1}$ or $B_{j+1}$;
	\item[aligned] if $A_i = B_j$ and $A_{i+1} = B_{j+1}$, all of them are short and the first letter of $A_i$ is tested against the first letter of $B_j$;
	\item[misaligned] if all of $A_i$, $A_{i+1}$, $B_j$, $B_{j+1}$ are short,
	$A_{i+1} \neq A_i$ or $B_{j+1} \neq B_j$ and this is not an aligned nor failed test;
	\item[periodical] if $A_{i+1} = A_{i}$, $B_{j+1} = B_j$, all of them are short and this is not an aligned nor failed test.
\end{description}
This classification does not apply to the case, when a letter from $A_i$ is tested against letter from $X$ that is not followed by an explicit word.
There are two cases:
\begin{itemize}
	\item If $A_i$ is not followed by $X$ in the equation then $A_i$ is a last word,
	in particular it is long.
	Therefore this test is protected.
	\item If $A_i$ is followed by $X$ then there is a mismatch till the end of $A_iX$, so this test is failed.
\end{itemize}

Observe that `failed test' does not mean a mismatch, just a fact that soon there will be a mismatch.
The protected, misaligned and failed tests are done in a letter-by-letter way,
while the aligned and periodical tests are made in larger groups
(in $\Ocomp(1)$ time per group, this of course means that we use some additional data structures).

It is easy to show that there are no other tests, see Lemma~\ref{lem:no other test}.
We separately calculate the cost of each type of tests.
As some tests are done in groups, we distinguish between number of tests of a particular type (which is the number of letter-to-letter comparisons)
and the time spent on test of a particular type (which may be smaller, as group of tests are performed in $\Ocomp(1)$ time);
the latter includes also the time needed to create and sustain the appropriate data structures.

For failed tests note that they take constant time per phase and we know that there are $\Ocomp(\log n)$ phases.
For protected tests, we charge the cost of the protected test to the long word
and only $\Ocomp(|C|)$ such tests can be charged to one long word $C$ in a phase.
On the other hand, each long word is shortened by a constant factor in a phase, see Lemma~\ref{lem:reducing length},
and so this cost can be charged to those removed letters and thus the total cost
of those tests (over the whole run of \algsonevar) is $\Ocomp(n)$.

In case of the misaligned tests, it can be shown that \solution{} in this case is small
and that it is tested at the latest $\Ocomp(1)$ phases after the last of $A_{i+1}$, $A_i$, $B_{i+1}$, $B_i$ becomes short,
so this cost can be charged to, say, $B_i$ becoming short and only $\Ocomp(1)$ such tests are charged to this $B_i$
(over the whole run of the algorithm). Hence the total time of such tests is $\Ocomp(n)$.

For the aligned tests, consider the consecutive aligned tests, they correspond to comparison of
$A_iXA_{i+1}\dots A_{i+k}X$ and $B_jXB_{j+1}\dots B_{j+k}X$, where $A_{i + \ell} = B_{j + \ell}$ for $\ell = 1, \ldots , k$.
So to perform them efficiently, 
it is enough to identify the maximal (syntactically) equal
substrings of the equation and from Lemma~\ref{lem:words are equal}
it follows that this corresponds to the (syntactical) equality of substrings in the original equation.
Such an equality can be tested in $\Ocomp(1)$ using a suffix array constructed for the input equation
(and general lcp queries on it).
To bound the total running time it is enough to notice that the previous test is either misaligned or protected.
There are $\Ocomp(n)$ such tests in total,  so the time spent on aligned tests is also linear.

For the periodical test suppose that we are to test the equality of (suffix of) $\sol{(A_iX)^\ell}$
and (prefix of) $\sol{X(B_jX)^k}$.
If $|A_i| = |B_j|$ then the test for $A_{i+1}$ and $B_{j+1}$ is the same as for $A_i$ and $B_j$ and so can be skipped.
If $|A_i| > |B_j|$ then the common part of $\sol{(A_iX)^\ell}$ and $\sol{X(B_jX)^k}$ have periods
$|\sol{A_iX}|$ and $|\sol{B_jX}|$ and consequently has a period $|A_i| - |B_j| \leq \bound$.
So it is enough to test first common $|A_i| - |B_j|$ letters and check whether
$|\sol{A_iX}|$ and $|\sol{B_jX}|$ have period $|A_i| - |B_j|$,
which can be checked in $\Ocomp(1)$ time.

This yields that the total time of testing is linear. The details are given in the next subsections.

We begin with showing that indeed each test is either failed, protected, misaligned or periodical.

\begin{lemma}
\label{lem:no other test}
Each test is either failed, protected, misaligned, aligned or periodical.
Additionally, whenever a test in made, in $\Ocomp(1)$ time we can establish,
what type of test this is.
\end{lemma}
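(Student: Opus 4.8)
The plan is a case analysis on the lengths of the four words $A_i,A_{i+1},B_j,B_{j+1}$ involved in the test, together with one special case handled apart; the only part that is not completely immediate is carrying out the classification in $\Ocomp(1)$ time, which reduces to recognising a \emph{failed} test quickly.

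First I would dispose of the tests excluded by the five definitions, i.e.\ those in which a letter of $A_i$ is tested against a letter of an occurrence of $X$ that is \emph{not} followed by an explicit word (and symmetrically with the roles of $\mathcal A$ and $\mathcal B$ exchanged). As already argued in the paragraph preceding the lemma, either $A_i$ is the last word of its equation --- hence long, since short words are never first or last --- and the test is protected, or $A_i$ is followed by $X$ and a mismatch occurs by the end of $A_iX$, so the test is failed; which of the two is the case is read off in $\Ocomp(1)$ time from the stored ``last word'' flag together with a constant number of length comparisons.

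For the remaining tests --- a letter of $A_i$ against a letter of $\sol{XB_j}$, or symmetrically --- I would first check whether one of $A_i,A_{i+1},B_j,B_{j+1}$ is long; if so the test is protected, and this is decided in $\Ocomp(1)$ time since for each word we keep its length (equivalently a ``long'' flag). In the non-protected case $A_i$ and $B_j$ carry tested letters and hence are neither first nor last words (those being long), so they are interior words surrounded by occurrences of $X$; thus $A_{i+1}$ and $B_{j+1}$ are defined and all four words have length at most $\bound$. Now I would test, starting at the current position, whether a mismatch occurs by the end of $A_{i+1}$ on the $\mathcal A$-side or of $B_{j+1}$ on the $\mathcal B$-side: the two continuations involved are each a concatenation of $\Ocomp(1)$ pieces --- a suffix of a short word, a maximal block of $a$'s coming from $\sol X = a^\ell$ (possibly extended by the $a$-runs of the adjacent short words), a short word, another such block, a short word --- where a short-word piece is compared letter by letter in $\Ocomp(\bound) = \Ocomp(1)$ time and a block of $a$'s is compared with the corresponding block on the other side simply by comparing the two lengths in $\Ocomp(1)$ time. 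Hence in $\Ocomp(1)$ time we learn whether the test is failed. If it is not, I would check whether $A_i = B_j$, $A_{i+1} = B_{j+1}$ and the tested letter of $A_i$ is its first letter aligned with the first letter of $B_j$: if so the test is aligned; if not, the test is periodical when $A_{i+1} = A_i$ and $B_{j+1} = B_j$, and misaligned otherwise. All these equality checks are on words of length at most $\bound$, hence cost $\Ocomp(1)$, and by Lemma~\ref{lem:words are equal} they agree with the corresponding equalities in the input equation. Since the condition ``$A_{i+1} = A_i$ and $B_{j+1} = B_j$'' and its negation exhaust all possibilities, every non-protected, non-failed test receives exactly one of the labels aligned, periodical, misaligned, which together with the previous paragraph proves both assertions of the lemma.

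The one obstacle, and it is a mild one, is the $\Ocomp(1)$-time detection of a failed test: one has to be convinced that scanning ``up to the end of $A_{i+1}$ or $B_{j+1}$'' touches only a constant number of structural blocks even though the occurrences of $\sol X$ separating them can be arbitrarily long unary blocks --- the point being that two such blocks are compared by comparing their lengths, not letter by letter.
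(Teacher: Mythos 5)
Your proof is correct and follows essentially the same route as the paper's: first classify as protected via the long/short flags, then detect a failed test by an $\Ocomp(\bound)$-time look-ahead, then distinguish aligned, periodical and misaligned by $\Ocomp(1)$ equality checks on short words; your extra detail on why the look-ahead is constant time (comparing the unary blocks coming from $\sol X=a^\ell$ by length rather than letter by letter) is a point the paper leaves implicit but is exactly the intended argument.
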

\begin{proof}
Without loss of generality, consider a test of a letter from $A_i$ and from $\sol{XB_j}$.
If any of $A_{i+1}$, $B_{j+1}$, $A_i$ or $B_j$ is long then it is protected
(this includes the case in which some of $A_{i+1}$, $B_j$, $B_{j+1}$ does not exist).
Concerning the running time, for each explicit word we keep a flag, whether it is short or long.
Furthermore, as each explicit word has a link to its successor and predecessor,
we can establish whether any of $A_{i+1}$, $B_{j+1}$, $A_i$ or $B_j$ is long in $\Ocomp(1)$ time.

So consider the case in which all $A_{i+1}$, $B_{j+1}$, $A_i$ or $B_j$ (if they exist) are short,
which also can be established in $\Ocomp(1)$ time.
It might be that this test is failed (again, some of the words $A_{i+1}$, $B_j$, $B_{j+1}$ may not exist),
too see this we need to make some look-ahead tests, but this can be done in $\Ocomp(\bound)$ time
(we do not treat those look-aheads as tests, so there is not recursion here).

Otherwise, if the first letter of $A_i$ and $B_j$ are tested against each other and $A_i = B_j$ and $A_{i+1} = B_{j+1}$
then the test is aligned (clearly this can be established in $\Ocomp(1)$ time using look-aheads).
Otherwise, if $A_{i+1} \neq A_i$ or $B_{j+1} \neq B_j$ then it is misaligned (again, $\Ocomp(1)$ time for look-aheads).
In the remaining case $A_{i+1} = A_i$ and $B_{j+1} = B_j$, so this is a periodical test.
\qedhere
\end{proof}

\subsubsection{Failed tests}
We show that in total there are $\Ocomp(\log n)$ failed tests.
This follows from the fact that there are $\Ocomp(1)$ substitutions tested per phase and there are $\Ocomp(\log n)$ phases.
\begin{lemma}
\label{lem:failed is linear}
The number of all failed tests is $\Ocomp(\log n)$ over the whole run of \algsonevar.
\end{lemma}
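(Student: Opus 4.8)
The strategy is the one hinted at before the statement: show that one phase of \algsonevar{} spawns only $\Ocomp(1)$ failed tests, and recall that there are $\Ocomp(\log n)$ phases. First I would count the calls to \algtestsimple{} inside a phase: two from \algprefsuff{} (invoked by \algblocksc{}), at most two from \algpreproc{}, and at most two for each of the at most two crossing pairs processed by \algpairc{}, hence $\Ocomp(1)$ calls per phase; the final call to \algtest{} after the main loop is not part of a phase and is accounted for separately by Lemma~\ref{lem:a* solution}. So it is enough to bound by $\Ocomp(1)$ the number of failed tests produced by a single call to \algtestsimple{}.

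Fix such a call, testing a substitution $\sol X = a^\ell$; if it is a solution there is no mismatch and hence no failed test, so assume there is a mismatch, and note that it lies within a single word on side $\mathcal A$ (either an explicit word or an occurrence of $X$, i.e.\ a block $a^\ell$) and likewise within a single word on side $\mathcal B$. Consider a failed test of the main kind, involving short words $A_i,A_{i+1},B_j,B_{j+1}$ (in one of the two symmetric forms), with the mismatch occurring no later than the end of $A_{i+1}$ on side $\mathcal A$, or no later than the end of $B_{j+1}$ on side $\mathcal B$. In the first clause, the word of side $\mathcal A$ containing the mismatch is one of $A_i$, the block $a^\ell$ following $A_i$, or $A_{i+1}$; since that word is unique, this forces $i$ to one of $\Ocomp(1)$ values (or makes the clause impossible), and as $A_i$ is short only its $\leq \bound$ letters can take part in comparisons, contributing $\Ocomp(\bound)=\Ocomp(1)$ failed tests. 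In the second clause, the unique word of side $\mathcal B$ containing the mismatch forces $j$ to $\Ocomp(1)$ values; for each such $j$ the region of side $\mathcal B$ made of the occurrence of $X$ preceding $B_j$ together with $B_j$ is fixed, and by Lemma~\ref{lem: against X} at most one explicit word of side $\mathcal A$ is arranged against that occurrence of $X$ while at most $|B_j|\leq\bound$ further explicit words are arranged against $B_j$; each such short word supplies at most $\bound$ comparisons, so this clause contributes $\Ocomp(\bound^2)=\Ocomp(1)$ failed tests. The symmetric form and the special case, in which the tested letter of $X$ is not followed by an explicit word, are treated in the same spirit: in the special case the remaining content of the shorter side is a single block $a^m$ with $m\leq\ell$, so that it is matched by side $\mathcal A$ in $\Ocomp(1)$ block comparisons thanks to the run-skipping of \algtestsimple{}. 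Summing, one call to \algtestsimple{} yields $\Ocomp(1)$ failed tests.

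Putting the pieces together --- $\Ocomp(1)$ failed tests per call, $\Ocomp(1)$ calls per phase, and $\Ocomp(\log n)$ phases (the first word has length at most $n$ and shrinks by a constant factor each phase while long, by Lemma~\ref{lem:reducing length}, and once it is short only $\Ocomp(1)$ phases remain, by Lemma~\ref{lem: first is short}) --- yields $\Ocomp(\log n)$ failed tests over the whole run.

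The delicate point is the per-call bound. The tempting shortcut ``a failed test lies a constant number of letters before the mismatch'' is false, because the blocks $a^\ell$ separating consecutive short words can be arbitrarily long; one must instead count explicit words rather than letters in a bounded neighbourhood of the mismatch, using that all four words of a failed test are short together with Lemma~\ref{lem: against X} to keep the number of participating words of side $\mathcal A$ under control, and one must not overlook the special ``side ending in $X$'' case.
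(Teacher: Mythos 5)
Your proof is correct and takes essentially the same route as the paper's: $\Ocomp(1)$ substitutions are tested per phase, the failed tests for each are localized to a constant number of short explicit words adjacent to the (unique) mismatch position, Lemma~\ref{lem: against X} handles the intervening occurrences of $X$, and the $\Ocomp(\log n)$ bound on the number of phases finishes the argument. Your version is merely more explicit about the case analysis (and slightly over-counts with $\Ocomp(\bound^2)$ where $\Ocomp(\bound)$ suffices), which does not affect correctness.
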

\begin{proof}
As noticed, there are $\Ocomp(1)$ substitutions tested per phase.
Suppose that the mismatch is for the letter from $A_i$ and a letter from $XB_j$
(the case of $XA_i$ and $B_j$ is symmetrical).
Then the failed tests include at least one letter from $XA_{i-1}XA_i$ or $XB_{j-1}XB_jX$,
assuming they come from a short word.
There are at most $4 \bound$ failed tests that include a letter from $A_{i-1}$, $A_i$, $B_{j-1}$, $B_{j}$
(as the test is failed then in particular this explicit word is short).
Concerning the tests including the short occurrences of $X$ in-between them,
observe that by Lemma~\ref{lem: against X} each such $X$ can have tests with at most one short word,
so this gives additional $5 \bound$ tests.
Since $\bound = \Ocomp(1)$, we conclude that there are $\Ocomp(1)$ failed tests per phase and so $\Ocomp(\log n)$ failed tests in total,
as there are $\Ocomp(\log n)$ phases, see Lemma~\ref{lem:reducing length}.
\qedhere
\end{proof}

\subsubsection{Protected tests}
As already claimed, the total number of protected tests is linear in terms of length of long words:
to show this it is enough to charge the cost of the protected test to the appropriate long word and
see that a long word $A$ can be charged only $|A|$ such tests for test including letters from $A$
and $\Ocomp(1)$ letters from neighbouring short words, which yields $\Ocomp(|A|)$ tests.
As the length of the long words drops by a constant factor, summing this up over all phases in which this
explicit word is long yields $\Ocomp(n)$ tests in total.

\begin{lemma}
\label{lem:protected is linear}
In one phase the total number of protected tests is proportional to the length
of the long words.
In particular, there are $\Ocomp(n)$ such test during the whole run of \algsonevar.
\end{lemma}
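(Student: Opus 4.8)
*In one phase the total number of protected tests is proportional to the length of the long words; in particular there are $\Ocomp(n)$ such tests during the whole run of \algsonevar.*

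The plan is to fix a single phase and a single tested substitution $\sol X = a^\ell$, and to set up a charging scheme that sends each protected test to a letter of a long word, in such a way that each long word $C$ receives $\Ocomp(|C|)$ charges. By definition of a protected test of a letter from $A_i$ against a letter from $\sol{XB_j}$, at least one of the four words $A_i,A_{i+1},B_j,B_{j+1}$ is long. I would split the bookkeeping into two parts: tests in which the letter from $A_i$ itself lies in a long word (symmetrically for $B_j$), and tests in which the letter lies in a short word but one of the \emph{neighbouring} words is long.

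First I would handle the case where the explicit word containing one of the two tested letters is long, say the letter comes from a long $A_i$. I charge this test to that very letter of $A_i$. Since at most one letter of $A_i$ is tested against any fixed position of $\sol{\mathcal B_k}$, and conversely each letter of $A_i$ participates in at most one test (testing walks left-to-right, consuming one symbol of each side), $A_i$ receives at most $|A_i|$ such charges in this phase. For the remaining protected tests, the letter from $A_i$ sits in a short $A_i$, but one of $A_{i+1},B_j,B_{j+1}$ is long; here I use Lemma~\ref{lem: against X} together with the boundedness $|A_i|\le\bound$. Consecutive such tests that witness the same long neighbour form a block that stays within the short words $A_i$ (or within the single short word arranged against the current occurrence of $X$, by Lemma~\ref{lem: against X}), hence has length $\Ocomp(\bound)=\Ocomp(1)$; I charge the whole block to the first letter of the long neighbour. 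A fixed long word $C$ can be the ``long neighbour'' of only $\Ocomp(1)$ short words adjacent to it in the equation and of $\Ocomp(1)$ occurrences of $X$ flanking it, so it absorbs only $\Ocomp(\bound)=\Ocomp(1)$ such blocks, i.e.\ $\Ocomp(1)$ charges, in this phase. Summing, the number of protected tests in one phase is $\Ocomp\!\big(\sum_C |C|\big)$ over long words $C$, which is what is claimed.

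To pass from one phase to the whole run, I invoke Lemma~\ref{lem:reducing length}: a long word $C$ shortens by a factor $\tfrac34$ each phase (and a short word never becomes long), so the lengths of a fixed $\mathcal A$-$i$-word over the phases in which it is long form a geometric series summing to $\Ocomp(n_i)$, where $n_i$ is its initial length; likewise for $\mathcal B$-$j$-words. Hence $\sum_{\text{phases}}\sum_{C\text{ long}}|C| = \Ocomp(n)$, and the per-phase bound gives $\Ocomp(n)$ protected tests in total.

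The main obstacle is the second case: bounding the protected tests whose own letter is in a short word. One must be careful that a single long word is not charged by too many short neighbours or $X$-occurrences, and here Lemma~\ref{lem: against X} does the essential work — it guarantees that each occurrence of $X$ is arranged against at most one explicit word, so a long $C$ cannot be the ``long neighbour'' for an unbounded number of distinct blocks. The only genuinely delicate point is to check that, within one maximal run of consecutive protected tests sharing a long neighbour, the $A_i$-side letters indeed stay inside words of total length $\Ocomp(\bound)$; this again reduces to Lemma~\ref{lem: against X} applied to the occurrences of $X$ interleaved among the short words, exactly as in the proof of Lemma~\ref{lem:failed is linear}.
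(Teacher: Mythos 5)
Your proof is correct and follows essentially the same route as the paper's: charge each protected test to the long word among $A_i, A_{i+1}, B_j, B_{j+1}$, use Lemma~\ref{lem: against X} to show a fixed long word absorbs only $\Ocomp(|C|)$ charges per phase (its own letters plus $\Ocomp(\bound)$ from adjacent short words and flanking occurrences of $X$), and then sum the geometric series from Lemma~\ref{lem:reducing length} over phases. The only cosmetic difference is that you group consecutive short-word tests into blocks, whereas the paper fixes a priority order for the charging; the bounds are identical.
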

\begin{proof}
As observed in Lemma~\ref{lem: only letter tests} we can consider only tests in which at least one letter comes from an explicit word.
Suppose that a letter from $A_i$ takes part in the protected test (the argument for a letter from $B_j$ is similar, it is given later on)
and it is tested against a letter from $XB_j$,
then one of $A_i$, $A_{i+1}$, $B_j$, $B_{j+1}$ is long, we charge the cost according to this order,
i.e.\ we charge it to $A_i$ if it is long, if $A_i$ is not but $A_{i+1}$ is long, we charge it to $A_{i+1}$,
if not then to $B_{j}$ if it is long and otherwise to $B_{j+1}$.
The analysis and charging for a test of a letter from $B_j$ is done in a symmetrical way
(note that when the test includes two explicit letters, we charge it twice, but this is not a problem).

Now, fix some long word $A_i$, we estimate, how many protected tests can be charged to it.
It can be charged with cost of tests that include its own letters,
so $|A_{i}|$ tests.
When $A_{i-1}$ is short, it can also charge tests in which its letters take part.
As it is short, it is at most $\Ocomp(\bound) = \Ocomp(1)$ such tests.

Also some $\mathcal B$ words can charge the cost of tests to $A_i$, we can count only the test in which letters
from $A_{i}$ do not take part.
This can happen in two situations:
letters tested against $XA_i$ and letters tested against $XA_{i-1}$ (in which case we additionally assume that $A_{i-1}$ is short).
We have already accounted the tests made against $A_{i-1}$ and $A_i$ and by Lemma~\ref{lem: against X}
for each occurrence of $X$ there is at most one explicit short word whose letters are tested against this occurrence of $X$.
So there are additionally at most $2 \bound$ tests of this form.

So in total $A_{i}$ is charged only $\Ocomp(|A_{i}|)$ in a phase.
From Lemma~\ref{lem:reducing length} the sum of lengths of long words drops by a constant factor in each phase,
and as in the input it is at most $n$, the total sum of number of protected tests is $\Ocomp(n)$.
\qedhere
\end{proof}

\subsubsection{Misaligned tests}
On the high level, in this section we want to show that if there is a misaligned test then the tested solution is small
and use this fact for accounting the cost of such tests.
However, this statement is trivial, as we test only solutions of the form $a^k$ for some $k$, which are always small.
To make this statement more meaningful, we generalise the notion of a misaligned test for arbitrary substitutions,
not only the tested one.
In this way two explicit words $A_i$ and $B_j$ can be misaligned for a substitution \solution.
We show three properties of this notion:
\begin{enumerate}[M1]
	\item \label{M1} If there is a misaligned test for a substitution \solution{}
	for a letter from $A_i$ against letter in $XB_j$ or a letter from $B_j$ against letter from
	$XA_i$ then $A_i$ and $B_j$ are misaligned for \solution.
	This is shown in Lemma~\ref{lem: definition reformulation}.
	\item \label{M2} If there are misaligned words $A_i$ and $B_j$ for a solution \solution{} then \solution{} is small,
	as shown in Lemma~\ref{lem:low height for misaligned}.
	\item \label{M3} If $A_i$ and $B_j$ are misaligned for \solution{} in a phase $\ell$
	then \solution{} is reported in phase $\ell$ or the corresponding words $A_i'$ and $B_j'$ in phase $\ell+1$
	are also misaligned for the corresponding $\solution'$, see Lemma~\ref{lem:misaligned earlier}.
\end{enumerate}

Those properties are enough to improve the testing procedure so that one $\mathcal A$ $i$-word (or $\mathcal B$ $j$-word)
takes part in only $\Ocomp(1)$ misaligned tests:
suppose that $A_i$ becomes small in phase $\ell$. Then all solutions, for which it is misaligned with some $B_j$,
are small by~\Mref{1}.
Hence, by Theorem~\ref{thm:solution with small height},
all of those solutions are reported (in particular: tested) within the next $c$ phases, for some constant $c$.
Thus, if $A_i$ takes part in a misaligned test (for \solution) in phase $\ell' > \ell + c$ then \solution{} is not a solution:
by~\Mref{3} also in phase $\ell$ the $A_i$ and $B_j$ were misaligned (for the corresponding solution $\solution'$),
and solution $\solution'$ was reported before phase $\ell'$.
Hence we can immediately terminate the test;
therefore $A_i$ can take part in misaligned tests in phases $\ell$, $\ell+1$, \ldots, $\ell + c$, i.e.\ $\Ocomp(1)$ ones.
This plan is elaborated in this section, in particular, some technical details (omitted in the above description) are given.

We say that $A_i$ and $B_j$ that are blocks from two sides of one equations $\mathcal A_\ell = \mathcal B_\ell $
are \emph{misaligned for a substitution} \solution{} if
\begin{itemize}
	\item a mismatch for \solution{} is not found till the end of $A_{i+1}$ or $B_{j+1}$;
	\item all $A_{i+1}$, $A_i$, $B_{j+1}$ and $B_j$ are short;
	\item either $A_i \neq A_{i+1}$ or $B_j \neq B_{j+1}$
	\item it does not hold that $A_i = B_j$ and $A_{i+1} = B_{j+1}$
	and the first letter of $A_i$ is at the same position as the first letter of $B_j$
	under substitution \solution;
	\item the position of the first letter of $A_i$ in \sol{\mathcal A _\ell }
	is among the position of \sol{XB_j} in \sol{\mathcal B _\ell} or, symmetrically, 
	the position of the first letter of $B_j$ in \sol{\mathcal B _\ell }
	is among the position of \sol{XA_i} in \sol{\mathcal A _\ell}
\end{itemize}

We show~\Mref{1}, which shows that the definitions of misaligned blocks and misaligned tests are reformulations of each other.

\begin{lemma}
\label{lem: definition reformulation}
If a letter from $A_i$ is tested (for \solution) against a letter from $XB_j$ and this test is misaligned
then $A_i$ and $B_j$ are misaligned for \solution;
similar statement holds for letters from $B_j$.
\end{lemma}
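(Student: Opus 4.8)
The plan is to show that the five bulleted conditions in the definition of "$A_i$ and $B_j$ are misaligned for \solution" follow, one by one, from the hypothesis that some letter of $A_i$ is tested (for \solution) against a letter of $\sol{XB_j}$ and that this test is misaligned in the sense of the test-type classification. The proof is essentially a bookkeeping argument matching the two lists of conditions, so I would simply walk through them in order.

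First I would unpack the definition of a misaligned \emph{test}: by that definition, $A_i$, $A_{i+1}$, $B_j$, $B_{j+1}$ are all short, we have $A_{i+1}\neq A_i$ or $B_{j+1}\neq B_j$, and the test is neither aligned nor failed. The shortness of all four words is exactly the second bullet of misaligned \emph{words}, and $A_{i+1}\neq A_i$ or $B_{j+1}\neq B_j$ is exactly the third bullet. Because the test is not failed, no mismatch for \solution{} is found before the end of $A_{i+1}$ or $B_{j+1}$ — this is the first bullet. Because the test is not aligned, it is \emph{not} the case that $A_i=B_j$, $A_{i+1}=B_{j+1}$, and the first letter of $A_i$ is tested against (hence at the same position as) the first letter of $B_j$; this is precisely the negation in the fourth bullet. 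For the last bullet, the fact that a letter of $A_i$ is tested against a letter of $\sol{XB_j}$ means that the position of that letter of $A_i$ in $\sol{\mathcal A_\ell}$ lies within the positions occupied by $\sol{XB_j}$ in $\sol{\mathcal B_\ell}$; a short argument (using that $A_i$ is a contiguous block and that consecutive explicit words are separated by $X$'s) upgrades this to: the first letter of $A_i$ lies within the positions of $\sol{XB_j}$, or — in the boundary situation where the tested letter of $A_i$ is not its first — one checks that in fact the first letter of $B_j$ lies within $\sol{XA_i}$, giving the symmetric alternative. Either way the fifth bullet holds.

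The symmetric statement, for a letter from $B_j$ tested against a letter of $\sol{XA_i}$, is obtained by swapping the roles of the two sides of the equation throughout; I would just remark that the argument is identical and omit it.

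The only point requiring a little care — the \emph{main obstacle}, such as it is — is the fifth bullet when the tested letter of $A_i$ is not its first letter: one must argue that the disjunction in that bullet still holds, i.e.\ that either the first letter of $A_i$ still falls inside $\sol{XB_j}$, or else the offset has shifted enough that the first letter of $B_j$ falls inside $\sol{XA_i}$. This follows from the observation (used repeatedly in this section, cf.\ Lemma~\ref{lem: against X}) that the occurrences of $X$ surrounding $A_i$ and $B_j$ are long enough to force one of the two words' left endpoints into the other's $X$-padded span; since $A_i$ and $B_j$ are adjacent-in-structure to occurrences of $X$, a mismatch has not yet occurred, and everything in sight is short, there is no room for the alignment to "escape" both ways at once. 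Beyond that the proof is routine.
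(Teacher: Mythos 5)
Your proof is correct and follows essentially the same route as the paper's: a direct bullet-by-bullet matching of the negated test-type conditions (not failed, not protected, not aligned, is misaligned) against the five clauses defining misaligned words. The only difference is that you spend extra care on the fifth clause where the paper simply asserts it ``by the choice of $B_j$''; your added discussion of the boundary case is harmless and, if anything, slightly more thorough.
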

\begin{proof}
This is just a reformulation of a definition (we consider only the case of letters from $A_i$,
the argument for letters from $B_j$ is symmetrical):

\begin{itemize}
	\item Since this is not a failed test, there is no mismatch till the end of $A_{i+1}$ and $B_{j+1}$.
	\item As this is not a protected test, all $A_i$, $A_{i+1}$, $B_j$ and $B_{j+1}$ are short.
	\item As this is a misaligned test, either $A_i \neq A_{i+1}$ or $B_j \neq B_{j+1}$.
	\item As this is not an aligned test, either $A_i \neq B_j$ or $A_{i+1} \neq B_{j+1}$ or the first letter of $A_i$
	is not at the same position as the first letter of $B_j$ (both under \solution).
	\item By the choice of $B_j$, the first position of $A_i$ under \solution{} is among the positions of $XB_j$ (under \solution).
	\qedhere
\end{itemize}
\end{proof}

We move to showing~\Mref{2}.
It follows by considering \sol{XA_iXA_{i+1}X} and \sol{XB_jXB_{j+1}X}.
The large amount of \sol X in it allow to show the periodicity of fragments of \sol X and in the end, that \solution{} is small.

\begin{lemma}
\label{lem:low height for misaligned}
When the $A_i$ and $B_j$ are misaligned for a solution \solution{} then \solution{} is small.
\end{lemma}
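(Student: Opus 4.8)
\section*{Proof proposal}

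The plan is to exploit the many copies of $\sol X$ inside $\sol{XA_iXA_{i+1}X}$ and $\sol{XB_jXB_{j+1}X}$, both of which are substrings of the common word $S:=\sol{\mathcal A_\ell}=\sol{\mathcal B_\ell}$ (recall \solution{} is a solution). Write $x=|\sol X|$ and $\alpha=|A_i|$, $\alpha'=|A_{i+1}|$, $\beta=|B_j|$, $\beta'=|B_{j+1}|$, so that $\alpha,\alpha',\beta,\beta'\le\bound$. If $x\le\bound$ then $\sol X$ is itself almost periodic with side size $\le\bound$ and \solution{} is small (take $w=\sol X$, $k=1$, $v=\epsilon$); so from now on assume $x>\bound$.

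First I would record two elementary periodicities inside $S$: from its $\mathcal A$-side occurrence, $\sol X A_i\sol X$ is a substring of $S$ of period $x+\alpha$, and symmetrically $\sol X B_j\sol X$ is a substring of $S$ of period $x+\beta$; these two periods differ by at most $\bound$. The misalignment hypothesis --- the first letter of $A_i$ lies among the positions occupied by $\sol{XB_j}$, or symmetrically --- forces these two occurrences in $S$ to be offset by some $\delta$ with $|\delta|\le x$, hence to overlap over at least $x+1$ letters. Now there are at most six copies of $\sol X$ to play with (before, between and after $A_i,A_{i+1}$, and likewise around $B_j,B_{j+1}$); any two of them sitting in $S$ at a distance $d$ with $0<d<x$ already give that $\sol X$ has period $d$, and all the distances that arise are of the shape ``$\delta$ plus a bounded correction'' (such as $|\delta|$, $\delta+x+\alpha$, $|\delta+\alpha-\beta|$, $\delta+x+\alpha+\alpha'-\beta$). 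The two side conditions in the definition of ``misaligned'' ($A_i\ne A_{i+1}$ or $B_j\ne B_{j+1}$, and that it is not the aligned configuration) are what rule out the degenerate alignments in which all these copies of $\sol X$ coincide and no nontrivial period is produced --- precisely the aligned and periodical test configurations, which are handled separately.

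It then remains to extract, by combinatorics on words, that $\sol X$ itself --- up to side pieces of bounded length --- is periodic with period $\le\bound$, i.e.\ that it is of the form $w^k v$ with $w,v$ almost periodic of period and side size $\le\bound$. Splitting on where the first letter of $A_i$ falls relative to the $\mathcal B$-side landmarks (inside the copy of $\sol X$ before $B_j$, or inside $B_j$; and whether $\delta$ is near $0$, near $\pm x$, or intermediate): for $\delta$ near $\pm x$ one of the distances above is already $\le\bound$, yielding such a period directly; for intermediate $\delta$ one selects two periods $d_1,d_2$ of $\sol X$ with $|d_1-d_2|$ and $d_1+d_2-x$ both bounded by a constant multiple of $\bound$ --- when two of the natural distances coincide this forces the use of the ``outer'' copies of $\sol X$ coming from $A_{i+1}$ and $B_{j+1}$ --- and then one concludes with the Fine--Wilf theorem when $d_1+d_2\le x$ (it gives period $\gcd(d_1,d_2)$, a divisor of $|d_1-d_2|\le\bound$), and with a short periodicity computation when $d_1+d_2>x$ (showing that $\sol X=\pi\,\mu\,\pi$ with $\pi$ of period $\le\bound$ and $|\mu|\le d_1+d_2-x$ bounded, equivalently that the length-$\min(d_1,d_2)$ period block of $\sol X$ itself has period $\le\bound$). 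In all cases \solution{} turns out to be small.

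I expect the case analysis of the previous paragraph to be the main obstacle: one must make sure that in every alignment a usable pair of periods can actually be produced (this is why the extra copies of $\sol X$ around $A_{i+1}$ and $B_{j+1}$ are needed), and then verify that every residual piece that ends up in a side part is genuinely bounded --- the constant $\bound=100$ was fixed with exactly this slack in mind.
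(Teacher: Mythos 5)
Your core idea is the paper's: place the cross\-/side copies of \sol X{} around $A_i,A_{i+1},B_j,B_{j+1}$, read off periods of \sol X{} from their pairwise offsets, and use the ``not aligned'' and ``$A_i\neq A_{i+1}$ or $B_j\neq B_{j+1}$'' clauses to exclude the degenerate offsets. Where you genuinely diverge is the last combinatorial step. The paper never invokes Fine--Wilf: from the two ``parallel'' periods $p$ and $p-(a-b)$ it uses only the border-of-a-border observation (the prefix of length $x-p+a-b$ has period $a-b\le\bound$), which needs \emph{no} condition on the sum of the periods, and then combines this with the single ``mixed'' period $x-p+a$ (offset of the \sol X{} before $B_j$ against the one after $A_i$) to conclude $\sol X=(w^ku')^{\ell}w'$. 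Your Fine--Wilf route can be made to work, and where it applies it even gives a stronger conclusion (\sol X{} literally has period $\le\bound$, or is $\pi\mu\pi$ with $|\mu|$ bounded), but the pair selection you assert is the crux and is left unproved: the two available pairs have sums $\approx 2p$ and $\approx 2(x-p)$, so exactly one of them satisfies $d_1+d_2\le x+\Ocomp(\bound)$ depending on whether $p\lessgtr x/2$, and when a pair degenerates ($a=b$, or $a'=b$, \dots) you must fall back on the other pair or on the all-lengths-equal contradiction. Two further cautions: your ``i.e.''\ equating ``periodic with period $\le\bound$ up to bounded side pieces'' with ``small'' is false --- the paper's outcome $(w^ku')^{\ell}w'$ has an outer period $x-p+a$ that is unbounded, and even your own $\pi\mu\pi$ is small only via $w^1v$ with $w=\pi$, $v=\mu\pi$ --- and your constants come out as multiples of $\bound$ rather than $\bound$ itself, whereas the definition of small (and its use downstream in Theorem~\ref{thm:solution with small height}) fixes the constant.

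The concrete gap is that your distance bookkeeping presupposes that each of the four short words lies wholly inside a cross-side copy of \sol X, which the definition of misaligned does not guarantee. One of $A_i,A_{i+1},B_j,B_{j+1}$ may straddle the boundary between some explicit word (not necessarily one of these four) and an adjacent occurrence of $X$; then your six copies of \sol X{} and the listed offsets do not describe the configuration, and the Fine--Wilf machinery has nothing to bite on. The paper dispatches this case first and separately: a short word lying partly over an explicit word and partly over an adjacent \sol X, being itself flanked by $X$ on both sides, produces two copies of \sol X{} at offset strictly between $0$ and $\bound$, so \sol X{} has period $<\bound$ outright (and the sub-case where all four words lie entirely over explicit words is exactly the excluded aligned configuration). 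You need this preliminary dichotomy before your ``all words sit inside copies of \sol X'' analysis --- which is the remaining, and main, case --- can begin; splitting only on where the first letter of $A_i$ falls does not make the case analysis exhaustive.
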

\begin{proof}
Suppose that $A_i$ and $B_j$ are from an equation $\mathcal A_\ell = \mathcal B_\ell $.
In the proof we consider only one of the symmetric cases, in which $A_i$ is begins not later than $B_j$
(i.e.\ the first letter of $A_i$ is arranged against the letter from $XB_j$).

There are two main cases: either some of $A_i$, $A_{i+1}$, $B_j$ and $B_{j+1}$ has some of its letters
arranged against an explicit word or all those words are arranged against (some occurrences) of $X$.

\begin{figure}
	\centering
		\includegraphics{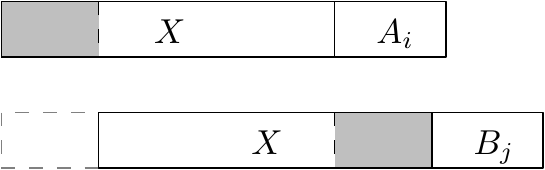}
	\caption{A letter from $B_j$ is arranged against the letter from $A_i$. The period of $\sol X$ is in grey.}
	\label{fig:misaligned_in}
\end{figure}

\subsubsection*{One of the words has some of its letters arranged against an explicit word.}
We claim that in this case \solution{} has a period of length at most $\bound$, in particular, it is small.
First of all observe that it is not possible that \emph{each} of $A_i$, $A_{i+1}$, $B_j$ and $B_{j+1}$
has \emph{all} of its letters arranged against letters of an explicit word:
since $A_i$ is arranged against $XB_j$ this would imply that $A_i$ is arranged against $B_j$
(in particular, their first letters are at corresponding positions)
and (as no mismatch is found till end of $A_i$ and $B_j$) so $A_i = B_j$.
Similarly, $A_{i+1} = B_{j+1}$.
This contradicts the assumption that $A_i$ and $B_j$ are misaligned.

Thus, there is a word among $A_i$, $A_{i+1}$, $B_j$ and $B_{j+1}$, say $B_j$,
that is partially arranged against an explicit word and partially against $X$
(note that this explicit word does not have to be among $A_i$, $A_{i+1}$, $B_j$ and $B_{j+1}$),
see Figure~\ref{fig:misaligned_in}.
As each explicit words is proceeded and succeeded by $X$, it follows that \sol X has a period at most $\bound$.

\begin{figure}
	\centering
		\includegraphics{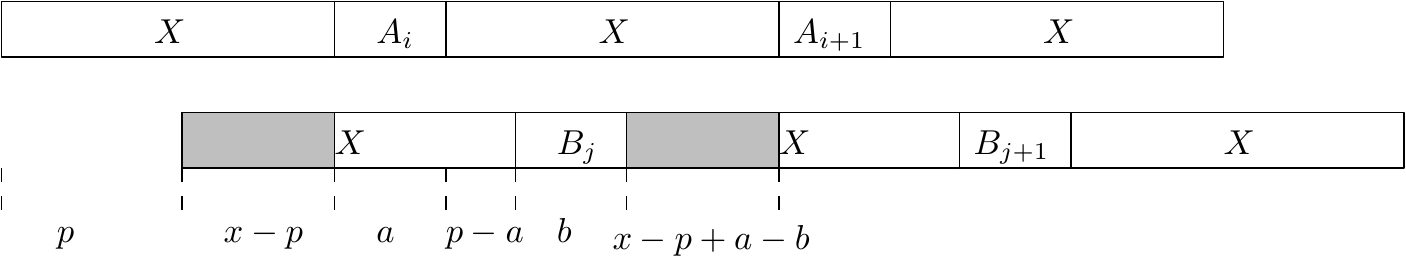}
	\caption{The letters of $A_i$, $A_{i+1}$, $B_j$ and $B_{j+1}$ are arranged against the letters from \sol X.
	The lengths of fragments of text are beneath the figure, between dashed lines.
	Comparing the positions of the first and second \sol X yields that $p$ is a period of \sol X,
	second and third that $x-p+a$ while the third and fourth that $p - a + b$ is.
	The borders of \sol X corresponding to the first and third one are marked in grey.}
	\label{fig:misaligned}
\end{figure}

\subsubsection*{All words have all their letters arranged against occurrences of $X$.}
In the following we assume that letters from $A_i$, $A_{i+1}$, $B_{j}$ and $B_{j+1}$ are arranged against the letters from \sol X.
Observe that due to Lemma~\ref{lem: against X} this means that whole $A_i$ is arranged against \sol X preceding $B_j$,
the $B_j$ against \sol X preceding $A_{i+1}$, whole $A_{i+1}$ against \sol X preceding $B_{j+1}$ and 
whole $B_{j+1}$ against \sol X succeeding $A_{i+1}$.

Let $a = |A_{i}|$, $b = |B_{j}|$  and $x = |\sol X|$, as in Fig.~\ref{fig:misaligned}.
There are three cases: $a > b$, $a < b$ and $a = b$, we consider them separately.

Consider first the case in which $a > b$, see Fig.~\ref{fig:misaligned}.
Let $p$ denote the offset between the \sol X preceding $A_{i}$ and the one proceeding $B_{j}$;
then $\sol X$ has a period $p$.
Similarly, when we consider the \sol X succeeding $A_i$ and the one succeeding $B_j$ we obtain that
the offset between them is $p - a + b$, which is also a period of \sol X.
Those offsets correspond to borders (of \sol X) of lengths $x - p$ and $x - p + a - b$, see Fig.~\ref{fig:misaligned}.
Then the shorter border (of length $x - p$) is also a border of the longer one (of length $x-p+a-b$),
hence the border of length $x - p + a - b$ has a period $a - b$,
so it is of the form $w^ku$, where $|w| = a - b$ and $|u| < a - b$.
Now, the prefix of \sol X of length $x-p + a$ is of the form $w^ku'$, for some $u'$ of length less than $a$
(as this is a prefix of length $x-p+a-b$ extended by the following $b$ letters).
When we compare the positions of \sol X preceding $B_j$ and the one succeeding $A_i$ we obtain that $\sol X$ has a period $x - p + a$
so the whole \sol X is of the form $(w^ku')^{\ell}w'$, where $w'$ is a prefix of $w^ku'$, hence \solution{} is small:
$w$ and $u$ are of length at most \bound, as $w'$ is a prefix of $w^ku$, either it is a prefix of $w^k$, so it is of the form $w^{k'} w''$
where $w''$ is a prefix of $w$, or it includes the whole $w^k$, so it is of the form $w^ku''$, where $u''$ is a prefix of $u$.

Consider the symmetric case, in which $b > a$ and again use Fig.~\ref{fig:misaligned}.
The same argument as before shows that $p$ and $p - a + b$ are periods of \sol X
and the corresponding borders are of length $x - p$ and $x - p + a - b$.
Now, the shorter of them (of length $x - p + a - b$) is a border of longer of them (of length $x - p$),
so the prefix of length $x - p$ of \sol X has a period $b - a$, so it is of the form $w^ku$,
where $|w| = b-a$ and $|u| < b-a$.
Hence the prefix of length $x - p + a$ is of the form $w^ku'$ for some $u'$ of length less than $b$.
As in the previous case, \sol X has a period $x - p + a$ and so the whole \sol X is of the form $(w^ku')^{\ell}w'$,
where $w'$ is a prefix of $w^ku'$, hence \solution{} is small.

Consider now the last case, in which $|A_{i}| = |B_{j}|$.
If $|A_{i+1}| \neq |A_{i}|$ then $|B_{j}| \neq |A_{i+1}|$ and we can repeat the same argument as above,
with $B_{j}$ and $A_{i+1}$ taking the roles of $A_{i}$ and $B_{j}$, which shows that \solution{} is small.
So consider the case in which $|A_{i+1}| = |A_{i}|$.
If $|B_j| \neq |B_{j+1}|$ then again, repeating the argument as above for $A_{i+1}$ and $B_{j+1}$ yields that \solution{} is small.
So we are left with the case in which $|A_{i+1}| = |A_i| = |B_{j}| = |B_{j+1}|$.
Then $A_{i+1}$ is arranged against the same letters in \sol X as $A_{i}$
and $B_{j+1}$ is arranged against the same letters in \sol X as $B_{j}$.
As there is no mismatch till the end of $A_{i+1}$ and $B_{j+1}$,
we conclude that $A_{i+1} = A_i$ and $B_{j+1} = B_j$ contradicting the assumption that $A_i$ and $B_j$ are misaligned,
so this case is non-existing.
\qedhere
\end{proof}

We now show that if $A_i$ and $B_j$ are misaligned for \solution{}
then they were (for a corresponding solution) in the previous phase (assuming that all involved words were short).
This is an easy consequence of the way explicit words are modified (we prepend and append the same letters
and compress all explicit words in the same way).

\begin{lemma}
\label{lem:misaligned earlier}
Suppose that $A_i$ and $B_j$ are misaligned for a solution \solution.
If at the previous phase all $A_{i+1}'$, $A_i'$, $B_{j+1}'$ and $B_j'$ were short
then $A_i'$ and $B_j'$ were misaligned for the corresponding solution $\solution'$.
\end{lemma}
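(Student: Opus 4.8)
The plan is to verify the clauses in the definition of ``misaligned for $\solution'$'' one at a time, each time transferring the corresponding fact from the current phase back to the previous one. To begin, since \solution{} is a solution, the corresponding substitution $\solution'$ of the previous-phase equation is again a solution: this is exactly what the solution-preservation and pair/block-compression-implementing properties of the operations performed in a phase guarantee, with $\solution'(X)$ obtained from $\sol X$ by prepending the letters popped off the left of $X$, appending those popped off the right, and un-compressing the pairs and blocks compressed during the phase. Neither \solution{} nor $\solution'$ has a mismatch, so the mismatch clause of the definition holds vacuously for both $(A_i,B_j,\solution)$ and $(A_i',B_j',\solution')$, while the shortness clause for $(A_i',B_j',\solution')$ is precisely the hypothesis of the lemma.

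Next I would dispose of the two clauses that mention only equalities of explicit words. Since $A_i,A_{i+1},B_j,B_{j+1}$ and their primed counterparts are all short, none of them is a first or last word, so Lemma~\ref{lem:words are equal} applies to every pair among them and gives $A_i'=A_{i+1}'$ iff $A_i=A_{i+1}$, $B_j'=B_{j+1}'$ iff $B_j=B_{j+1}$, $A_i'=B_j'$ iff $A_i=B_j$, and $A_{i+1}'=B_{j+1}'$ iff $A_{i+1}=B_{j+1}$. Consequently the clause ``$A_i\neq A_{i+1}$ or $B_j\neq B_{j+1}$'' transfers verbatim to the primed words. For the clause that forbids simultaneously ``$A_i'=B_j'$'', ``$A_{i+1}'=B_{j+1}'$'' and ``the first letters of $A_i'$ and $B_j'$ occupy the same position under $\solution'$'', I would argue by contradiction: if all three held, then by the equivalences above the first two equalities hold for the unprimed words, and by the positional analysis carried out below so does the coincidence of first-letter positions, contradicting that $(A_i,B_j)$ is misaligned for \solution.

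It remains to handle the genuinely positional assertions --- the coincidence of first-letter positions just invoked, and the last clause, that the position of the first letter of $A_i$ in $\sol{\mathcal A_\ell}$ falls within the positions occupied by $\sol{XB_j}$ in $\sol{\mathcal B_\ell}$ (or its mirror image). These I would obtain from the following observation about a single phase: each of its operations --- left/right popping a letter off $X$, cutting the $a$-prefix and $b$-suffix off $X$, pair compression, block compression --- is carried out simultaneously for every occurrence of $X$ and on both sides of every equation, and it keeps $\sol{\mathcal A_\ell}$ and $\sol{\mathcal B_\ell}$ equal (each remains the value of a solution). Popping or cutting does not alter this common string, apart from a global deletion of a common prefix/suffix of the two sides forced to restore the form~\eqref{eq:univariate}, which shifts positions on the two sides by the same amount; a pair (block) compression replaces every occurrence of $ab$ (every maximal block $a^\ell$) by one letter, hence induces one and the same monotone map on the positions of $\sol{\mathcal A_\ell}$ and of $\sol{\mathcal B_\ell}$. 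Therefore any relation between a position on the $\mathcal A$-side and a range on the $\mathcal B$-side holds after an operation exactly when it held before it; iterating this over all operations of the previous phase yields the two positional clauses for $(A_i',B_j',\solution')$ and finishes the proof.

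The only real difficulty is the bookkeeping in this last step: pinning down the position map when a letter sitting on a word/variable boundary is affected. This happens (i) when the re-establishment of~\eqref{eq:univariate} after a pop deletes a letter from the front of both $\mathcal A_\ell$ and $\mathcal B_\ell$, and (ii) when, after uncrossing a crossing pair, the first letter of an explicit word is merged with the adjacent letter of $\sol X$ into the letter replacing that pair. In either case the identical change is made to the matching word on the other side of the equation, so no relation between the two sides is disturbed. (One also uses that a phase never turns $A_i'$ or $B_j'$ into a first or last word, nor deletes it as an overdue word --- such words are long, respectively already removed --- which the hypothesis rules out.)
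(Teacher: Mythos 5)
Your proposal is correct and follows essentially the same route as the paper's proof: verify the definition of ``misaligned'' clause by clause, invoke Lemma~\ref{lem:words are equal} for the clauses about syntactic equality of the (short, hence non-first/non-last) explicit words, and transfer the two positional clauses by observing that popping/cutting and (for a solution) pair and block compression act identically on $\sol{\mathcal A_\ell}$ and $\sol{\mathcal B_\ell}$, so the ``arranged against'' relations are preserved between phases. Your extra bookkeeping about the trimming needed to restore the form~\eqref{eq:univariate} and about boundary letters is a slightly more careful rendering of the paper's parenthetical remark that $\sol{\mathcal A}$ and $\solution'(\mathcal A')$ are the same words.
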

\begin{proof}
We verify the conditions on misaligned words point by point:
\begin{itemize}
	\item Since $\solution'$ is a solution, there is no mismatch.
	\item By the assumption, all  $A_{i+1}'$, $A_i'$, $B_{j+1}'$ and $B_j'$ are short.
	\item We know that either $A_i \neq A_{i+1}$ or $B_j \neq B_{j+1}$ and so by Lemma~\ref{lem:words are equal}
	either $A_i' \neq A_{i+1}'$ or $B_j' \neq B_{j+1}'$ (observe that none of them is the last nor first, as they are all short).
	\item Suppose that $A_i' = B_j'$, $A_{i+1}' = B_{j+1}'$ and under $\solution'$ the first letters of $A_i'$ and $B_j'$
	are arranged against each other.
	By Lemma~\ref{lem:words are equal} it follows that $A_i = B_j$, $A_{i+1} = B_{j+1}$.
	Observe that left-popping and right popping preserves the fact that the first letters of
	$\mathcal A$ $i$-word and $\mathcal B$ $j$-word are arranged against each other for $\solution'$
	(as \sol {\mathcal A} and $\solution'(\mathcal A')$ are the same words).(as \sol {\mathcal A} and $\solution'(\mathcal A')$ are the same words)
	As $\solution'$ is a solution, the same applies to pair compression and block compression.
	Hence, the first letters of $A_i$ and $B_j$ are arranged against each other, contradiction
	with the assumption that $A_i$ and $B_j$ are misaligned.	
	\item
	Suppose that the first letter of $A_i$ is arranged against a letter from \sol{XB_j}.
	Consider, how $A_i'$ and $XB_j'$ under $\solution'$ are transformed to $A_i$ and $X B_j$ under \solution.
	As in the above item, popping letters does not influence whether the first letter of $\mathcal A$ $i$-word is arranged against
	letter from \sol X and $\mathcal B$ $j$-word
	(as \sol {\mathcal A} and $\solution'(\mathcal A')$ are the same words).
	Since $\solution'$ is a solution, the same applies also to pair and block compression.
	So the position of the first letter of $A_i$ is among the position of \sol{XB_j}
	if and only if the first letter of $A_i'$ is arranged against a letter from $\solution'(XB_j')$.

	The case in which the position of the first letter of $B_j$	is among the position of \sol{XA_i} is shown in a symmetrical way.
	\qedhere
\end{itemize}
\qedhere
\end{proof}

Now we are ready to give the improved procedure for testing and estimate the number of the misaligned tests in it.

\begin{lemma}
\label{lem:misaligned is linear}
There are $\Ocomp(n)$ misaligned tests during the whole run of \algsonevar.
\end{lemma}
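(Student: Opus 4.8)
The plan is to equip \algtestsimple{} with an \emph{early-stopping rule} and then to show both that the rule is sound and that the misaligned tests it leaves can be amortised at $\Ocomp(1)$ per explicit word. Fix once and for all the constant $c$ for which Theorem~\ref{thm:solution with small height} guarantees that every small solution is reported within $c$ phases. Suppose that in phase $\ell'$ the testing procedure is about to perform a misaligned test --- say a letter of $A_i$ against a letter of \sol{XB_j}, the symmetric case being analogous --- so that all of $A_i$, $A_{i+1}$, $B_j$, $B_{j+1}$ are short. Let $W$ be whichever of these four became short latest and let $m$ be the phase in which $W$ did so. If $\ell' > m+c$ the procedure halts at once and declares that $\sol X = a^\ell$ is not a solution; otherwise it carries out the comparison as before. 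With each word we store the phase in which it turned short (set in $\Ocomp(1)$ time once it does), so evaluating the rule takes $\Ocomp(1)$.

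First I would check that the rule is sound. Suppose it fires in phase $\ell'$ but $\sol X = a^\ell$ is a solution of the phase-$\ell'$ equation. By~\Mref{1} (Lemma~\ref{lem: definition reformulation}), $A_i$ and $B_j$ are then misaligned for this solution. Every step of \algsonevar{} preserves solutions or implements a pair/block compression, and splitting, trimming and removing overdue equations do not change the solution set, so this solution corresponds to a solution $\solution_m$ of the phase-$m$ equation. All four words are short in every phase from $m$ to $\ell'$ --- they are short in phase $m$ by the choice of $W$ and $m$, and short words stay short by Lemma~\ref{lem:reducing length} --- so~\Mref{3} (Lemma~\ref{lem:misaligned earlier}) applies in each step and, tracing back from $\ell'$ to $m$, the phase-$m$ copies of $A_i$ and $B_j$ are misaligned for $\solution_m$. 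By~\Mref{2} (Lemma~\ref{lem:low height for misaligned}) the solution $\solution_m$ is small, so by Theorem~\ref{thm:solution with small height} it is reported at some phase $m' \le m+c$. But a reported solution is deleted from the equation: when $\sol X = b$ is reported inside \algpop{}, the variable is replaced by $bX$ or $Xb$ (line~\ref{leftpop} or line~\ref{rightpop}) and the empty substitution is never counted, so after phase $m'$ no solution corresponds to $\solution_m$. As $m+c < \ell'$, the phase-$\ell'$ equation has no solution corresponding to $\solution_m$ either, contradicting the assumption. Hence the rule discards only spurious candidates, and the correctness of \algsonevar{} is preserved.

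Then I would count the misaligned tests actually performed. Charge a misaligned test in phase $\ell'$ on the words $A_i$, $A_{i+1}$, $B_j$, $B_{j+1}$ to that one of the four, $W$, which became short latest, say in phase $m$; by the stopping rule $\ell' \le m+c$, and $\ell' \ge m$ since $W$ is short in phase $\ell'$, so $W$ is charged only in the $\Ocomp(1)$ phases $m, m+1, \dots, m+c$. In each such phase $W$ is charged $\Ocomp(\bound) = \Ocomp(1)$ tests: if $W$ is in the role of $A_i$ or $B_j$ the compared letter comes from $W$ itself, giving at most $|W| \le \bound$ comparisons per tested candidate and $\Ocomp(1)$ candidates per phase; if $W$ is in the role of $A_{i+1}$ or $B_{j+1}$ the compared letters come from its positionally fixed neighbour $A_i$ or $B_j$, which is tested against the single occurrence of $X$ between that neighbour and $W$, and by Lemma~\ref{lem: against X} at most one short word is tested against a fixed occurrence of $X$ for a fixed candidate, so again at most $\bound$ comparisons. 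Since there are $\Ocomp(n)$ $\mathcal A$ $i$-words and $\mathcal B$ $j$-words over the whole run --- $\Ocomp(n)$ in the input, and splitting creates none while removing overdue words only destroys some --- the total number of misaligned tests is $\Ocomp(n)$.

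The main obstacle will be the soundness argument: one has to be sure to invoke~\Mref{1},~\Mref{3} and the backward tracing of the solution only in phases in which $A_i$, $A_{i+1}$, $B_j$ and $B_{j+1}$ are simultaneously short --- which is exactly what taking $m$ to be the latest of their four turning-short phases secures --- and to use that reporting a solution removes it from the equation, so that a candidate surviving beyond phase $m+c$ cannot be the descendant of the small solution delivered by~\Mref{2} and Theorem~\ref{thm:solution with small height}. The counting in the last paragraph is routine and parallels the analysis of protected tests in Lemma~\ref{lem:protected is linear}.
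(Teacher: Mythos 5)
Your proposal is correct and follows essentially the same route as the paper: the same early-termination rule keyed to the phase in which the last of the four words became short, the same chain \Mref{1} $\to$ backward induction on \Mref{3} $\to$ \Mref{2} $\to$ Theorem~\ref{thm:solution with small height} for soundness, and the same charging of the surviving tests to that word at $\Ocomp(\bound)$ per phase over $\Ocomp(1)$ phases. The only cosmetic difference is that the paper separately books the $\Ocomp(\log n)$ tests that trigger rejection, which your halting rule handles implicitly.
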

\begin{proof}
Consider a tested solution $\solution$ and a misaligned test for a letter from $A_i$ against a letter from $XB_j$
(the case of test of letters from $B_j$ tested against $XA_i$ the argument is the same).
Let $\ell$ be the number of the first phase, in which all $\mathcal A$ $i$-word, $\mathcal A$ $i+1$-word,
$\mathcal B$ $j$-word and $\mathcal B$ $j+1$-word are short.
We claim that this misaligned test happens between $\ell$-th and $\ell+c$ phase,
where $c$ is the $\Ocomp(1)$ constant from Theorem~\ref{thm:solution with small height}.

Let $A_i'$ and $B_j'$ be the corresponding words in the phase $\ell$.
Using induction on Lemma~\ref{lem:misaligned earlier} it follows that $A_i'$ and $B_j'$ are misaligned for $\solution'$.
Thus by Lemma~\ref{lem:low height for misaligned} the $\solution'$ is small and thus by Theorem~\ref{thm:solution with small height}\
it is reported till phase $\ell+c$. So it can be tested only between phases $\ell$ and $\ell+c$, as claimed.

This allows an improvement to the testing algorithm:
whenever (say in phase $\ell$) a letter from $A_i$ has a misaligned test against a letter from $\sol{X B_j}$
we can check (in $\Ocomp(1)$ time), in which turn $\ell'$ the last among $\mathcal A$ $i$-word, $\mathcal A$ ${i+1}-word$, $\mathcal B$ $j$-word
and $\mathcal B$ ${j+1}$ word became small
(it is enough to store for each explicit word the number of phase in which it became small).
If $\ell' + c <\ell$ then we can terminate the test, as we know already that \solution{} is not a solution.
Otherwise, we continue.

Concerning the estimation of the cost of the misaligned tests (in the setting as above), there are two cases:
\begin{description}
	\item[The misaligned tests that lead to the rejection of \solution]
	This can happen once per tested solution	and there are $\Ocomp(\log n)$ tested solution in total
	($\Ocomp(1)$ per phase and there are $\Ocomp(\log n)$ phases).
	\item[Other misaligned tests]
	The cost of the test (of a letter from $A_i$ tested against $\sol{XB_j}$)
	is charged to the last one among $\mathcal A$ $i$-word, $\mathcal A$ ${i+1}$-word, $\mathcal B$ $j$-word
	and $\mathcal B$ ${j+1}$-word that became short.
	By the argument above, this means that this word became short within the last $c$ phases.
	
	Let us calculate, for a fixed $\mathcal A$ $i$ word (the argument for $\mathcal B$ $j$-word is symmetrical)
	how many aligned tests of this kind can be charged to this word.
	They can be charged only within $c$ phases after this word become short.
	In a fixed phase we test only a constant (i.e.\ $5$) substitutions.
	For a fixed substitution, $A_i$ can be charged the cost of tests in which letters from $A_i$ or $A_{i-1}$ are involved
	(providing that $A_i$/$A_{i-1}$ is short), which is at most $2 \bound$.
	They can be charged also the tests from letters from $B_j$ that is aligned against $X$ proceeding $A_{i-1}$
	or $X$ proceeding $A_i$ (providing that $B_j$ as well as $A_{i-1}$ are short).
	Note that there is only one $B_j$ whose letter are aligned against $X$ proceeding $A_{i-1}$
	and one for $X$ proceeding $A_i$, see Lemma~\ref{lem: against X},
	so when they are short this gives additional $2 \bound$ tests.

	This yields that one $\mathcal A$ $i$ word is charged $\Ocomp(\bound) = \Ocomp(1)$ tests in total.
	Summing over all words in the instance yields the claim of the lemma.
\qedhere
\end{description}
\end{proof}

\subsubsection{Aligned tests}
Suppose that we make an aligned test, without loss of generality consider the first such test in a sequence of aligned tests.
Let it be between the first letter of $A_i$ and the first letter in $B_j$
(both of those words are short).
For this $A_i$ and $B_j$ we want to perform the whole sequence of successive aligned tests at once,
which corresponds of jumping to $A_{i+k}$ and $B_{j+k}$ within the same equation such that
\begin{itemize}
	\item $A_{i +\ell} = B_{i + \ell}$ for $0 \leq \ell < k$;
	\item $A_{i+k} \neq B_{j+k}$ or one of them is long or $A_{i+k}X$  or $B_{j+k}X$ ends one side of the equation.
\end{itemize}
Note that this corresponds to a syntactical equality of fragments of the equation,
which, by Lemma~\ref{lem:words are equal} is equivalent to a syntactical equality of fragments of the original equation.
We preprocess (in $\Ocomp(n)$ time) the input equation
(building a suffix array equipped with a structure answering general lcp queries)
so that in $\Ocomp(1)$ we can return such $k$ as well as the links to $A_{i+k}$ and $B_{j+k}$.
In this way we perform all equality tests for $A_iXA_{i+1}X\ldots A_{i+k-1}X = B_{j}XB_{j+1}X\ldots XB_{j+k-1}X$
in $\Ocomp(1)$ time.

To simplify the considerations, when $A_iX$ ($B_jX$) ends one side of the equation,
we say that this $A_i$ ($B_j$, respectively) is \emph{almost last} word.
Observe that in a given equation exactly one side has a last word and one an almost last word.

\begin{lemma}
\label{lem: data structure aligned test}
In $\Ocomp(n)$ we can build a data structure which given equal $A_i$ and $B_j$ in $\Ocomp(1)$ time
returns the smallest $k \geq 1$ and links to $A_{i+k}$ and $B_{j+k}$
such that $A_{i+k} \neq B_{j+k}$ or one of $A_{i+k}$, $B_{j+k}$ is a last word or one of $A_{i+k}$, $B_{j+k}$ is an almost last word.
\end{lemma}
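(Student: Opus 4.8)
The plan is to reduce the query to a single \emph{longest common prefix} query on one fixed string built from the \emph{input} equation. Let $W$ be the word obtained by writing out the input equation with $X$ treated as an ordinary letter and with fresh separators $\$$ (between the two sides) and $\#$ (at the end):
\[
W \;=\; A_0\,X\,A_1\,X\cdots X\,A_{n_{\mathcal A}}\,\$\,B_0\,X\,B_1\,X\cdots X\,B_{n_{\mathcal B}}\,\#,
\]
where $A_0,\dots,B_{n_{\mathcal B}}$ are the \emph{input} words. Since the input alphabet has size $\Ocomp(n^c)$, in $\Ocomp(n)$ time we build a suffix array for $W$, its LCP array, and a range-minimum structure on the latter, so that the length of the longest common prefix of any two suffixes of $W$ is available in $\Ocomp(1)$ time. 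We also precompute, for every position of $W$, the index of the maximal $X/\$/\#$-free block (a \emph{cell}) containing it and the starting position of each cell; from a position $p$ and a length $L$ this lets us read off in $\Ocomp(1)$ time how many cell boundaries lie inside $W[p\twodots p+L-1]$ and where the $(p{+}L)$-th position falls.

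Alongside this, for every explicit word $C$ of the current system that is not a first or last word of its equation, we maintain a pointer $\pi(C)$ to the position in $W$ at which the input word from which $C$ descends begins. This correspondence is well defined and cheap to maintain: popping prepends (appends) the same string to every such word simultaneously, compression of a non-crossing pair or block is applied uniformly, and neither operation changes which input word a word descends from; the only events that destroy a word or promote it to a first/last word are equation splits, each handled in $\Ocomp(1)$ time with total cost $\Ocomp(n)$ already charged in Lemma~\ref{lem:overdue}. By the same budget we also keep each equation's words in a structure allowing us to advance $k$ words in $\Ocomp(1)$ time (equivalently, a reverse pointer from each used cell of $W$ to the unique current word descending from it; that this is single-valued for non-first/last words follows from the way splitting trims boundary words).

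For a query on equal $A_i$ and $B_j$ we set $p_A=\pi(A_i)$, $p_B=\pi(B_j)$ and let $L$ be the longest common prefix of the suffixes of $W$ starting at $p_A$ and at $p_B$. By Lemma~\ref{lem:words are equal}, $A_i=B_j$ means exactly that the input words at $p_A$ and $p_B$ are equal, so $L$ is at least their common length; moreover, in the situation where the lemma is used ($A_{i+1},B_{j+1}$ exist) each of these input words is followed by an $X$ in $W$, so $L$ reaches past it. Walking cell by cell: as long as the two suffixes agree through the $X$ following the $\ell$-th cell, those $X$'s are at the same offset, hence the $\ell$-th input words on the two sides have equal length and (being covered by the common prefix) equal content, so by Lemma~\ref{lem:words are equal} the current words $A_{i+\ell}$ and $B_{j+\ell}$ are equal; at the first cell where agreement fails, either the two input words differ internally, or one is a proper prefix of the other (a separator meets a letter), or one side has reached $\$$ — and in that last case the corresponding current word is the last or almost last word of its equation. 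Therefore the number $k$ of cell boundaries strictly inside $W[p_A\twodots p_A+L-1]$, capped by the distance from $A_i$ (resp.\ $B_j$) to the last or almost last word of the current equation on the $\mathcal A$- (resp.\ $\mathcal B$-) side, is exactly the required value; this, and the links to $A_{i+k}$ and $B_{j+k}$, is read off in $\Ocomp(1)$ time from $L$, the precomputed cell data, and the maintained pointers.

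The bookkeeping (maintaining $\pi$ and the $\Ocomp(1)$ word-advance across splits) is routine; the step I expect to demand the most care is the correctness case analysis for where the longest common prefix of $W$ terminates, together with verifying that the cap by the current equation's last/almost last word is both necessary — the suffix array of $W$ is oblivious to how the input equation is later split — and sufficient.
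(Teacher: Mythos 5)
Your proposal is correct and follows essentially the same route as the paper: a suffix array with lcp/RMQ support built once over the input equation written out with $X$ as a letter, a prefix-count of $X$-boundaries to convert the lcp length into the number $k$ of agreeing words (justified via Lemma~\ref{lem:words are equal}), pointers from current words back to input positions, and a separate cap by the last/almost-last word of the current equation. The only differences are presentational (your ``cells'' are the paper's \textsl{prefX} table, and you spell out the pointer maintenance in more detail).
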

Note that it might be that some of the equal words $A_{i + \ell} = B_{i + \ell}$ are long,
and so their tests should be protected (also, the tests for some neighbouring words).
So in this way we also make some free protected tests, but this is not a problem.
Furthermore, the returned $A_{i+k}$ and $B_{j+k}$ are guaranteed to be in the same equation.
\begin{proof}
First of all observe that for $A_i$ and $B_j$ it is easy to find the last word in their equation as well as the almost last word of the equation:
when we begin to read a particular equation, we have the link to both the last word and the almost last word of this equation
and we can keep them for the testing of this equation.
We also know the numbers of those words so we can also calculate the respective candidate for $k$.
So it is left to calculate the minimal $k$ such that $A_{i + k} \neq A_{j + k}$.

Let $A_{i}'$, $B_{j}'$ etc.\ denote the corresponding original words of the input equation.
Observe that by Lemma~\ref{lem:words are equal} it holds that $A_{i + \ell}' = B_{j + \ell}'$ if and only if $A_{i + \ell} = B_{j + \ell}$
as long as none of them is last or first word.
Hence, it is enough to be able to answer such queries for the input equation: if the returned word is in another equation
then we should return the last or almost last word instead.

To this end we build a suffix array~\cite{suffixarrays} for the input equation,
i.e.\ for $A_1'XA_2'X\ldots A_{n_{\mathcal A}}'XB_1'XB_2'X\ldots B_{n_{\mathcal B}}'\$$.
Now, the lcp query for suffixes $A_i'\ldots \$$ and $B_j'\ldots \$$\
returns the length of the longest common prefix.
We want to know what is the number of explicit words in the common prefix,
which corresponds to the number of $X$s in this common prefix.
This information can be easily preprocessed and stored in the suffix array:
for each position $\ell$ in $A_1'XA_2'X\ldots A_{n_{\mathcal A}}'XB_1'XB_2'X\ldots B_{n_{\mathcal B}}'\$$
we store, how many $X$s are before it in the string and store this in the table $\textsl{prefX}$.
Then when for a suffixes beginning at positions $p$ and $p'$ we get that their common prefix is of length $\ell$,
the $\textsl{prefX}[p+\ell] - \textsl{prefX}[p]$ is the number of $X$s in the common prefix in such a case.
If none of $A_i$, $A_{i+1}$, \ldots, $A_{i+k}$ nor $B_j$, $B_{j+1}$, \ldots, $B_{j+k}$ is the last word
nor it ends the equation (i.e.\ they are all still in one equation)
by Lemma~\ref{lem:words are equal} the $k$ is the answer to our query
(as $A_i = B_j$, $A_{i+1} = B_{j+1}$,\ldots and $A_{i+k} \neq B_{j+k}$ and none of them is a last word, nor none of them ends the equation).
To get the actual links to those words, at the beginning of the computation we make a table, which for each $i$
return the pointer to $\mathcal A$ $i$-word and $\mathcal B$ $i$-word has the link to this word.
As we know $i$, $j$ and $k$ we can obtain the appropriate links in $\Ocomp(1)$ time.
So it is left to compare the value of $k$ with the value calculated for the last word and almost last word
and choose the one with smaller $k$ and the corresponding pointers.
\qedhere
\end{proof}

Using this data structure we perform the aligned tests is in the following way:
whenever we make an aligned test (for the first letter of $A_i$ and the first letter of $B_j$),
we use this structure, obtain $k$ and jump to the test of the first letter of $A_{i+k}$ with the first letter of $B_{j+k}$
and we proceed with testing from this place on.
Concerning the cost, by easy case analysis it can be shown that the test right before the first of sequence of aligned tests
(so the test for the last letters of $A_{i-1}$ and $B_{j-1}$) is either protected or misaligned.
There are only $\Ocomp(n)$ such tests (over the whole run of \algsonevar), so the time spend on aligned tests is $\Ocomp(n)$ as well.

\begin{lemma}
\label{lem:aligned is linear}
The total cost aligned test as well as the usage of the needed data structure is $\Ocomp(n)$.
\end{lemma}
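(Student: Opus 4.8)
The plan is to split the cost of the aligned tests into the one-time construction of the data structure of Lemma~\ref{lem: data structure aligned test}, the cost of the queries to it, and $\Ocomp(1)$ incidental bookkeeping per query. The construction is performed once, on the input equation, and costs $\Ocomp(n)$ by Lemma~\ref{lem: data structure aligned test}; each query then runs in $\Ocomp(1)$. Recall that one query handles an entire maximal run of consecutive aligned tests at once --- jumping from $(A_i,B_j)$ directly to $(A_{i+k},B_{j+k})$, performing on the way, for free, any protected tests on long words that happen to lie inside the run --- and that the target word stays inside the same equation, so a query never crosses an equation boundary. Consequently every other operation occurring during the aligned tests (fetching, for the current equation, the pointers to its last and almost-last word, reading off $k$, following the precomputed index-to-pointer tables) is $\Ocomp(1)$ and can be charged to the query that starts the current run. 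So it suffices to bound the number of maximal runs of aligned tests by $\Ocomp(n)$.

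To this end I would argue that each maximal run is either the first test performed while testing one of the candidate substitutions $\sol X = a^\ell$, or it is immediately preceded by a test of another type. The first possibility occurs $\Ocomp(1)$ times per phase, hence $\Ocomp(\log n)$ times in total. For the second, let the run compare $A_iXA_{i+1}X\cdots$ against $B_jXB_{j+1}X\cdots$; the preceding test is then the comparison of the last letters of $A_{i-1}$ and $B_{j-1}$ (in the degenerate cases the long word $A_0$ is involved, so that test is protected, and the run cannot start at $A_0$ itself). I claim this preceding test is protected or misaligned: it is not aligned, by maximality of the run; it is not failed, since a failed test is the last test before a mismatch and no run could follow it; and it is not periodical, because a periodical test at $(A_{i-1},B_{j-1})$ requires $A_i=A_{i-1}$ and $B_j=B_{j-1}$, in which case the periodical group starting there already sweeps through all consecutive copies of these words, in particular through $A_i$ and $B_j$, so the comparison of the first letters of $A_i$ and $B_j$ would be part of that group rather than the start of a fresh aligned run.

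Given the claim, distinct maximal aligned runs have distinct preceding tests, so by Lemma~\ref{lem:protected is linear} and Lemma~\ref{lem:misaligned is linear} there are $\Ocomp(n)$ maximal aligned runs and thus $\Ocomp(n)$ queries. Adding the $\Ocomp(\log n)$ runs that open a candidate's test and the $\Ocomp(n)$ construction, the total cost of the aligned tests together with the needed data structure is $\Ocomp(n)$.

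I expect the main obstacle to be making the ``easy case analysis'' of the second paragraph fully rigorous: precisely identifying the preceding test as the comparison of the last letters of $A_{i-1}$ and $B_{j-1}$ once the intervening comparisons that lie wholly inside a copy of $\sol X$ are passed over (using the convention of Lemma~\ref{lem: only letter tests}), and ruling out the periodical case by reasoning about the extent of the periodical group rather than merely its defining conditions. The remaining accounting is routine, against the bounds already established in Lemmata~\ref{lem:failed is linear}, \ref{lem:protected is linear} and \ref{lem:misaligned is linear}.
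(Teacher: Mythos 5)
Your proposal is correct and follows essentially the same route as the paper: charge each maximal sequence of aligned tests to the immediately preceding test, rule out that this preceding test is aligned, failed, or periodical (hence it is protected or misaligned by Lemma~\ref{lem:no other test}), and invoke Lemmata~\ref{lem:protected is linear} and~\ref{lem:misaligned is linear}. The only differences are cosmetic: your extra case of a run with no preceding test does not actually arise (the first word is long, so the very first test is protected), and your exclusion of the periodical case via the extent of the periodical group is a slight variant of the paper's argument that such a preceding test would in fact be aligned.
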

\begin{proof}
We formalise the discussion above.
In $\Ocomp(1)$ we get to know that this is an aligned test, see Lemma~\ref{lem:no other test}.
Then in $\Ocomp(1)$, see Lemma~\ref{lem: data structure aligned test},
we get the smallest $k$ such that $A_{i+k} \neq B_{j+k}$ or one of them is an almost last word for this equation or the last word for this equation.
We then jump straight to the test for the first letter of $A_{i+k}$ and $B_{j+k}$.

Consider $A_{i-1}$ and $B_{j-1}$ we show that the test for their last letters (so the test immediately before the first aligned one)
is protected or misaligned.
By Lemma~\ref{lem:no other test} it is enough to show that it is not aligned, nor periodic, nor failed.
\begin{itemize}
	\item If it were failed then also the test for the first letters of $A_{i}$ and $B_j$ would be failed.
	\item It cannot be aligned, as we chose $A_i$ and $B_j$ as the first in a series of aligned tests.
	\item If it were periodic, then $A_{i-1} = A_i$ and $B_{j-1} = B_j$ while by assumption $A_i = B_j$,
	which implies that this test is in fact aligned, which was already excluded.
\end{itemize}
Hence we can associate the $\Ocomp(1)$ cost of whole sequence of aligned test to the previous test,
which is misaligned or protected.
Clearly, one misaligned or protected test can be charged with only one sequence of aligned tests (as it is the immediate previous test).
By Lemma~\ref{lem:protected is linear} and \ref{lem:misaligned is linear} in total there are $\Ocomp(n)$ misaligned and protected tests.
Thus in total all misaligned tests take $\Ocomp(n)$ time.
\qedhere
\end{proof}

\subsubsection{Periodical tests}
The general approach in case of periodical tests is similar as for the aligned tests: we would like to
perform all consecutive periodical tests in $\Ocomp(\bound)$ time and show that the test right before this sequence of periodic tests is either
protected or misaligned.
As in case of aligned tests, the crucial part is the identification of a sequence of consecutive periodical tests.
To identify them quickly, we keep for each short $A_i$ the value $k$ such that $A_{i+k}$ is the first word
that is different from $A_i$ or is the last word or the almost last word
(in the sense as in the previous section: $A_{i+k}$ is almost last if $A_{i+k}X$ ends the side of the equation),
as well as the link to this $A_{i+k}$.
Those are easy to calculate at the beginning of each phase.
Now when we perform a periodical test for a letter from $A_i$,
we test letters from $\sol{(AX)^k}$ against the letters from (suffix of) $\sol{X(BX)^\ell}$.
If $|A| = |B|$ then both strings are periodic with period $|\sol{AX}|$ and their equality can be tested in $\Ocomp(|A|)$.
If $|A| \neq |B|$ then we retrieve the values $k_{\mathcal A}$ and $k_{\mathcal B}$ which tell us what is repetition of $AX$ and $BX$.
If one of them is smaller than $3$ we make the test naively, in time $\Ocomp(|A| + |B|)$.
If not, we exploit the fact that $\sol{BX}^\ell$ has a period $|\sol{BX}|$
while \sol {(AX)^k} has a period $|\sol {AX}|$ and so their common fragment (if they are indeed equal)
has a period $||\sol {AX}| - |\sol {BX}|| = ||A| - |B||$.
Hence we check, whether \sol{AX} and \sol{BX} have this period and check the common fragment of this length,
which can be done in $\Ocomp(|A| + |B|)$ time.
The converse implication holds as well: if $\sol {AX}$ and $\sol {BX}$ have period $||A| - |B||$ and the first $||A| - |B||$
tests are successful then all of them are.
Concerning the overall running time, as in the case of aligned test, the test right before the first periodic test
is either protected or misaligned, so as in the previous section it can be shown
that the time spent on periodical tests is $\Ocomp(n)$ during the whole \algsonevar.

\begin{lemma}
\label{lem:periodical is linear}
Performing all periodical tests and the required preprocessing takes in total $\Ocomp(n)$ time.
\end{lemma}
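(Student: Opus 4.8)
The plan is to follow the template established for aligned tests in Lemma~\ref{lem:aligned is linear}, with the suffix-array lookup replaced by an elementary periodicity computation. The preprocessing consists, for every short word $A_i$ (and symmetrically every short $B_j$), of the number $k_{\mathcal A}$ such that $A_i = A_{i+1} = \dots = A_{i+k_{\mathcal A}-1}$ while $A_{i+k_{\mathcal A}}$ is different from $A_i$, long, a last word, or an almost last word, together with a pointer to $A_{i+k_{\mathcal A}}$. By Lemma~\ref{lem:words are equal} equal non-boundary words stay equal and by Lemma~\ref{lem:reducing length} short words stay short, so these runs are stable; hence the information can be set up at the beginning of each phase by one left-to-right scan of the equations (or maintained incrementally), at $\Ocomp(1)$ amortised cost per explicit word plus $\Ocomp(\#_X)$ for the splits performed in Lemma~\ref{lem:overdue}, which is $\Ocomp(n)$ in total and in any case is subsumed in the bound of Lemma~\ref{lem:time for data}.

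The core step is to execute a whole group of consecutive periodical tests in $\Ocomp(\bound)=\Ocomp(1)$ time. Such a group begins with a periodical test of a letter from $A_i$ against a letter from $\sol{XB_j}$, so by definition $A_{i+1}=A_i=:A$ and $B_{j+1}=B_j=:B$ are short, and the comparison to carry out is that of (a suffix of) $\sol{(AX)^{k_{\mathcal A}}}$ against (a prefix of) $\sol{X(BX)^{k_{\mathcal B}}}$. If $|A|=|B|$, both strings have period $|\sol{AX}|$, so comparing a single period settles the whole group in $\Ocomp(|A|)=\Ocomp(\bound)$ time. If $|A|\neq|B|$ and $\min(k_{\mathcal A},k_{\mathcal B})<3$, the fragment to be compared has length $\Ocomp(\bound)$ and is tested naively. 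If $|A|\neq|B|$ and $k_{\mathcal A},k_{\mathcal B}\geq 3$, I use that $\sol{(AX)^{k_{\mathcal A}}}$ has period $|\sol{AX}|$ while $\sol{(BX)^{k_{\mathcal B}}}$ has period $|\sol{BX}|$, so their common part, if they agree at all, has period $d:=||\sol{AX}|-|\sol{BX}||=||A|-|B||\leq \bound$; the test then reduces to verifying that $\sol{AX}$ and $\sol{BX}$ each have period $d$ and comparing the first $d$ letters of the two sides, and by a Fine--Wilf type argument these checks together certify agreement of (and, on failure, locate the first mismatch in) the whole common fragment. Each of these cases costs $\Ocomp(|A|+|B|)=\Ocomp(\bound)$, after which we resume letter-by-letter testing just past the batched region.

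Finally I would bound the number of such groups. As in the proof of Lemma~\ref{lem:aligned is linear}, using Lemma~\ref{lem:no other test} one checks that the test immediately preceding a maximal group of periodical tests cannot be failed (or the periodical test would never be reached) and is not itself periodical (by maximality); hence it is protected, misaligned, or the tail of an aligned group, and in the last case the periodical group is charged, exactly as for aligned groups, to the protected or misaligned test that precedes that aligned group. Thus each protected or misaligned test is charged $\Ocomp(1)$ periodical groups, so by Lemmata~\ref{lem:protected is linear} and~\ref{lem:misaligned is linear} the total cost of periodical tests is $\Ocomp(n)$, and together with the preprocessing this gives the claimed $\Ocomp(n)$ bound. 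I expect the main obstacle to be the third case of the group computation: one must argue carefully that checking the short period $d$ on $\sol{AX}$ and $\sol{BX}$ together with $d$ letter comparisons is genuinely equivalent to equality of the two long, mutually offset periodic blocks, including the converse implication, which is exactly where the combinatorics-on-words input (the periodicity lemma) is needed.
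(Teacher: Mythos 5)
Your overall architecture — the run-length preprocessing, the three-way case split on $|A|$ versus $|B|$ and on the repetition counts, and the charging of each maximal group of periodical tests to the immediately preceding protected or misaligned test — matches the paper's proof. (The paper in fact shows the preceding test cannot be aligned at all, so your extra "tail of an aligned group" case is harmless but unnecessary.) However, there is a genuine gap in the third case, exactly at the point you flag as the "main obstacle": you verify that $\sol{AX}$ and $\sol{BX}$ have period $d=\bigl||A|-|B|\bigr|$ and compare $d$ letters, and claim this certifies equality of the whole common fragment. The converse implication fails for this choice of period. The argument needed for the converse is that periodicity of a single block $\sol{AX}$ propagates to the concatenation $\sol{(AX)^{k}}$, and this holds only when the period \emph{divides} the block length $|A|+|\sol X|$: if $u$ has period $d$ with $d\nmid|u|$, then $u^k$ need not have period $d$ (take $u=aba$, $d=2$; $uu=abaaba$ has no period $2$). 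Since $d=\bigl||A|-|B|\bigr|$ in general divides neither $|A|+|\sol X|$ nor $|B|+|\sol X|$, your check can accept two fragments that in fact disagree beyond the initially tested region, so the procedure would wrongly report a non-solution as a solution.

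The fix is the one the paper uses: take $g=\gcd\bigl(|A|+|\sol X|,\,|B|+|\sol X|\bigr)=\gcd\bigl(\bigl||A|-|B|\bigr|,\,|B|+|\sol X|\bigr)\le\bigl||A|-|B|\bigr|\le\bound$, which is computable in $\Ocomp(1)$ time. The forward direction goes through Fine--Wilf applied to $\sol{XA_iXA_i}$ (length $2|\sol X|+2|A|\ge(|A|+|\sol X|)+(|B|+|\sol X|)$), giving period $g$ of $\sol{AX}$ and $\sol{BX}$; for the converse, $g$ divides both block lengths, so each of $\sol{(AX)^{k_{\mathcal A}}}$ and the $B$-side fragment is a power of its length-$g$ prefix, has period $g$ globally, and agreement of the first $g$ positions then forces agreement everywhere. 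With $g$ in place of $d$ the rest of your argument, including the $\Ocomp(\bound)=\Ocomp(1)$ cost per group and the final accounting via Lemmata~\ref{lem:protected is linear} and~\ref{lem:misaligned is linear}, is sound and coincides with the paper's.
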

\begin{proof}
Similarly as in the case of aligned tests, see Lemma~\ref{lem: data structure aligned test},
we can easily keep the value $k$ and the link to $A_{i+k}$ such that $A_{i+k}$ is the last or almost last word in this equation,
the same applies for $B_{j+k}$.
Hence it is left to show how to calculate for each short $A_i$ (and $B_j$) the $k$ such that $A_{i+k}$ is the first word
that is different from $A_i$.

At the end of the phase we list all words $A_i$ that become short in this phase, ordered from the left to the right
(this is done anyway, when we identify the new short words).
Note that this takes at most the time proportional to the length of all long words from the beginning
of the phase, so $\Ocomp(n)$ in total.
Consider any $A_{i}$ on this list (the argument for $B_j$ is identical), note that
\begin{itemize}
	\item if $A_{i+1} \neq A_i$ then $A_i$ should store $k = 1$ and a pointer to this $A_{i+1}$;
	\item if $A_{i} = A_{i+1}$ then $A_{i+1}$ also became short in this phase and so it is on the list
	and consequently $A_i$ should store $1$ more than $A_{i+1}$ and the same pointer as $A_{i+1}$.
\end{itemize}
So we read the list from the right to the left, let $A_i$ be an element on this list.
Using the above condition, we can establish in constant time the value and pointer stored by $A_i$.
This operation is performed once per block, so in total takes $\Ocomp(n)$ time.

Consider a periodic test, without loss of generality suppose that a letter from $A_i$ is tested against a letter from $XB_j$
(in particular, $A_i$ begins earlier than $B_j$),
let the $k_A$ and $k_B$ be stored by $A_i$ and $B_j$;
as this is a periodical test, both $k_A$ and $k_B$ are greater than $1$.
Among $A_{i+k_A}$ and $B_{j+k_B}$ consider the one which begins earlier under substitution \solution:
this can be determined in $\Ocomp(1)$ by simply comparing the lengths, the length on the $A$-side of the equation is 
$k_A (|A_i| + |\sol X|)$ while $B$-side is $k_B (|B_j| + |\sol X|) + \ell$,
where $\ell$ is the remainder of \sol X that is compared with $A_i$.
Note that the test for the first letter of this word is not periodic,
so when we jump to it we skip the whole sequence of periodic tests.
We show that in $\Ocomp(1)$ time we can perform the tests for all letters before this word
and that the test right before the first test for $A_i$ is protected or misaligned.

Let $a = |A_i|$, $b = |B_j|$ and $x = |\sol{X}|$.
First consider the simpler case in which $a = b$.
Let $k = \min(k_A,k_B)$. Then the tests for $A_{i+1}$, \ldots, $A_{i + k - 1}$ are identical as for $A_i$,
and so it is enough to perform just the test for $A_i$ and $B_j$ and then jump right to $A_{i+k}$.

So let us now consider the case in which $a > b$.
Observe that when the whole $\sol{(B_jX)^\ell}$ is within $\sol{(A_iX)^3}$
then this can be tested in constant time in a naive way:
the length of $\sol{(A_iX)^3}$ is $3(a+x)$ while the length of $\sol{B_jX}^\ell$ is $\ell(b + x)$.
Hence $3(a+x) \geq \ell (b+x)$ and so $\ell \leq 3(a+x)/(b+x) \leq 3 \max(a/b,x/x) \leq 3 \bound$,
because $a/b$ is at most $\bound$.
Thus all tests for $\sol{(A_iX)^3}$ and $\sol{(B_jX)^\ell}$ can be done in $\Ocomp(\bound) = \Ocomp(1)$ time.

\begin{figure}
	\centering
		\includegraphics{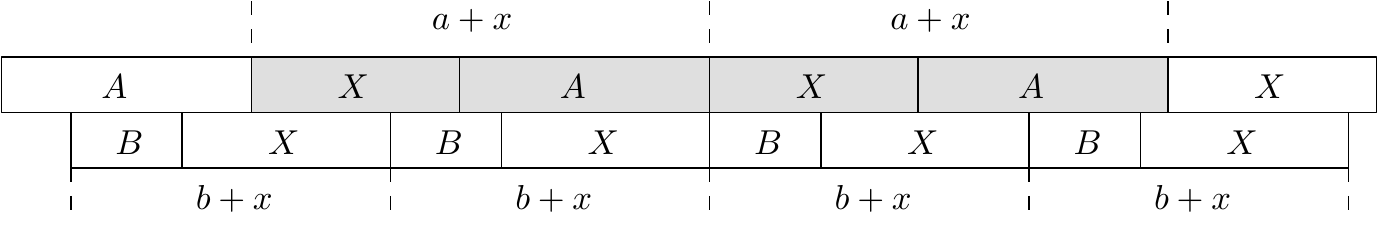}
	\caption{The case of $a > b$. The part of $\sol{(XA_i)^2}$ that has a period $a + x$ and $b + x$ is in grey.}
	\label{fig:periodical}
\end{figure}

So consider the remaining case, see Fig.~\ref{fig:periodical} for an illustration, when $k > 3$.
We claim that the tests for common part of $\sol{A_{i}X \cdots XA_{i+k-1}X}$ and $\sol{B_{j}X \cdots XB_{j+k-1}X}$
are successful if and only if
\begin{itemize}
	\item \sol{A_iX} and \sol{B_jX} have period $\gcd(a+x,b+x)$ \emph{and}
	\item the first $\gcd(a+x,b+x)$ tests for $\sol{A_{i}X \cdots XA_{i+k-1}X}$ and $\sol{B_{j}X \cdots XB_{j+k-1}X}$ are successful.
\end{itemize}
\textcircled{$\Rightarrow$}
First $\sol{XA_iXA_i}$ has period $x+a$.
However, it is covered with $\sol{(B_jX)^\ell}$, so it also has period $x+b$.
Since $x+a+x+b < 2x+2a$, it follows that also the $\gcd(x+a,x+b)$ is a period of $\sol{XA_iXA_i}$
and so also of \sol{A_iX} and thus also $\sol{B_jX}$.
The second item is obvious.

\noindent
\textcircled{$\Leftarrow$}
Since \sol{A_iX} and \sol{B_jX} have period $\gcd(a+x,b+x)$ also $\sol{A_{i}X \cdots XA_{i+k-1}X}$ and $\sol{B_{j}X \cdots XB_{j+k-1}X}$
have this period.
As the first $\gcd(a+x,b+x)$ tests for $\sol{A_{i}X \cdots XA_{i+k-1}X}$ and $\sol{B_{j}X \cdots XB_{j+k-1}X}$ are successful,
it follows that all the tests for their common part are.

So, to perform the test for the common part of $\sol{A_{i}X \cdots XA_{i+k_A-1}X}$ and $\sol{B_{j}X \cdots XB_{j+k_B-1}X}$
it is enough to:
calculate $p = \gcd(a+x,b+x)$,
test whether $\sol{A_iX}$, $\sol{B_jX}$ have period $p$ and then perform the first $p$ tests for
$\sol{A_{i}X \cdots XA_{i+k_A-1}X}$ and $\sol{B_{j}X \cdots XB_{j+k_B-1}X}$.
All of this can be done in $\Ocomp(1)$, since $p \leq a - b \leq \bound$
(note also that calculating $p$ can be done in $\Ocomp(1)$, as $\gcd(x+a,x+b)=\gcd(a-b,x+b)$ and $a-b \leq \bound$).

The case with $b > a$ is similar: in the special subcase we consider whether $\sol{(A_iX)^\ell}$ is within $\sol{X(B_jX)^3}$.
If so then the tests can be done in $\Ocomp(\bound)$ time.
If not, then we observe that the $\sol{XB_{j+1}XB_{j+2}}$ is covered by $\sol{(A_iX)^\ell}$.
So it the tests are successful, it has period both $x + b$ as well as $x + a$,
so it has period $\gcd(x+a,x+b)$. The rest of the argument is identical.

\begin{figure}
	\centering
		\includegraphics{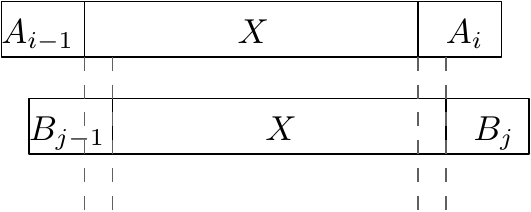}
	\caption{The test right before the first among the sequence of periodic tests. Since $A_i$ begins not later than $B_j$,
	$B_{j-1}$ ends not earlier than $A_{i-1}$.\label{fig:what_is_prev}}
\end{figure}

For the accounting, we would like to show that the test right before the first among the considered periodic tests is not periodic.
Observe, that as $A_i$ begins not later (under \solution) than $B_j$ it means that the last letter of $B_{j-1}$
is not earlier than the last letter of $A_{i-1}$, see Figure~\ref{fig:what_is_prev}.
So the previous test includes the last letter of $B_{j-1}$.
It is enough to show that this test is not failed, periodic, nor aligned.
\begin{description}
	\item[failed] If it is failed then also the test for the letters in $A_i$ are failed.
	\item[periodic] If it is periodic then this contradicts our choice that the test for the first letter of $A_i$
	is the first in the sequence periodic tests.
	\item[aligned] Since the first letter of $A_i$ is arranged against $XB_j$,
	in this case the last letter of $B_{j-1}$ needs to be arranged against the last letter of $A_{i-1}$.
	Then by the definition of the aligned test, $B_j = A_i$ and their first letters are at the same position.
	As by the assumption about the periodic tests we know that $A_{i+1} = A_i$ and $B_{j+1} = B_j$ we conclude that the test for the
	first letter of $A_i$ is in fact aligned, contradiction.
\end{description}
Hence, by Lemma~\ref{lem:no other test}, the test for the last letter of $B_{j-1}$ is either protected or misaligned.
Using the same accounting as in Lemma~\ref{lem:aligned is linear} we conclude that we spent at most $\Ocomp(n)$ time on all periodic tests.
\qedhere
\end{proof}

\subsubsection*{Proof of Lemma~\ref{lem: only letter tests}}
It is left to show that indeed we do not need to take into the account the time spent on comparing \sol X with \sol X on the other side of the equation.

\begin{proof}[proof of Lemma~\ref{lem: only letter tests}]
Recall that we only test solutions of the form $\sol X = a^k$.
Since we make the comparisons from left to the right in both \sol {\mathcal A_\ell} and \sol {\mathcal B_\ell}
then when we begin comparing letters from one \sol X with the other \sol X,
we in fact compare some suffix $a^\ell$ of $a^k$ with $a^k$.
Then we can skip those $a^\ell$ letters in $\Ocomp(1)$ time.
Consider the previous test, which needs to include at least one explicit letter.
Whatever type of test it was or whatever group of tests it was in, some operations were performed and this took $\Omega(1)$ time.
So we associate the cost of comparing \sol X with \sol X to the previous test, increasing the running time by at most a multiplicative constant.
\end{proof}

\subsection*{Open problems}
Is it possible to remove the usage of range minimum queries from the algorithm without increasing the running time?
Can the recompression approach be used to speed up the algorithms for the two variable word equations?
Can one use recompression approach also to better upper bound the number of solutions of an equation with a single variable?

\subsection*{Acknowledgements}
I would like to thank A.~Okhotin for his remarks about ingenuity of Plandowski's
result, which somehow stayed in my memory;
P.~Gawrychowski for initiating my interest in compressed membership problems
and compressed pattern matching, exploring which eventually led to this work
as well as for pointing to relevant literature~\cite{LohreySLP,MehlhornSU97};
J.~Karhum{\"a}ki, for his explicit question, whether the techniques of
local recompression can be applied to the word equations;
last not least, W.~Plandowski for his numerous comments and suggestions on the recompression applied to word equations.

\end{document}